\documentclass[aps,prx,reprint,superscriptaddress,twocolumns,longbibliography]{revtex4-2}
\usepackage{graphicx} 
\usepackage{amsmath}
\usepackage{amsfonts}
\usepackage{mathrsfs}
\usepackage{amsthm}
\usepackage{bm}
\usepackage{float}
\usepackage{xcolor}
\usepackage{hyperref}
\usepackage{soul}

\usepackage[ruled,vlined,linesnumbered]{algorithm2e}

\providecommand{\rmd}{\ensuremath{\mathrm{d}}}

\providecommand{\calL}{\ensuremath{\mathcal{L}}}
\providecommand{\calM}{\ensuremath{\mathcal{M}}}
\providecommand{\calN}{\ensuremath{\mathcal{N}}}
\providecommand{\calO}{\ensuremath{\mathcal{O}}}

\providecommand{\calT}{\ensuremath{\mathcal{T}}}
\providecommand{\calU}{\ensuremath{\mathcal{U}}}

\providecommand{\bbE}{\ensuremath{\mathbb{E}}}

\newcommand{\ket}[1]{\left\lvert #1 \right\rangle}
\newcommand{\bra}[1]{\left\langle #1 \right\rvert}
\newcommand{\braket}[2]{\left\langle #1 \middle| #2 \right\rangle}
\newcommand{\ketbra}[2]{\left\lvert #1 \right\rangle \! \left\langle #2 \right\rvert}
\newcommand{\ketbrat}[1]{\left\lvert #1 \right\rangle \! \left\langle #1 \right\rvert}
\newcommand{\norm}[1]{\left\lVert#1\right\rVert}
\newcommand{\abs}[1]{\left\lvert#1\right\rvert}
\newcommand{\set}[1]{\left\{ #1\right\}}

\DeclareMathOperator{\Tr}{Tr}
\newcommand{\trace}[2]{\Tr_{#1}\left( #2 \right)}
\DeclareMathOperator*{\E}{{\mathbb{E}}}
\newcommand{\Exp}[2]{\E_{#1}\left[ #2 \right]}

\newtheorem{theorem}{Theorem}
\newtheorem{definition}[theorem]{Definition}

\newtheorem{lemma}[theorem]{Lemma}

\newtheorem{corollary}[theorem]{Corollary}

\newcommand{\certified}{\textsc{Certified}}
\newcommand{\failed}{\textsc{Failed}}

\newcommand{\id}{\mathbb{I}}

\newcommand{\htheta}{\hat{\theta}}
\newcommand{\bbraket}[1]{\langle  #1 \rangle}
\newcommand{\opt}{{\rm opt}}

\newcommand{\tr}{{\mathrm{Tr}}}

\newcommand{\mH}{{\mathcal{H}}}

\newcommand{\ttrace}{{\rm Tr}}

\newcommand{\bR}{{\mathbb{R}}}

\newcommand{\bE}{{\mathbb{E}}}

\definecolor{gold}{rgb}{0.85, 0.65, 0.13}
\usepackage[dvipsnames]{xcolor}

\begin{document}

\title{Near-optimal quantum metrology with few-qubit measurements}

\author{Liang Mao}
\affiliation{Institute for Advanced Study, Tsinghua University, Beijing, China}

\author{Senrui Chen}
\affiliation{Institute for Quantum Information and Matter, California Institute of Technology, Pasadena, CA 91125, USA}

\author{Hsin-Yuan Huang}
\affiliation{Oratomic, Pasadena, California 91125, USA}
\affiliation{Institute for Quantum Information and Matter, California Institute of Technology, Pasadena, CA 91125, USA}

\author{John Preskill}
\affiliation{Institute for Quantum Information and Matter, California Institute of Technology, Pasadena, CA 91125, USA}
\affiliation{Oratomic, Pasadena, California 91125, USA}

\author{Sisi Zhou}
\affiliation{Perimeter Institute for Theoretical Physics, Waterloo, Ontario N2L 2Y5, Canada}
\affiliation{Department of Physics and Astronomy, Department of Applied Mathematics and Institute for Quantum Computing, University of Waterloo, Ontario N2L 2Y5, Canada}

\date{\today}

\begin{abstract}
    Quantum metrology, which addresses parameter estimation in quantum systems, has broad applications across 
    science and technology. Conventional metrology protocols for multi-qubit states in the multi-parameter regime typically require highly complex quantum measurements, leading to substantial quantum-resource costs. In this work, we introduce a family of metrology protocols that use only few-qubit measurements, thereby significantly reducing the required resources. For arbitrary pure states, one of our protocols 
    approaches the quantum Cram\'{e}r-Rao bound up to an overhead in sample complexity that scales linearly with the number of qubits, irrespective of the number of parameters to be estimated. For typical Haar-random states, this overhead can be reduced to a constant. 
   Our results build on recent advances in quantum state certification protocols with few-qubit measurements: we establish a universal connection between certification and metrology in which the precision of the certification protocol determines the metrological overhead.
    We also illustrate our approach through an example of Hamiltonian estimation from ground states. 
\end{abstract}
\maketitle


\emph{Introduction.---}
Quantum precision measurement is arguably one of the most 
consequential applications of quantum science and technology.
In situations ranging from optical interferometry~\cite{caves1981quantum,yurke19862,ligo2011gravitational,ligo2013enhanced} and frequency detection~\cite{wineland1992spin,bollinger1996optimal,leibfried2004toward,taylor2008high,rosenband2008frequency,appel2009mesoscopic,ludlow2015optical,zhou2020quantum,kaubruegger2021quantum,marciniak2022optimal} to quantum imaging~\cite{le2013optical,lemos2014quantum,tsang2016quantum,abobeih2019atomic}, the task can be reduced to estimating unknown parameters encoded in a quantum state. Quantum metrology provides a unified framework for this parameter-estimation problem~\cite{giovannetti2004quantum,giovannetti2006quantum,giovannetti2011advances,degen2017quantum,pezze2018quantum,pirandola2018advances}. In the setting where the parameters are approximately known and the goal is to refine them,
the quantum Cram\'{e}r-Rao bound (CRB)~\cite{rao1945information,cramer1999mathematical,lehmann2006theory,holevo2011probabilistic,helstrom1967minimum,helstrom1968minimum,helstrom1976quantum,braunstein1994statistical,barndorff2000fisher,paris2009quantum} characterizes the ultimate precision of any estimator through an information-theoretic quantity called the quantum Fisher information (QFI). In the single-parameter regime, the QFI is simply the classical Fisher information (CFI) maximized over all possible quantum
measurements on quantum states, and optimal measurements that attain the quantum CRB are well established~\cite{braunstein1994statistical}.

In the multi-parameter regime, matters become considerably more complicated. Since optimal measurements for different parameters do not necessarily coincide, the quantum CRB is generally not attainable. Substantial efforts have been made to attain the bound as closely as possible. It has been shown that by leveraging complex quantum operations, such as multi-copy measurements for mixed states~\cite{kahn2009local,yamagata2013quantum,yang2019attaining} and randomized measurements for pure states~\cite{
hayashi1998asymptotic,zhu2014quantum,zhu2018universally,hou2018deterministic,lu2025quantum,zhou2026randomized,du2026complexity}, one can approach the quantum CRB within a constant factor.
However, these operations may not be
 within the direct capabilities of specialized quantum sensing devices.
Even for programmable quantum platforms, these measurements are typically of \emph{high complexity}. Moreover, in practice errors grow rapidly with circuit depth, necessitating integration with fault-tolerant schemes~\cite{arrad2014increasing, kessler2014quantum, ozeri2013heisenberg, dur2014improved, demkowicz2017adaptive, zhou2018achieving, layden2018spatial, layden2019ancilla, chen2024quantum, mann2025quantum, antu2025stabilizer,liu2026subsystem,Pradenne2026restrictions}, whose feasibility remains an active research question.
These challenges motivate the central question of this work:
\begin{center}
    \emph{Can few-qubit measurements nearly attain the optimal metrological precision set by the quantum CRB?}
\end{center}

We answer this question in the affirmative by showing a universal connection between quantum state certification protocols and metrology protocols.
In quantum state certification
 the task is to verify whether the fidelity between an unknown state and a specified target state is sufficiently high~\cite{flammia2011direct,montanaro2013survey,pallister2018optimal,buadescu2019quantum,zhu2019optimal,kliesch2021theory,li2025universal}. Recent progress~\cite{huang2025certifying,gupta2025few,coladangelo2026power,du2025certifying,coladangelo2026robust} demonstrates that few-qubit measurements are capable of certifying quantum states when the target states are pure. The protocols used in these works share a common structure which we term \emph{conditional fidelity estimation} (CFE).
 We prove that all CFE-based certification protocols can be adapted into metrology protocols, 
 with performance characterized by the certification gap, which quantifies the resolution of quantum state certification. Thus, drawing on established results in certification, we provide a series of metrology protocols for parameterized pure states~\footnote{For generic pure states, unlike for pure states, it is impossible to attain the quantum CRB using only measurements on single copies of states, and the gap can scale polynomially with the system dimension. See Appendix.~\ref{app:review} for a detailed review.} that utilize only few-qubit (or even single-qubit) measurements. 
 
We provide three metrology protocols. The first protocol combines computational-basis measurement on most qubits with a random Pauli measurement on one randomly chosen qubit. Despite its extreme simplicity, we prove that this protocol achieves the quantum CRB up to a small polynomial factor in the number of qubits $n$ for many quantum states of interest: The factor is $\calO(n^2)$ for typical Haar-random states and $\calO(n)$ for more structured states such as gapped ground states. With access to classical adaptivity, the second protocol improves upon the first by using a more structured decision-tree (DT) product basis, and achieves a provable $\calO(n)$ factor over the quantum CRB for \emph{all} pure quantum states. 
Finally, when targeting only typical Haar-random states, we provide the third protocol that uses randomized Pauli measurement and attains the quantum CRB up to a \emph{constant} factor. 
We say these protocols are \emph{near-optimal}~\footnote{Unlike Ref.~\cite{zhou2026randomized} which defines the near-optimal metrology protocols to be protocols that attain the quantum CRB up to a constant factor, we slightly generalize the definition to include a polynomial factor (in the number of qubit), which is still exponentially small compared to the system dimension.}, meaning that they attain the quantum CRB up to polynomial factors, i.e. with polynomial overhead in sample complexity.


\begin{figure}[t!]
    \centering
    \includegraphics[width=0.9\linewidth]{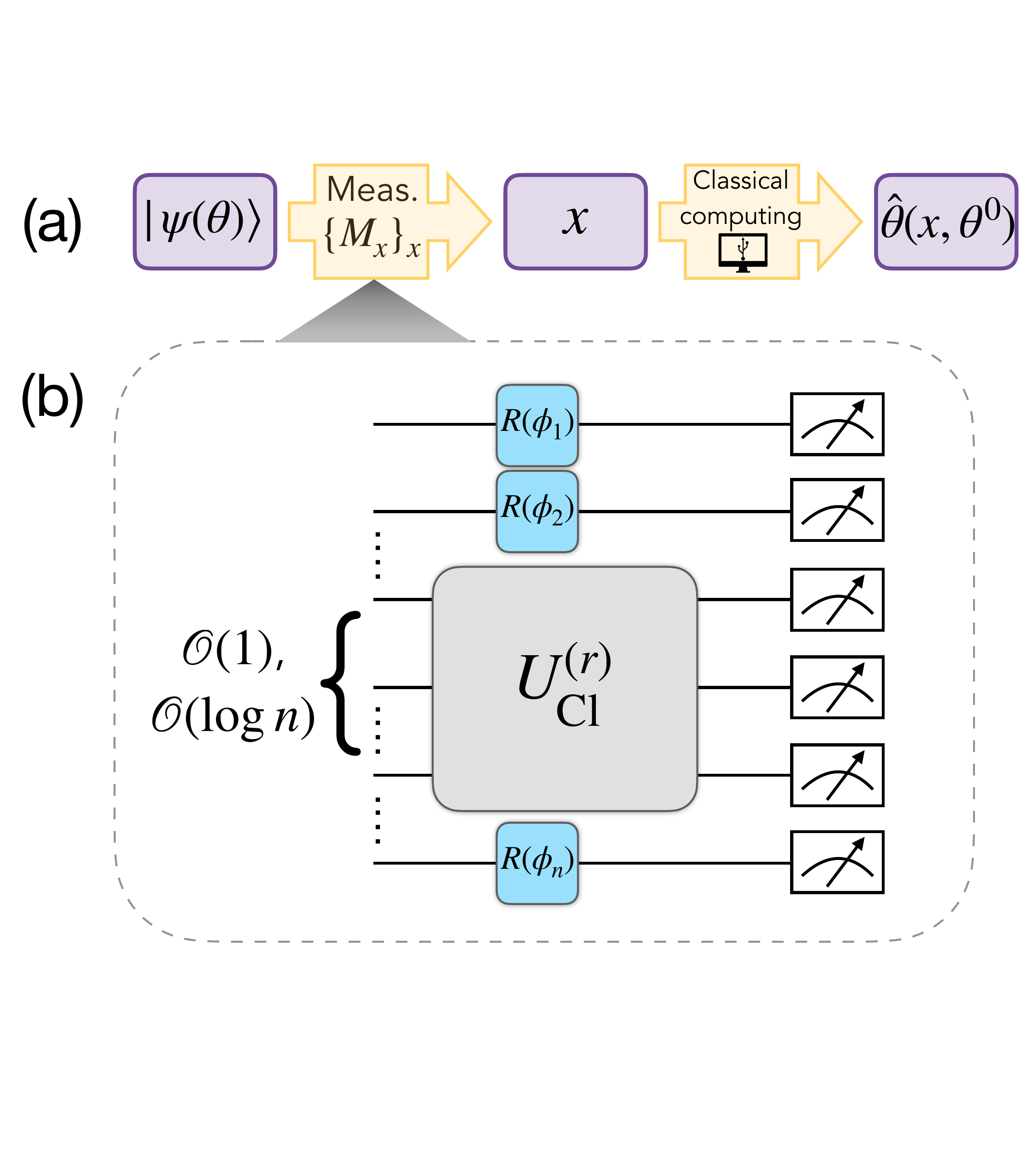}
    \caption{Schematic illustration of quantum metrology and our measurement protocol. (a) 
    In quantum metrology, a POVM $\set{M_x}_x$ is performed on the state $\ket{\psi(\theta)}$. Upon the outcome $x$, classical computer builds an estimator $\hat{\theta}$. In quantum metrology, one typically has access to a prior estimate $\theta^0$ that is close to the true value $\theta$. (b) In our protocol, the POVM acts locally on $n-r$ qubits and as a random Clifford measurement on $r$ qubits. Throughout the protocols introduced this work, $r$ is at most $\calO(\log n)$. In most cases, $r=1$ is sufficient.}
    \label{fig:illustration}
\end{figure}

We also illustrate our first protocol by applying it to estimating a Hamiltonian when given access to multiple copies of its ground state. By numerically simulating a leading-order maximum likelihood estimator (MLE)~\cite{fisher1922mathematical,van2000asymptotic} 
for a disordered transverse-field Ising model, we observe mean squared errors close to those given by the quantum CRB, confirming our theoretical prediction.

\emph{Quantum metrology.---}
Here we briefly review quantum metrology to set the context and notation for subsequent sections. Consider a $d$-dimensional parameterized quantum state $\rho(\theta)$ in Hilbert space $\mH$, where parameters are denoted by $\theta = (\theta_1,\theta_2,\ldots,\theta_m)$. Here $m$ is the number of parameters and $\Theta \subseteq \bR^m$ is the domain of $\theta$. An estimator $\htheta(x)$, given a corresponding positive operator-valued measurement (POVM) $\calM = \{M_x\}_{x}$, is a function that maps measurement outcomes $x$ to $\Theta$.  The performance of an estimator at $\theta$ is characterized by the \emph{mean squared error matrix} (MSEM, with MSE denoting the mean squared error)  
\begin{equation}
\label{eq:msem}
    V(\calM,\htheta)_{ij} = \sum_x (\htheta(x)_i - \theta_i)(\htheta(x)_j - \theta_j) \ttrace(\rho(\theta) M_x). 
\end{equation}
We focus on analyzing the performance of \emph{locally unbiased estimators} that are unbiased at the true value of $\theta$ and in its vicinity to first order. This is relevant in metrological situations where prior knowledge of $\theta$ is available in the vicinity of its true value~\cite{rao1973linear,kay1993fundamentals,lehmann2006theory,cox2017inference,casella2002statistical,gill2000state,yang2019attaining}. 
In classical statistics, the MSEM of any locally unbiased estimator at $\theta$ is bounded below by the inverse of the classical Fisher information (CFI) matrix, defined as
$
    I(\calM)_{ij} = \sum_{x, p_x(\theta):=\ttrace(\rho(\theta) M_x) \neq 0} \frac{1}{p_x(\theta)} \frac{\partial p_x(\theta)}{\partial \theta_i} \frac{\partial p_x(\theta)}{\partial \theta_j}. 
$
$I(\calM)$ is a function of the POVM and is independent of the estimator. We  assume throughout that $I(\calM)$ is strictly positive definite, meaning each parameter is identifiable.
The CRB~\cite{rao1973linear,kay1993fundamentals,lehmann2006theory,cox2017inference,casella2002statistical} states that for any locally unbiased estimator built from $\calM$,
$
    V(\calM,\htheta) \succeq I(\calM)^{-1},
$
where $A\succeq B$ means $A-B$ is positive semi-definite.
Here we assume $\htheta(x)$ uses only a single measurement result $x$. One can also perform $N$ i.i.d.\ measurements $\calM$ on multiple copies of the state and use all results to construct an estimator. In this case,
$
    V(\calM,\htheta(x_1,\ldots, x_N)) \succeq I(\calM)^{-1}/N.
$
The CRB is asymptotically attainable using MLE when $N \rightarrow \infty$.

The CRB above is a purely classical result for a fixed choice of POVM. The quantum CRB~\cite{helstrom1967minimum,helstrom1968minimum,helstrom1976quantum,holevo2011probabilistic,braunstein1994statistical,barndorff2000fisher,paris2009quantum} provides a more general lower bound when the POVM can be chosen freely. For any locally unbiased estimator of the target state $\rho$,
\begin{equation}
\label{eq:QCRB}
    V(\calM,\htheta,\rho(\theta)) \succeq I(\calM,\rho(\theta))^{-1} \succeq J^{-1}(\rho(\theta)),
\end{equation}
where $J$ is the quantum Fisher information (QFI) matrix as a function of $\rho(\theta)$. In this work, we focus on the case where $\rho$ is a pure state, for which the QFI matrix takes the explicit form
\begin{align}
    J_{ij} = 4\operatorname{Re}\left[
    \braket{\partial_i\psi}{\partial_j\psi}-\braket{\psi}{\partial_i\psi}\braket{\partial_j\psi}{\psi}
    \right].
\end{align}
There is a useful geometric interpretation of $J$ for pure states: $J$ is the metric of fidelity in parameter space,
\begin{multline}
    \abs{\bbraket{\psi(\theta)|\psi(\theta+\mathrm{d}\theta)}}^2
   =\\
   1-\frac{1}{4}\sum_{ij}J_{ij}(\ket{\psi(\theta)})\,\mathrm{d}\theta_i
    \mathrm{d\theta}_j + o(\norm{\rmd\theta}^2).\label{eq:geo-mean-qfi}
\end{multline}

For single-parameter estimation, i.e., when $J$ and $I(\calM)$ are scalars, $J$ is the maximum of $I$ over all POVMs~\cite{braunstein1994statistical}. Furthermore, single-qubit measurements with classical communication are known to be able to achieve $J$ for single-parameter pure states~\cite{zhou2020saturating}.  For multi-parameter estimation, the quantum CRB is not always attainable because measurements optimal for different parameters may not be compatible, i.e., cannot be implemented simultaneously. 
While impossible for generic mixed states, for pure states $I(\calM)$ can be close to $J$ up to constant overhead~\cite{
hayashi1998asymptotic,zhu2014quantum,lu2025quantum,zhou2026randomized,du2026complexity,li2016fisher}. However, no simple measurement protocols with \emph{few-qubit measurements} (see Fig.~\ref{fig:illustration}(b)) are known so far. See Appendix~\ref{app:review} for a more detailed review.

\emph{Relation to quantum state certification.---}
To develop few-qubit measurement protocols for metrology, we draw insights from quantum state certification.
The goal of state certification is to determine whether a state $\rho$ prepared in the lab is sufficiently close to the target state $\ket{\psi}$. Although projection onto $\ketbrat{\psi}$ can resolve this task, the implementation cost is typically high due to the complexity of $\ket{\psi}$. 
Recent results resolve this issue by providing alternative subroutines, demonstrating that few-qubit measurements suffice to certify quantum states~\cite{huang2025certifying,gupta2025few,coladangelo2026power,du2025certifying,coladangelo2026robust}.

In these protocols, we observe a common structure, which we term \emph{conditional fidelity estimation} (CFE). A CFE protocol first measures (possibly randomly selected) $n-r$ qubits of the lab state $\rho$ in one or more single-qubit bases 
with outcome $\ket{x}$, yielding a post-selected $r$-qubit lab state $\rho_x=\bra{x}\rho\ket{x}/\norm{\bra{x}\rho\ket{x}}$ corresponding to a post-selected target state $\ket{\psi_x}=\bra{x}\psi\rangle/\norm{\bra{x}\psi\rangle}$.
 Assuming the ability to query a description of $\ket{\psi_x}$ (e.g., query access to amplitudes of $\ket{\psi}$ in a product basis), 
 the protocol then estimates the conditional fidelity $\bra{\psi_x}\rho_x\ket{\psi_x}$ -- for example via classical shadows using random Clifford measurements, though any estimator of the conditional fidelity may be invoked. For our purposes, the crucial feature of CFE is that this averaged conditional fidelity provides a useful bound on the true fidelity:
 

 \begin{align}\label{main:eq:inf}
    \Exp{x}{1-\bra{\psi_x}\rho_x\ket{\psi_x}}\geq \frac{1-\bra{\psi}\rho\ket{\psi}}{\Delta},
\end{align}
where $\Delta>1$ is the \emph{certification gap}. For a predetermined precision $\epsilon$, this method can determine with high confidence whether the fidelity is greater than $1-\epsilon$ or less than $1-\Delta\epsilon$. See Appendix~\ref{app:certification} for details.

 Any certification protocol based on CFE can be transformed into a metrology protocol using the same POVM. In this work, we will usually assume the conditional fidelity is estimated using random Clifford measurements as illustrated in Fig.~\ref{fig:illustration}, because this approach has been well characterized in previous work~\cite{zhou2026randomized}.
Then the total implementation cost per experiment consists of single-qubit rotations on $n-r$ qubits and a random Clifford rotation on $r$ qubits, which requires $\calO(r^2)$ two-qubit Clifford gates to synthesize. Throughout this work, $r$ is 
at most $\calO(\log n)$, so the total implementation cost is 
low.

These measurement protocols have provable guarantees on the CFI. 
{For a pure lab state $\ket{\psi(\theta)}$ with post-selected state denoted by $\ket{\psi_{x}(\theta)}$}, expanding Eq.~\eqref{main:eq:inf} to quadratic order in $\theta$ and using Eq.~\eqref{eq:geo-mean-qfi}, we have
\begin{align}\label{eq:qfi-gap}
    \Exp{x}{J(\ket{\psi_x(\theta)})}\succeq \frac{J(\ket{\psi(\theta)})}{\Delta}.
\end{align}
(Note that the dependence on $\theta$ is sometimes suppressed in the following equations.)
For the few-qubit post-selected state $\ket{\psi_x}$, the Clifford measurement $\calU^{(r)}_{\mathrm{Cl}}$ is known to satisfy  $I^{-1}\big(\calU^{(r)}_{\mathrm{Cl}}\big)\preceq 4J^{-1}(\ket{\psi_x})$ for pure state metrology~\cite{zhou2026randomized}. Combining these results, we have
\begin{align}\label{eq:intuition}
    I(\calM,\ket{\psi})&=I\left(\set{p_x}_x\right)+\Exp{x}{I\big(\calU^{(r)}_{\mathrm{Cl}},\ket{\psi_x}\big)}\notag\\
    &\succeq \Exp{x}{I\big(\calU^{(r)}_{\mathrm{Cl}},\ket{\psi_x}\big)}\succeq  \frac{J(\ket{\psi})}{4\Delta},
\end{align}
where $p_x$ is the probability of measurement outcome $x$ on $n-r$ qubits~\cite{combes2014quantum}.
It demonstrates that we can transform CFE-based certification protocols into metrology protocols, with a provable performance guarantee.
\begin{lemma}[Informal]
    Given any CFE-based certification protocol that first measures $n-r$ qubits with measurement outcomes $x$ satisfying Eq.~\eqref{main:eq:inf}, 
    followed by random Clifford measurement
    on the remaining $r$ qubits, the resulting POVM $\calM$ satisfies
    \begin{align}\label{eq:lemma_4}
        I^{-1}(\calM,\ket{\psi})\preceq  4\Delta J^{-1}(\ket{\psi}),
    \end{align}
    where $\Delta$ is the certification gap.
\end{lemma}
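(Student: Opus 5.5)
The plan is to make the sketch in Eqs.~\eqref{eq:qfi-gap}--\eqref{eq:intuition} rigorous in three steps: a chain rule for the classical Fisher information of the two-stage POVM, a second-order expansion of the certification inequality~\eqref{main:eq:inf} that turns it into the QFI inequality~\eqref{eq:qfi-gap}, and the known few-qubit Clifford bound of Ref.~\cite{zhou2026randomized}. The final step is a matrix-inversion argument.

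\emph{Step 1 (chain rule).} Write the POVM $\calM$ as follows. The measurement on the $n-r$ qubits gives outcome $x$ with probability $p_x(\theta)=\norm{\braket{x}{\psi(\theta)}}^2$. It is followed by the random Clifford measurement on the post-selected state $\ket{\psi_x(\theta)}$, giving outcome $(U,y)$. The joint distribution factorizes as $p_x(\theta)q_{U,y|x}(\theta)$. Taking logarithmic derivatives and using that the conditional score has zero mean for each $x$, the cross terms vanish. This yields
\begin{align*}
I(\calM,\ket{\psi}) = I(\set{p_x}) + \sum_x p_x\, I\big(\calU^{(r)}_{\mathrm{Cl}},\ket{\psi_x}\big) \succeq \Exp{x}{I\big(\calU^{(r)}_{\mathrm{Cl}},\ket{\psi_x}\big)},
\end{align*}
since the first term is positive semidefinite. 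If the first-stage measurement is itself randomized, for example over the choice of qubits or bases, I will absorb the random choice into $x$; the same identity then holds.

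\emph{Step 2 (certification gap to QFI gap).} Fix $\theta$ and apply Eq.~\eqref{main:eq:inf} with target $\ket{\psi(\theta)}$ and lab state $\rho=\ketbrat{\psi(\theta+\rmd\theta)}$. Two points need care. First, the expectation over $x$ in~\eqref{main:eq:inf} must be taken with respect to the lab state's distribution $p_x(\theta+\rmd\theta)$. Second, the lab post-selected state is $\ket{\psi_x(\theta+\rmd\theta)}$. By Eq.~\eqref{eq:geo-mean-qfi}, the right-hand side is $\frac{1}{4\Delta}\rmd\theta^T J(\ket{\psi})\rmd\theta+o(\norm{\rmd\theta}^2)$. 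Similarly, the left-hand side is $\sum_x p_x(\theta+\rmd\theta)\,\frac14\rmd\theta^T J(\ket{\psi_x})\rmd\theta + o(\norm{\rmd\theta}^2)$.

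Each conditional infidelity is already $O(\norm{\rmd\theta}^2)$. Replacing $p_x(\theta+\rmd\theta)$ by $p_x(\theta)$ therefore only changes higher-order terms. Dividing by $\norm{\rmd\theta}^2$ and letting $\rmd\theta\to 0$ along every direction gives $\Exp{x}{J(\ket{\psi_x})}\succeq J(\ket{\psi})/\Delta$.

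I expect this limit exchange to be the main obstacle. The sum over $x$ is exponentially large, and some $p_x(\theta)$ may vanish, so that $\ket{\psi_x}$ is ill-defined. I would handle this in two ways. First, restrict the sum to outcomes with $p_x(\theta)>0$; for finite dimension there are finitely many terms, so uniform $o(\cdot)$ control is automatic. Second, for outcomes with $p_x(\theta)=0$, note that $p_x(\theta+\rmd\theta)=O(\norm{\rmd\theta}^2)$ and each conditional infidelity is at most 1. Their total contribution is then $O(\norm{\rmd\theta}^2)$, which is not obviously negligible. To deal with this, I would either assume as a regularity condition that the support of $p_x$ is locally constant, or treat such terms as contributing to $I(\set{p_x})$ rather than to the Clifford part. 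The second option is possible because those outcomes carry classical Fisher information of the same order, which suffices after strengthening Step 1 to keep the $I(\set{p_x})$ term.

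\emph{Step 3 (combine).} Invoke the result of Ref.~\cite{zhou2026randomized}, $I\big(\calU^{(r)}_{\mathrm{Cl}},\ket{\psi_x}\big)\succeq J(\ket{\psi_x})/4$ for each $r$-qubit pure state. Combining this with Steps 1 and 2 gives $I(\calM,\ket{\psi})\succeq J(\ket{\psi})/(4\Delta)$. Since $J$ is assumed positive definite and matrix inversion is operator-monotone decreasing on positive definite matrices, we conclude $I^{-1}(\calM,\ket{\psi})\preceq 4\Delta J^{-1}(\ket{\psi})$, which is Eq.~\eqref{eq:lemma_4}.
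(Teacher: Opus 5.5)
Your argument is correct once you commit to the first option in Step~2. It is essentially the heuristic of Eqs.~\eqref{eq:qfi-gap}--\eqref{eq:intuition} made rigorous, but the paper's formal proof takes a different route.

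\textbf{How the paper proceeds.} It does not use the chain rule, and it does not cite the Clifford bound of Ref.~\cite{zhou2026randomized}. Instead it writes down explicit locally unbiased estimators $\hat\theta_i=\theta^0_i+\Tr\bigl(X_i^{(x|\alpha)}\hat\rho(U,z)^{(x|\alpha)}\bigr)$. These are built from classical shadows of the post-measurement state and the weighted SLD operators $X_i^{(x|\alpha)}=\sum_j (J_M^{-1})_{ij}L_j^{(x|\alpha)}$, where $J_M=\mathbb{E}_x[J(\ket{\psi_x})]$. The paper computes their MSEM from Clifford 3-design moments as $V=4\frac{2^r+1}{2^r+2}J_M^{-1}$ and then uses $I^{-1}\preceq V$.

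\textbf{What each route buys.} The paper's route gives an explicit, classically computable estimator that attains the bound, and it makes the Clifford step self-contained. Your version is more modular and controls only the CFI, which is all the statement asks for.

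\textbf{Where the expansions differ.} The certification-to-QFI step is also done differently. The paper expands the soundness inequality in the \emph{target} parameter $\theta^0$ while holding the lab state fixed. The weights are then $p_x(\theta)$ of the true state, so zero-probability outcomes never enter. The price is that it needs one $\Delta$, and the same Kraus operators, for all targets near $\psi(\theta)$. You expand in the lab parameter instead. That needs soundness only for the single target at which the POVM is fixed, but then the support issue you flagged is real.

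\textbf{Discard your second option.} At the point $\theta$ the CFI sums only over outcomes with nonzero probability. Outcomes with $p_x(\theta)=0$ therefore contribute nothing to $I(\{p_x\})$ there; their Fisher information appears only in the limit $\theta'\to\theta$, which is the usual discontinuity of the CFI. Without a regularity assumption the bound actually fails:
\begin{itemize}
\item Take $\ket{\psi(\theta)}=\ket{0^{n-1}}\otimes(\cos\theta\ket{0}+\sin\theta\ket{1})$ with the POVM of Theorem~\ref{main:thm:1}.
\item At $\theta=0$ the target is $\ket{0^n}$, which gives $L=\ketbrat{0^n}$ in Lemma~\ref{lem:L}, so $\Delta=1$. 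Also $J=4$.
\item Only the choice $i=n$ measured in the $\sigma^x$ basis carries nonzero information, so $I(\calM)=4/(3n)$.
\item Hence $I^{-1}=3n/4>1=4\Delta J^{-1}$ for $n\geq 2$.
\end{itemize}
In your Step~2 the discrepancy is exactly the mass $\frac{n-1}{n}\rmd\theta^2$ carried by outcomes with zero target probability. The paper's target-side expansion is consistent with this example, because nearby targets have $\Delta=\tau=n$.

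So state the locally-constant-support hypothesis explicitly. With it, your Steps~1--3 give the lemma with the same constant.
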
\noindent
In Appendix~\ref{app:certification}, we prove the formal version, Lemma~\ref{lem:main}, by explicitly constructing a set of locally unbiased estimators whose MSEM is upper bounded by Eq.~\eqref{eq:lemma_4}, establishing an even stronger result. We note that to construct a locally unbiased estimator at $\theta^0$, we need query accesses to both $\ket{\psi_x(\theta^0)}$ and $\partial_{\theta_i}\!\ket{\psi_x(\theta)}\!|_{\theta = \theta^0}$.   

\emph{Protocols.---}
Applying this lemma to established protocols for state certification~\cite{huang2025certifying,gupta2025few,du2025certifying,coladangelo2026robust}, we can construct three protocols for metrology that use only few-qubit measurements.
The first is a non-adaptive protocol with single-qubit computational basis measurements, following from the state certification protocol in Ref.~\cite{huang2025certifying}. Here the
 first step of the CFE is simply computational-basis measurement on $n-r$ qubits.  
 We primarily focus on the $r=1$ case, where the random Clifford measurement on the remaining qubit is reduced to a random Pauli measurement.
\begin{theorem}[Non-adaptive protocol; informal]
\label{main:thm:1}
    Consider a POVM where we uniformly randomly choose a qubit $i$ from $n$ qubits, measure it in a random Pauli basis $\alpha\in\set{x,y,z}$ (each with probability $1/3$), and measure the remaining $n-1$ qubits in the computational basis.
    The CFI matrix satisfies
    \begin{equation}\label{main:eq:bound}
        I^{-1}(\calM,\ket{\psi(\theta)})\preceq 4\tau(\theta)
        J^{-1}(\ket{\psi(\theta)}),
    \end{equation}
    where $\tau(\theta)$ depends on target state $\ket{\psi(\theta)}$.
\end{theorem}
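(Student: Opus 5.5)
We obtain Theorem~\ref{main:thm:1} by specializing the informal Lemma (formal version Lemma~\ref{lem:main}) to the certification protocol of Ref.~\cite{huang2025certifying}, with $r=1$. In that case the random Clifford measurement on the single post-selected qubit reduces to a uniformly random Pauli measurement, since the one-qubit Clifford group acts on the Pauli eigenbases as a $3$-design. The guarantee $I^{-1}(\calU^{(1)}_{\mathrm{Cl}})\preceq 4J^{-1}(\ket{\psi_x})$ from Ref.~\cite{zhou2026randomized} therefore applies directly to the random-Pauli measurement.

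The remaining work is to identify the CFE structure and the gap $\Delta$ with $\tau(\theta)$. The outcome is $x=(i,b)$: $i$ is the uniformly chosen qubit and $b\in\{0,1\}^{n-1}$ is the computational-basis string on the other qubits, with $p_x=\frac1n\norm{\bra{b}\psi\rangle}^2$. The post-selected single-qubit target is $\ket{\psi_x}=\bra{b}\psi\rangle/\norm{\bra{b}\psi\rangle}$. Ref.~\cite{huang2025certifying} shows that the averaged conditional infidelity $\Exp{i,b}{1-\bra{\psi_x}\rho_x\ket{\psi_x}}$ dominates $(1-\bra{\psi}\rho\ket{\psi})/\tau$. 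There $\tau$ is the relaxation time of a Markov chain on $\{0,1\}^n$ that flips one random bit, with transitions weighted by $\abs{\braket{k}{\psi}}^2$. Equivalently, $1/\tau$ is the spectral gap of the associated Dirichlet form, which compares the conditional local variance with the global one. I would take $\Delta=\tau(\theta)$, the relaxation time of the chain built from $\ket{\psi(\theta)}$ itself.

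Next I would pass from the certification inequality to Eq.~\eqref{eq:qfi-gap}. Set $\rho=\ketbrat{\psi(\theta+\rmd\theta)}$ and target $\ket{\psi(\theta)}$, and apply Eq.~\eqref{main:eq:inf}. Expanding both sides to second order using Eq.~\eqref{eq:geo-mean-qfi} gives $\frac14\rmd\theta^T\Exp{x}{J(\ket{\psi_x})}\rmd\theta\ge \frac{1}{4\tau}\rmd\theta^T J\,\rmd\theta$.

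The main subtlety is in this expansion. The average $\Exp{x}{}$ is over $p_x(\theta+\rmd\theta)$, i.e.\ the lab state's distribution, while the post-selected states are normalized at different $\theta$. I would handle this in two parts. First, the difference between $p_x(\theta+\rmd\theta)$ and $p_x(\theta)$ enters only at higher order, because the conditional infidelity is already $O(\rmd\theta^2)$. Second, outcomes with $p_x(\theta)=0$ contribute nothing to leading order. Since $\rmd\theta$ is arbitrary, this yields the matrix inequality.

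Finally I would combine everything with the chain rule for CFI, as in Eq.~\eqref{eq:intuition}. This gives $I(\calM)\succeq\Exp{x}{I(\calU^{(1)},\ket{\psi_x})}\succeq\frac{J}{4\tau}$. Inverting, which reverses the order for positive definite matrices, gives Eq.~\eqref{main:eq:bound}.

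For the formal version I would invoke the explicit locally unbiased estimator of Lemma~\ref{lem:main}, rather than only the CFI bound. The step I expect to be hardest is rigorously establishing the certification-gap inequality with $\Delta=\tau$ from the Markov-chain argument of Ref.~\cite{huang2025certifying}, including the care needed for the relative phases captured by the Pauli $x,y$ measurements. If that cannot simply be cited, I would reprove it via a Poincaré inequality for the chain applied to the overlap function $k\mapsto \braket{k}{\phi}/\braket{k}{\psi}$.
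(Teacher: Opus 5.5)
Your architecture matches the paper's: Lemma~\ref{lem:main} with $r=1$, where the single-qubit Clifford measurement is a uniformly random Pauli measurement, plus the gap $\Delta=\tau$. Your Poincar\'e fallback on $f(k)=\braket{k}{\phi}/\braket{k}{\psi}$ is the similarity $L=FSPS^{-1}F^{-1}$ of Lemma~\ref{lem:property-L} rewritten as a Dirichlet-form inequality. Your handling of the weight shift for outcomes with $p_x(\theta)>0$ is also fine.

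\textbf{The gap.} The step that fails is the claim that outcomes with $p_x(\theta)=0$ contribute nothing to leading order. For such an outcome the target's conditional state is undefined, and soundness holds only if the outcome is charged as a failure. In Lemma~\ref{lem:L} this is automatic: the outcome's block is absent from $L$, so $1-\Tr(\rho L)$ counts it. (With the opposite convention, target $\ket{00}$ and lab $\ket{11}$ would pass with certainty.) Its contribution to the left-hand side is then $p_x(\theta+\rmd\theta)$, which is generically of order $\norm{\rmd\theta}^2$, the same order as every other term. What your expansion actually yields is therefore $\Exp{x}{J(\ket{\psi_x})}+G\succeq J/\tau$. Here $G\succeq 0$ is built from the components of the $\ket{\partial_j\psi}$ on those outcomes. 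The CFI discards zero-probability outcomes, so it never sees $G$, and the chain to $I\succeq J/(4\tau)$ breaks.

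\textbf{A counterexample.} This is not merely a technicality. Take $\ket{\psi(\theta)}=\ket{0}\otimes(\cos\theta\ket{0}+\sin\theta\ket{1})$ at $\theta=0$. There $\ket{\psi}=\ket{00}$, $\tau(0)=1$ and $J=4$. The failure probability is $\sin^2\rmd\theta$, matching the infidelity as $\tau(0)=1$ requires. Half of it comes from the zero-probability outcome in which qubit~1 gets the Pauli measurement and qubit~2 reads $1$. Meanwhile $\Exp{x}{J(\ket{\psi_x})}=2<4=J/\tau(0)$. A direct computation gives $I=2/3<1=J/(4\tau(0))$, even though $I>0$. So the stated bound with $\tau(\theta)$ is itself false at this point.

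\textbf{How the paper's route compares.} The paper keeps the lab weights $p_x(\theta)$ fixed and varies the target $\theta^0$, so it never meets these outcomes. Instead it applies soundness to the targets $\ket{\psi(\theta^0)}$ with the constant $\tau(\theta)$. Their gap is really $\tau(\theta^0)$, which in this example equals $2$ for every small $\theta^0\neq 0$. Done carefully, that route proves the bound with $\limsup_{\theta^0\to\theta}\tau(\theta^0)$ in place of $\tau(\theta)$.

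\textbf{Repair.} For your route the clean fix is a nondegeneracy hypothesis. One option is $\bra{b}\psi(\theta)\rangle\neq 0$ for every first-stage outcome $(i,b)$. A weaker option is that no $\ket{\partial_j\psi}$ has weight on outcomes where $\ket{\psi}$ has none. Under either, $G=0$ and your argument is complete. Because it only ever uses the gap of the fixed target $\ket{\psi(\theta)}$, it also needs no continuity of $\tau$, which the paper's version tacitly assumes.
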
 
\noindent
Here $\tau(\theta) = \Delta$ is the mixing time of
a Markov chain on the Boolean hypercube defined by the computational-basis amplitudes of $\ket{\psi(\theta)}$ (see Theorem~\ref{thm:metrology1} in Appendix~\ref{app:protocol1} for details). 

Because the variance scales inversely with the number of copies of the lab state, Eq.~\eqref{main:eq:bound} means that our protocol reaches a target precision using $4\tau(\theta)$ times as many copies as a protocol saturating the quantum CRB.

Note that
there are quantum states for which the Markov chain defined by this $r=1$ protocol mixes very slowly or not at all. In such cases, we may modify the protocol so that the random Clifford measurement acts on $r>1$ qubits. Then the Markov chain includes transitions flipping up to $r$ bits on the hypercube, which typically improves the mixing time. 
Previous works~\cite{huang2025certifying,bravyi2022simulate} have investigated $\tau(\theta)$ for various states of physical interest. For examples, most Haar-random states have $\tau=\calO(n^2)$ when $r=1$; the ground state of a gapped $\kappa$-local stoquastic Hamiltonian has $\tau\leq\calO(n^{\kappa+1})$ for $r = \kappa$; phase states $\frac{1}{\sqrt{2^n}}\sum_{x\in\set{0,1}^n}e^{i\phi_x}\ket{x}$ have $\tau = n$ for $r=1$. In these cases, $\tau$ is a low-order polynomial in $n$, establishing the efficiency of our protocol. 

The protocol can be improved by incorporating adaptive measurements~\cite{gupta2025few}. The core idea is to leverage the adaptive \emph{decision tree} (DT) basis. For any pair of single-qubit states $\rho_1$ and $\rho_2$, there always exists a basis $\set{\ket{e_1},\ket{e_2}}$ such that both states have equal measurement probabilities $1/2$, i.e.,
\begin{align*}
    \bbraket{e_1|\rho_1|e_1}=\bbraket{e_1|\rho_2|e_1}=
    \bbraket{e_2|\rho_1|e_2}=\bbraket{e_2|\rho_2|e_2}=\frac{1}{2}.
\end{align*}
This follows by choosing $\set{\ket{e_1},\ket{e_2}}$ perpendicular to the Bloch-ball vector  representations of $\rho_1$ and $\rho_2$. 
Applying this construction inductively to all qubits, for any two states $\ket{\psi_{1,2}}$ one can construct an adaptive product-state basis $\set{\ket{\ell}}_\ell$ such that $\abs{\bbraket{\ell|\psi_1}}^2=\abs{\bbraket{\ell|\psi_2}}^2=1/2^n$ for all $\ell$~\cite{walgate2000local,zhou2020saturating,gupta2025few}. 
Leveraging this fact, Ref.~\cite{gupta2025few} proposed a certification protocol that achieves $\Delta = n$ for 
any pure target state, which we can adapt into a metrology protocol.
\begin{theorem}[Adaptive decision-tree protocol; informal]
\label{thm:main:2}
    Consider a POVM where we first choose a qubit uniformly randomly, say the $i$-th qubit. 
    We then measure the first $i-1$ qubits in the computational basis with outcome $x$, and the $i$-th qubit in the random Pauli basis, and the
    last $n-i$ qubits in a DT basis $\calT_x$ such that both $\bra{x_{}\otimes 0}\psi(\theta)\rangle/\norm{\bra{x\otimes 0}\psi(\theta)\rangle}_2$ and $\bra{x\otimes 1}\psi(\theta)\rangle/\norm{\bra{x\otimes 1}\psi(\theta)\rangle}_2$ are phase states.
    The CFI matrix for any pure target state $\ket{\psi(\theta)}$ satisfies
    \begin{equation}\label{eq:thm:main:2}
        I^{-1}(\calM,\ket{\psi(\theta)})\preceq 4n
        J^{-1}(\ket{\psi(\theta)}).
    \end{equation}
\end{theorem}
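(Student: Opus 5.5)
The plan is to reduce Theorem~\ref{thm:main:2} to the informal Lemma (Eq.~\eqref{eq:lemma_4}) with $r=1$. That requires two things: that the described POVM has CFE form, and that its certification gap is $\Delta=n$ for every pure target $\ket{\psi(\theta)}$. For the CFE form, condition on the random choice of $i$ and on the outcomes $x$ (first $i-1$ qubits, computational basis) and $\ell$ (last $n-i$ qubits, adaptive DT basis $\calT_x$). The unmeasured $i$-th qubit is left in the post-selected pure state $\ket{\psi_{i,x,\ell}}\propto(\bra{x}\otimes\id\otimes\bra{\ell})\ket{\psi}$. The random Pauli measurement on it is exactly the $r=1$ random Clifford measurement, so it obeys $I^{-1}(\calU^{(1)}_{\mathrm{Cl}},\ket{\psi_{i,x,\ell}})\preceq 4J^{-1}(\ket{\psi_{i,x,\ell}})$. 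The chain rule in Eq.~\eqref{eq:intuition} then gives $I(\calM)\succeq \frac14\Exp{i,x,\ell}{J(\ket{\psi_{i,x,\ell}})}$. The one subtlety is that $\calT_x$ depends on $\theta$. I would fix the basis at the prior/true point $\theta$ where the CFI is evaluated, so the POVM is $\theta$-independent and the derivatives act only on the state.

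The main step is the gap inequality Eq.~\eqref{main:eq:inf} with $\Delta=n$. By the perturbative argument leading to Eq.~\eqref{eq:qfi-gap}, it suffices to prove, for arbitrary pure $\rho=\ketbrat{\phi}$,
\begin{align*}
\frac1n\sum_{i=1}^n\Exp{x,\ell}{1-\abs{\braket{\psi_{i,x,\ell}}{\phi_{i,x,\ell}}}^2}\geq \frac{1-\abs{\braket{\psi}{\phi}}^2}{n}.
\end{align*}
Here the expectation is over outcomes drawn from $\phi$. I would follow the telescoping argument of Ref.~\cite{gupta2025few}. Define hybrid quantities $F_i$: measure the first $i$ qubits in the computational basis and the last $n-i$ in the DT basis, and take the overlap of the resulting conditional states. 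The DT property matters here. Because both branches $\bra{x\otimes 0}\psi\rangle$ and $\bra{x\otimes1}\psi\rangle$ become phase states in $\calT_x$, each has flat amplitudes $2^{-(n-i)}$ in the basis. The outcome distribution on the tail is therefore uniform under $\psi$, and coherence between the two values of qubit $i$ is preserved in the $\ket{\ell}$ outcomes. The endpoints behave as follows. $F_0$ corresponds to a full DT product basis, where flatness gives a classical-fidelity expression that bounds the global infidelity $1-\abs{\braket{\psi}{\phi}}^2$. $F_n$ is a purely classical computational-basis comparison. Each increment $F_{i-1}-F_i$ should be controlled by the conditional infidelity on qubit $i$ at step $i$, via a Cauchy--Schwarz / triangle inequality on the single-qubit two-amplitude states. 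Summing the increments gives the total infidelity at most $\sum_i\Exp{}{\cdot}$, which after the uniform average over $i$ yields the factor $n$.

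The hardest part will be making the telescoping step rigorous. The measured probabilities come from $\phi$ while the DT basis is built from $\psi$, so flatness holds for only one of the two states. To handle this, I would first prove the inequality for fidelities written as $\sum\sqrt{p_\psi q_\phi}\cdot\abs{\text{conditional overlap}}$, bounding Bhattacharyya-type terms by the single-qubit overlaps. If that proves delicate, I would simply invoke the certification theorem of Ref.~\cite{gupta2025few} as a black box giving $\Delta=n$. Finally, with $\Delta=n$ established, Eq.~\eqref{eq:qfi-gap} gives $\Exp{}{J(\ket{\psi_{i,x,\ell}})}\succeq J/n$, and combining this with the first step yields $I\succeq J/(4n)$. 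Inverting (both matrices positive definite) gives Eq.~\eqref{eq:thm:main:2}.
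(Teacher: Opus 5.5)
Your reduction is the same as the paper's. Theorem~\ref{thm:metrology2} is Lemma~\ref{lem:main} combined with the certification gap $\Delta=n$ of Lemma~\ref{lem:ryan-protocol-performance}. Using the CFI chain rule with the single-qubit 3-design bound, instead of the paper's explicit local shadow estimators, is fine. Fixing $\calT_x$ at the evaluation point is the right way to handle its $\theta$-dependence.

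The gap is in your own sketch of $\Delta=n$, which would not go through as described. Your hybrids $F_i$ measure all $n$ qubits, so the only overlaps they produce are classical (Bhattacharyya-type) fidelities, and these cannot telescope into the global infidelity.
At the $F_0$ end, a classical fidelity in any basis is at least the quantum fidelity, so it only bounds $1-\abs{\braket{\psi}{\phi}}^2$ from below. In a basis where $\psi$ is flat it is also blind to phases. Every $\ket{\phi}$ that is flat in the same basis, including states orthogonal to $\ket{\psi}$, has classical fidelity $1$. So $F_0$ cannot upper-bound the global infidelity.
At the $F_n$ end, the computational-basis classical fidelity is generally not $1$, so the increments do not sum to the infidelity either.

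The missing idea is that the potential must keep the unmeasured tail quantum. Set $G_k=\sum_{x\in\{0,1\}^k}p_\phi(x)\abs{\braket{\psi_x}{\phi_x}}^2$, where the first $k$ qubits are measured in the computational basis and $\psi_x,\phi_x$ are the normalized conditional states of the remaining $n-k$ qubits. Then $G_0=\abs{\braket{\psi}{\phi}}^2$ and $G_n=1$. The DT basis enters only when bounding a single increment.

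For $x\in\{0,1\}^{k-1}$, write $\ket{\psi_x}=\ket{0}\ket{u^0}+\ket{1}\ket{u^1}$ and $\ket{\phi_x}=\ket{0}\ket{v^0}+\ket{1}\ket{v^1}$. Define $u^m_z=\braket{\ell^x_z}{u^m}$, $v^m_z=\braket{\ell^x_z}{v^m}$ and $\zeta_z=u^1_z/u^0_z$. A direct computation gives two exact expressions:
\begin{itemize}
\item the increment $\mathbb{E}_m\bra{\psi_{x^m}}\rho_{x^m}\ket{\psi_{x^m}}-\bra{\psi_x}\rho_x\ket{\psi_x}$ equals $\abs{\zeta\braket{u^0}{v^0}-\zeta^{-1}\braket{u^1}{v^1}}^2$;
\item the DT-conditioned single-qubit term $\sum_z p_{z|(k,x)}(1-\bra{\psi_{xz}}\rho_{xz}\ket{\psi_{xz}})$ equals $\frac{1}{1+\zeta^2}\sum_z\abs{\zeta_z v^0_z-v^1_z}^2$.
\end{itemize}
The only input needed is that $\abs{\zeta_z}=\zeta$ is independent of $z$. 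This holds because both branches of $\psi$ are phase states in $\calT_x$. It lets you rewrite the increment as $\zeta^{-2}\abs{\sum_z\overline{u^1_z}(\zeta_z v^0_z-v^1_z)}^2$. Cauchy--Schwarz with $\sum_z\abs{u^1_z}^2=\zeta^2/(1+\zeta^2)$ then gives exactly the single-qubit term.

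So the obstacle you flagged, that flatness holds only for $\psi$, is not an obstacle: nothing about $\phi$ is used. Averaging over $x$, summing over $k$, and using $\Pr(k)=1/n$ gives $\Delta=n$.

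Citing Ref.~\cite{gupta2025few} as a black box would also close the argument; the paper's own proof is an adaptation of Lemma~7 there. But then the substantive step is imported rather than proved, and your primary route as stated would fail.
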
\noindent
See Theorem~\ref{thm:metrology2} in Appendix~\ref{app:protocol2} for a more detailed discussion. 
This result can also be viewed as a generalization of Theorem~\ref{main:thm:1}, extending the case $\tau=n$ from phase states to arbitrary states.

The third metrology protocol improves the polynomial overhead to a constant overhead for generic Haar-random states using randomized Pauli measurement~\cite{du2025certifying,coladangelo2026robust}.
\begin{theorem}[Randomized Pauli measurement protocol; informal]
\label{thm:main:3}
    Consider a POVM where 
    we do randomized Pauli measurement on all the $n$ qubits.
    There exists a constant $C>0$ such that
    the CFI matrix for a Haar-random state $\ket{\psi(\theta)}$ satisfies
    \begin{equation}
        I^{-1}(\calM,\ket{\psi(\theta)})\preceq  C
        J^{-1}(\ket{\psi(\theta)}).
    \end{equation}
    with probability at least $1-e^{-\Omega(n)}$.
\end{theorem}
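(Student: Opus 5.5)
The plan is to obtain Theorem~\ref{thm:main:3} as an application of the informal Lemma (Eq.~\eqref{eq:lemma_4}), with the certification gap taken from the randomized-Pauli certification results~\cite{du2025certifying,coladangelo2026robust}. The first step is to cast randomized Pauli measurement as a CFE protocol. We pick a random subset $S$ of $r$ qubits, with $r=\calO(1)$ (or $r=\calO(\log n)$ if needed), and measure the other $n-r$ qubits in independently uniform random Pauli bases. This gives an outcome $x$, a product state $\ket{x}$, and post-selected states $\ket{\psi_x}$. On the $r$ qubits in $S$, the required random Clifford measurement is replaced by random Pauli measurement. The cleanest route is to treat all $n$ qubits as measured in random Pauli bases and to regard the choice of $S$ purely as an analysis device, i.e.\ as a convex decomposition of the same POVM. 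Because the CFI is convex under mixing of POVMs, with the label of $S$ and the bases kept as known classical side information, we get $I(\calM)\succeq \Exp{S}{I(\calM_S)}$. We can then apply the chain rule of Eq.~\eqref{eq:intuition} inside each $\calM_S$.

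The second step is to bound the few-qubit factor. For $r=1$ a random Pauli measurement is exactly a random Clifford measurement on one qubit, so the result of~\cite{zhou2026randomized} gives $I\succeq J(\ket{\psi_x})/4$ directly. For constant $r>1$, random Pauli measurements form a projective 2-design on each qubit only, not on $r$ qubits. In that case I would either fall back to $r=1$, or accept a constant loss $c_r$ by bounding the Pauli-measurement CFI of an $r$-qubit pure state via its local-measurement Fisher information. Either way the loss is an $n$-independent constant, since $r$ is constant.

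The third step, and the main obstacle, is the certification gap. We need $\Exp{x}{1-\bra{\psi_x}\rho_x\ket{\psi_x}}\ge (1-\bra{\psi}\rho\ket{\psi})/\Delta$ with $\Delta=\calO(1)$ for Haar-random $\ket{\psi}$ with probability $1-e^{-\Omega(n)}$. This holds uniformly over all lab states $\rho$, or at least over pure states in a neighborhood of $\ket{\psi(\theta)}$, which is all that Eq.~\eqref{eq:qfi-gap} requires. I would import this directly from~\cite{du2025certifying,coladangelo2026robust}, checking that their guarantee is stated in the CFE form of Eq.~\eqref{main:eq:inf} rather than only as a completeness/soundness statement. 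If the imported guarantee is only soundness-type, I would re-derive the inequality. The approach would be to write the infidelity as the expectation of a positive operator built from the randomized-Pauli Markov chain (a transition operator on product Pauli eigenstates). I would then show that its spectral gap relative to $\ketbrat{\psi}$ is $\Omega(1)$ for Haar-typical states, using concentration of the amplitudes $\abs{\braket{x}{\psi}}^2$ (Porter--Thomas statistics) together with a union bound or net argument. This is where the $e^{-\Omega(n)}$ failure probability arises.

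Finally, we expand the gap inequality to second order around $\theta$ using Eq.~\eqref{eq:geo-mean-qfi} to get $\Exp{x}{J(\ket{\psi_x})}\succeq J/\Delta$, as in Eq.~\eqref{eq:qfi-gap}. Combining this with the chain rule and the few-qubit bound gives $I(\calM)\succeq J/(4c_r\Delta)$. Inverting this operator inequality, which is valid since both sides are positive definite, yields the theorem with $C=4c_r\Delta$.
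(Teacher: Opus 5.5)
Your proposal is correct and follows the paper's route. View the full randomized Pauli measurement as a CFE protocol with $r=1$, where a random Pauli basis on the unmeasured qubit coincides with a single-qubit random Clifford. Import the constant certification gap for Haar-random targets from \cite{du2025certifying,coladangelo2026robust}, and apply the certification-to-metrology lemma to get $C=4\Delta$. The random subset $S$ and the $r>1$ branch are unnecessary: for any fixed $S$ the POVM is literally the same randomized Pauli measurement, so $I(\calM)=I(\calM_S)$ holds exactly. Appealing to convexity there is also misdirected, since discarding the label $S$ would push that inequality the other way.
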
\noindent
Note that randomized Pauli measurements are applied in both steps, thus we do not explicitly impose $r=1$ in the theorem description. See Theorem~\ref{thm:metrology4} in Appendix~\ref{app:haar1} for a more detailed discussion. There is an alternative protocol achieving a similar behavior that we describe in Appendix~\ref{app:haar2}, which combines a two-bases measurement on $n-r$ qubits with random Clifford measurement on $r$ qubits with $r = \Theta(\log(n))$~\cite{coladangelo2026power}.

\emph{Hamiltonian estimation example.---} 
We numerically demonstrate our protocol on the example of estimating Hamiltonian parameters from measurements on multiple copies of its gapped ground state. 
The Hamiltonian is an $n$-qubit transverse-field Ising model with longitudinal or transverse disorder,
\begin{align}\label{eq:ham}
    H = - \sum_{i=1}^{n-1}\sigma^z_i\sigma^z_{i+1}-g\sum_{i=1}^n \sigma^x_i+
    \begin{cases}
        \sum_{i=1}^n \theta^z_i\sigma^z_i; \\
        \sum_{i=1}^n \theta^x_i\sigma^x_i.
    \end{cases}
\end{align}
Throughout the simulation we fix $g=1.5$ to ensure a unique gapped ground state. The disorder strength per site is drawn from a Gaussian distribution with small variance, $\theta^z_i,\theta^x_i\overset{\text{i.i.d.}}{\sim}\calN(0,0.1)$. Given many copies of the gapped ground state $\ket{\psi(\theta^{x,z})}$, we use the POVM introduced in Theorem~\ref{main:thm:1} with $r=1$ to construct a metrology protocol for estimating $\theta^{x,z}$. Note that, because the Hamiltonian is stoquastic and 2-local, the theoretical bound yields $\tau\leq\calO(n^3)$ when $r=2$.

\begin{figure}[tb]
    \centering
    \includegraphics[width=1\linewidth]{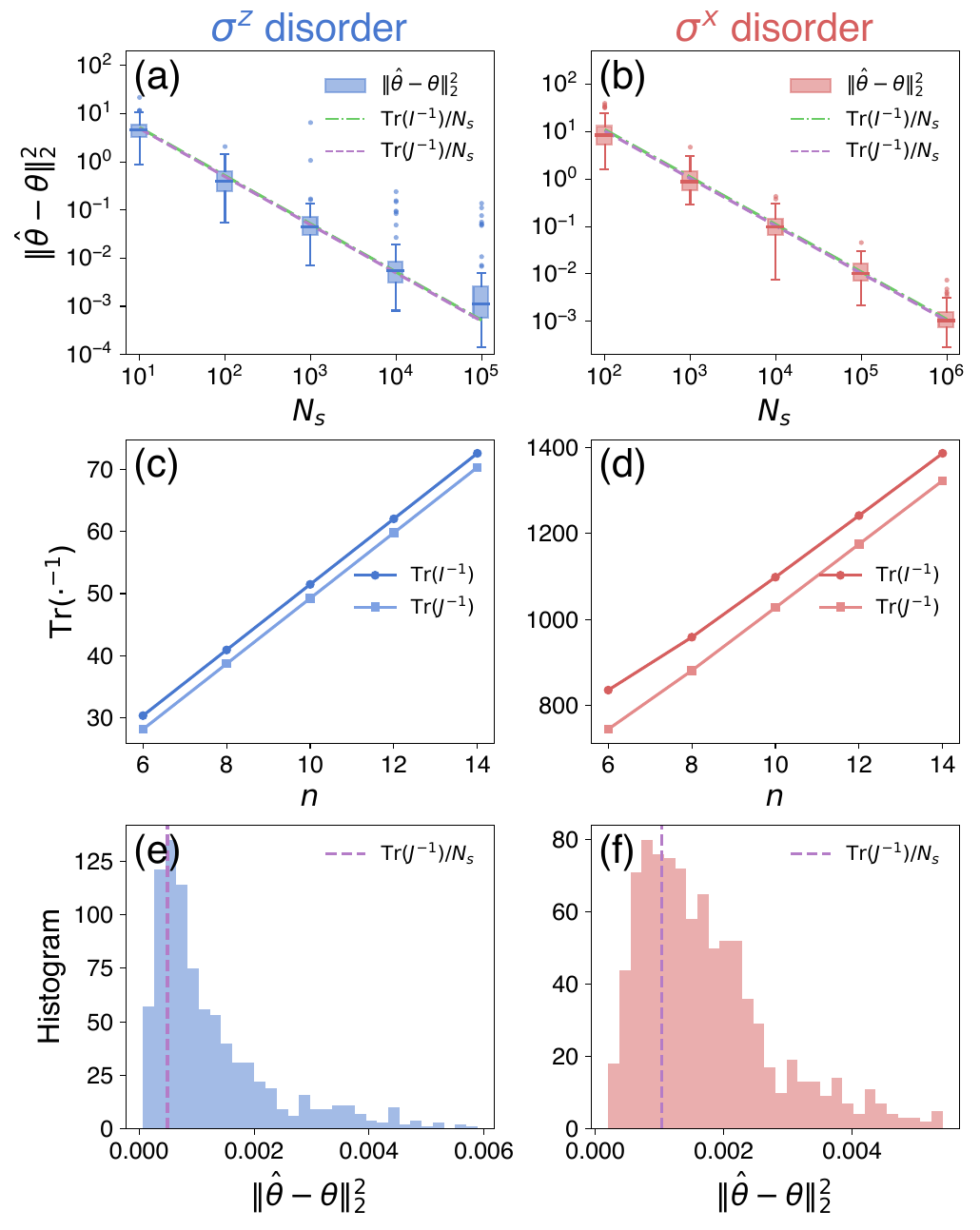}
    \caption{Parameter estimation for disordered transverse-field Ising model with $\sigma^z$ or $\sigma^x$ disorder, using the POVM from Theorem~\ref{main:thm:1} and the estimator from Eq.~\eqref{eq:local-mle}. (a)(b) 
    Total MSE versus measurement number $N_s$ for a typical random choice of $\theta^{z,x}$ at $n=10$. Box plots summarize the statistical distribution of the data over 100 independent experiments. The central line of the box represents the median, while the lower and upper edges indicate the first quartile and third quartile. Whiskers extend to the most extreme data points within 2 times the interquartile range from the quartiles. Individual points outside the whiskers are shown as outliers. 
    Numerical calculation shows $4\tau(\theta^0)\approx 113.8$ when $n = 10$, indicating $\tr(I^{-1})$ performs much better than its theoretical upper bound  $\tr(4\tau \cdot J^{-1})$ in this case.
    (c)(d) The scaling of traces of inverse CFI matrix and inverse QFI matrix versus system size $n$. Over this range of $n$, $10^2\lesssim 4\tau\lesssim 10^3$.
    (e)(f) Distribution of total MSE over 1000 random choices of $\theta^{x,z}$ when $n = 10$, for (e) $N_s=10^5$, and (f) $N_s=10^6$. We also exclude extreme data points outside the 2 times the interquartile range from the quartiles. Note that in all six subfigures, the CFI and QFI matrices are computed at $\theta^0$ as good approximations to their values at $\theta$.}
    \label{fig:numerics}
\end{figure}

To construct estimators from measurement outcomes, we employ the (leading-order) MLE near $\theta = \theta^0$, corresponding to the optimal locally unbiased estimator at $\theta = \theta^0$. Here $\theta^0$ is a known, prior estimate of $\theta$ which is close to the true value of $\theta$ and we take $\theta^0 = 0$ in this example. It corresponds to the situation where the disorder strengths are small. Given $N_s$ measurement outcomes $x_1,\cdots,x_{N_s}$, MLE minimizes the log-likelihood function $\calL(\theta) = -\sum_{\alpha=1}^{N_s}\log p_{x_\alpha}(\theta)$. Expanding it to first order around $\theta^0 = 0$, we have
{\small
\begin{align}
    &\;\;\partial_i\calL(\theta) = \sum_{\alpha=1}^{N_s} - \frac{\partial_i p_{x_\alpha}(\theta)}{p_{x_\alpha}(\theta)}\bigg|_{\theta=\theta^0} + \bigg( \frac{\partial_i p_{x_\alpha}(\theta) \partial_j p_{x_\alpha}(\theta)}{p_{x_\alpha}(\theta)^2}\nonumber\\ &-\frac{\partial_i  \partial_j p_{x_\alpha}(\theta)}{p_{x_\alpha}(\theta)} \bigg) \bigg|_{\theta=\theta^0} (\theta_j - \theta^0_j) + \calO(N_s \norm{\theta_j - \theta^0_j}^2) \\
    &\approx \sum_{\alpha=1}^{N_s} - \frac{\partial_i p_{x_\alpha}(\theta)}{p_{x_\alpha}(\theta)}\bigg|_{\theta=\theta^0}
    \!+ N_s \sum_{j=1}^m I_{ij}(\calM,\ket{\psi(\theta^0)})(\theta_j-\theta^0_j),\nonumber
\end{align}
}where we approximate the coefficients of the linear term by their expectation values at $\theta = \theta^0$ and $I(\calM,\ket{\psi(\theta^0)})$ is the CFI matrix. Setting each $\partial_i\calL(\theta)=0$ to minimize $\calL(\theta)$ and noting $\theta^0 = 0$ yield estimators
\begin{equation}
\label{eq:local-mle}
    \hat{\theta}_i(x_1,\ldots,x_{N_s}) := \frac{1}{N_s}\sum_{\alpha=1}^{N_s} \hat{\theta}_i(x_\alpha),
    \end{equation}
    where $\hat{\theta}_i(x_\alpha) := \sum_{j=1}^m (I^{-1}(\calM,\ket{\psi(\theta^0)}))_{ij}
    \frac{\partial_j p_{x_\alpha}(\theta)}{p_{x_\alpha}(\theta)}\big|_{\theta=\theta^0}$.
In our example, $p_x(\theta^0)=\norm{\bbraket{x|\psi(\theta^0)}}^2$, where $\ket{x}$ is the product state corresponding to measurement outcome $x$ and $\ket{\psi(\theta^0)}$ is the ground state of $H(\theta^0)$. 
Since $\theta^0 = 0$, $p_x(\theta^0)$, $\partial_ip_x(\theta^0)$, and $I_{ij}(\theta^0)$ are computed by applying the density matrix renormalization group (DMRG)~\cite{white1992density,white1993density,ostlund1995thermodynamic,schollwock2005density,perez2006matrix,verstraete2008matrix} method to the Hamiltonian (Eq.~\eqref{eq:ham}).

We illustrate the sample-dependent total MSE in estimating random $\sigma^z$ and $\sigma^x$ disorder strengths for $n=10$ in Fig.~\ref{fig:numerics}(a)(b). 
Here the sample-dependent total MSE is $\|\hat\theta(x_1,\ldots,x_{N_s})-\theta\|_2^2 = \frac{1}{N_s^2}\sum_{\alpha=1}^{N_s}\|\hat\theta(x_\alpha)-\theta\|_2^2$. $N_s$ times the sample-dependent total MSE is equal to the trace of the MSEM (Eq.~\eqref{eq:msem}), $\tr(V(\hat{\theta}(x)))$, with $p_x(\theta)$ replaced by its empirical distribution sampled from $N_s$ experiments, and it converges to $\tr(V(\hat{\theta}(x)))$ as $N_s$ approaches infinity.
From the plots, we observe the quantum CRB $\tr(V(\hat{\theta}(x_1,\ldots,x_{N_s}))\geq \tr(J^{-1})/N_s$ is nearly saturated by our estimators, indicating that the CFI matrix is within a small factor of the QFI matrix in this example. 
One can also numerically compute $\tau(\theta^0)$ in this case, yielding $4\tau(\theta^0)\approx 113.8$, meaning the actual performance is better than the theoretical bounds from Theorem~\ref{main:thm:1}.
We numerically simulate the CFI-QFI ratios  for various system sizes in Fig.~\ref{fig:numerics}(c)(d). Finally, in Fig.~\ref{fig:numerics}(e)(f) we  plot the error distribution over different instances of $\theta^{z,x}$ for fixed values of $N_s$. The distribution is strongly concentrated around the inverse-QFI prediction, confirming the universality of the behavior shown in Fig.~\ref{fig:numerics}(a)(b).
It can also be seen that the MSE for $\sigma^z$ disorder is significantly smaller than that for $\sigma^x$ disorder (for the same $N_s$). This is because in the paramagnetic phase, the ground state has a large overlap with $\ket{+}^{\otimes n}$ and adding $\sigma^x$ perturbations does not change it to leading order. Thus, less information about the perturbation strength can be obtained from ground states.

Note that Fig.~\ref{fig:numerics}(c)(d) also demonstrate a linear dependence of $\trace{}{I^{-1}}$ with system size, roughly yielding a sample complexity $N_s\sim n$ to achieve a fixed error. 
On the other hand, the state-of-the-art heuristic quantum algorithm requires constructing a full correlation matrix for local observables~\cite{qi2019determining}, which takes at least $\sim n$ copies when taking into account the sparsity~\cite{bairey2019learning}. Our results indicate this scaling could be optimal. Our protocol also achieves this scaling using simple single-qubit measurements.



\emph{Summary and outlook.---}We introduce a family of quantum metrology protocols that use only few-qubit measurements, substantially reducing resource requirements. Despite this restriction, we prove that they 
approach the optimal MSEM set by the quantum Cramér–Rao bound up to only low-polynomial overhead. We illustrate the approach through Hamiltonian estimation from ground states using single-qubit measurements, with numerical results suggesting that performance can exceed our theoretical guarantee in some instances.

We also establish a universal connection between quantum metrology and CFE-based quantum state certification, enabling us to derive these performance bounds. Further exploring this connection may deepen our understanding of both fields, inspire algorithms in both directions, and broaden their applications.


Our results leave several important open questions. First, our analysis is limited to pure states. Although recent work on random purification channels~\cite{chen2024local,tang2025conjugate,girardi2025random,walter2025random,pelecanos2025mixed,zhou2026quantum} provides a method to bridge mixed and pure states, the transformation can be costly. Whether generic mixed-state metrology can be performed efficiently using few-qubit measurements remains open.
Second, incorporating measurement noise or quantum error correction would bring the framework closer to practical implementation. Finally, our protocols operate in the high-precision regime, assuming prior, close estimates of $\theta$. Extension from local estimation to global learning remains largely unexplored~\cite{kwon2026universal,chen2026instance}. 

\medskip

\emph{Note.---}During the preparation of this manuscript, we became aware of Ref.~\cite{du2026complexity}, which proposes a 
measurement strategy for multi-parameter metrology with polylogarithmic circuit complexity. Our work uses a different approach with simpler few-qubit measurements. 

\emph{Acknowledgment.---} We are grateful to Zhenyu Du for sharing a recent result on Haar-random states certification~\cite{du2025certifying}. Part of the work was done when L.M. was visiting California Institute for Technology. S.C., H.-Y.H., J.P.  acknowledge funding provided by the Institute for Quantum Information and Matter, an NSF Physics Frontiers Center (NSF Grant PHY-2317110), and the U.S. Department of Energy, Office of Science, National Quantum Information Science Research Centers, Quantum Systems Accelerator.  S.Z. acknowledges funding provided by Perimeter Institute for Theoretical Physics, a research institute supported in part by the Government of Canada through the Department of Innovation, Science and Economic Development Canada and by the Province of Ontario through the Ministry of Colleges and Universities.

\bibliographystyle{unsrt}
\bibliography{ref}

\onecolumngrid

\newpage

\onecolumngrid
\appendix

\numberwithin{equation}{section}
\setcounter{theorem}{0}
\setcounter{figure}{0}
\renewcommand{\thefigure}{S\arabic{figure}}
\renewcommand{\thelemma}{S\arabic{lemma}}
\renewcommand{\thetheorem}{S\arabic{theorem}}
\renewcommand{\thecorollary}{S\arabic{corollary}}
\renewcommand{\theHfigure}{Supplement.\arabic{figure}}
\renewcommand{\theHtheorem}{Supplement.\arabic{theorem}}

\tableofcontents

\section{Preliminaries on quantum metrology}\label{app:review}

Consider a $d$-dimensional parameterized quantum state $\rho(\theta)$ in Hilbert space $\mH$, where parameters are denoted by $\theta = (\theta_1,\theta_2,\ldots,\theta_m)$. $m$ is the number of parameters and $\Theta \subseteq \bR^m$ is the domain of $\theta$. An estimator $\htheta(x)$ and the corresponding positive operator-valued measurement (POVM) $\calM = \{M_x\}_{x}$ is a function that maps the measurement outcomes $x$ to $\Theta$. Here $M_x$ are positive semidefinite measurement operators satisfying $\sum_x M_x = \id$.  The performance of an estimator at $\theta$ is characterized by the \emph{mean squared error matrix} (MSEM)
\begin{equation}
    V(\calM,\htheta)_{ij} = \sum_x (\htheta(x)_i - \theta_i)(\htheta(x)_j - \theta_j) \ttrace(\rho(\theta) M_x). 
\end{equation}
For single-parameter estimation (i.e., $m = 1$), we call $V$ the mean squared error (MSE). 

In the paradigm of local parameter estimation, we focus on analyzing the performance of \emph{locally unbiased estimators} that are unbiased at the true value of $\theta$ and in its vicinity up to first order. It is relevant in metrological situations where the prior knowledge of $\theta$ is already given in the vicinity of its true value, or roughly obtained from a pre-estimation phase (which usually consumes a number of experiments that are negligible compared to $N$, the total number of copies of states available)~\cite{rao1973linear,kay1993fundamentals,lehmann2006theory,cox2017inference,casella2002statistical,gill2000state,yang2019attaining}. Precisely, an estimator $(\htheta,\calM)$ is locally unbiased at $\theta = \theta^0$ if and only if 
\begin{gather}
\label{eq:l-u-condition-1}
    \theta^0 = \sum_{x} \htheta(x) \ttrace(\rho(\theta) M_x) \Big|_{\theta = \theta^0},\\
\label{eq:l-u-condition-2}
    \delta_{ij} = \frac{\partial}{\partial \theta_i}\sum_{x} \htheta_j(x) \ttrace(\rho(\theta) M_x) \Big|_{\theta = \theta^0},\;\forall i,j. 
\end{gather}  
In classical statistics, the MSEM of any locally unbiased estimator at $\theta$ is bounded below by the inverse of the CFI matrix, defined by 
\begin{equation}
    I(\calM)_{ij} = \sum_{x, p_x:=\ttrace(\rho(\theta) M_x) \neq 0} \frac{1}{p_x(\theta)} \frac{\partial p_x(\theta)}{\partial \theta_i} \frac{\partial p_x(\theta)}{\partial \theta_j}. 
\end{equation}
$I(\calM)$ is a function of the POVM and is independent of the estimators. $I(\calM)$ is a positive semidefinite matrix in $\bR^{m\times m}$ and we further assume in this work that $I(\calM)$ is strictly positive, which means each parameter is identifiable (unless stated otherwise). 
This forbids redundant parameterization and puts an upper bound on the number of allowed parameters.
Note that $I(\calM)$ depends on $\theta$. In below we use $I(\calM,\rho)$ (or $I(\calM,\ket{\psi})$ for pure state) when the target state, and hence $\theta$, is specified.
The Cram\'{e}r--Rao bound (CRB)~\cite{rao1973linear,kay1993fundamentals,lehmann2006theory,cox2017inference,casella2002statistical} states that for any locally unbiased estimator, 
\begin{equation}
\label{eq:CRB}
    V(\calM,\htheta) \succeq I(\calM)^{-1},
\end{equation}
where $A \succeq B$ denotes a relationship between two positive semidefinite matrices $A$ and $B$ where $A - B$ is positive semidefinite. The CRB at $\theta^0$ is saturable via the following locally unbiased estimator~\cite{rao1973linear,kay1993fundamentals,lehmann2006theory,cox2017inference,casella2002statistical} defined by 
\begin{equation}
\label{eq:opt}
    \htheta^{\opt}_i(x;\theta^0) = \theta_i^0 +   \sum_{i} I(M)^{-1}_{ij} \frac{\partial_j p_x}{p_x} \bigg|_{\theta = \theta^0},
\end{equation}
where $(~; \star)$ indicates dependence on $\star$ and $\partial_i$ is used as a shorthand for $\frac{\partial}{\partial \theta_i}$. $\bE[\hat{p}_x] = p_x = \ttrace(\rho(\theta) M_x)$ and $\bE[\hat{p}_x \hat{p}_y] =\delta_{xy} p_x$. It achieves the CRB, i.e., 
\begin{equation}
    V(\calM,\hat\theta^\opt) = I(\calM)^{-1}. 
\end{equation}
Given $N$ copies of $\rho(\theta)$, the CRB is asymptotically saturable for large $N$ using the (asymptotically unbiased) maximum likelihood estimator $\htheta^{(N)}_{\textsc{mle}}$~\cite{rao1973linear,kay1993fundamentals,lehmann2006theory,cox2017inference,casella2002statistical}---$\sqrt{N}(\hat{\theta}^{(N)}_{\textsc{MLE}} - \theta)$ converges in distribution to a normal distribution centered around $0$ with variance equal to $I(\calM)^{-1}$ as $N \rightarrow \infty$. Note that $I(\calM)$ here is additive---$I(\calM^{\otimes N}) = N I(\calM)$, where we use $I(\calM^{\otimes N})$ to denote the CFI matrix when measuring $\rho(\theta)^{\otimes N}$ using POVM $\calM^{\otimes N} = \{\bigotimes_{i=1}^N M_{x_i}\}_{(x_1,x_2,\ldots,x_N)}$. The factor of $N$ naturally occurs from the classical central limit theorem, where the estimation variance is improved by a factor of $N$ when $N$ i.i.d. samples are taken. 

The CRB above is a purely classical result. It provides a lower bound on $V(\calM,\htheta)$ that depends on the choice of POVM. The quantum Cram\'{e}r--Rao bound (QCRB)~\cite{helstrom1967minimum,helstrom1968minimum,helstrom1976quantum,holevo2011probabilistic,braunstein1994statistical,barndorff2000fisher,paris2009quantum} provides a more general lower bound when the POVM can be chosen freely. For any locally unbiased estimator for target state $\rho$, 
\begin{equation}
\label{eq:QCRB-app}
    V(\calM,\htheta) \succeq I(\calM,\rho)^{-1} \succeq J^{-1}(\rho),
\end{equation}
where $J$ is the QFI matrix as a function of $\rho(\theta)$. $J$ is also additive with respect to $\rho(\theta)$. It is defined by 
\begin{equation}
    J_{ij} = \frac{1}{2}\ttrace(\rho(\theta) \{L_i,L_j\}),
\end{equation}
where $\{\cdot,\cdot\}$ is the anti-commutator and $L_i$ are Hermitian operators called symmetric logarithmic derivative (SLD) operators, defined through 
\begin{equation}
    \frac{\partial \rho(\theta)}{\partial \theta_i} = \frac{1}{2}(L_i \rho(\theta) + \rho(\theta) L_i). 
\end{equation}
$L_i$ is not uniquely defined by the equation above, but $J$ is invariant under different choices of $L_i$. In this work, we focus on the cases where $\rho=\ketbrat{\psi}$ is a pure state. Under this condition,
\begin{align}
    L_i = 2(\partial_i\ketbrat{\psi}+\ket{\psi}\bra{\partial_i\psi}).
\end{align}
The QFI matrix can be calculated as
\begin{align}
    J_{ij} = \frac{1}{2}\trace{}{L_iL_j}=4\operatorname{Re}\left[
    \braket{\partial_i\psi}{\partial_j\psi}-\braket{\psi}{\partial_i\psi}\braket{\partial_j\psi}{\psi}
    \right]
\end{align}

For single-parameter estimation, i.e., when $m = 1$, $J$ and $I(\calM)$ are scalars, and $J$ is attainable~\cite{braunstein1994statistical}:
\begin{equation}
\label{eq:single}
    J = \max_{\calM:\mathrm{POVM}}I(\calM).
\end{equation}
Therefore, the QCRB (\eqref{eq:QCRB-app}) is asymptotically attainable, making the QFI $J$ a perfect figure of merit to characterize the performance of local parameter estimation on state $\rho(\theta)$. 
For multi-parameter estimation, however, the QFI matrix $J$ is not always attainable by maximizing $I(\calM)$ over $\calM$. The reason is the measurements that are optimal with respect to different parameters may not be compatible with each other, i.e., cannot be implemented simultaneously. 

The Gill--Massar (GM) bound~\cite{gill2000state} describes a relation between the QFI matrix and CFI matrix that applies to all individual measurements, i.e. measurements that act on each copy of quantum states individually. It states that for any $d$-dimensional quantum state $\rho(\theta)$ and any (individual) POVM $\tilde{\calM}$,
\begin{equation}
\label{eq:GM}
    \ttrace(J^{-1}I(\tilde{\calM})) \leq d-1. 
\end{equation}
It sets a lower bound on $\ttrace(J I(\tilde{\calM})^{-1})$ through Cauchy--Schwarz inequality, 
\begin{equation}
\label{eq:lower}
    \ttrace(J I(\tilde{\calM})^{-1}) \geq \frac{\ttrace(\id)^2}{\ttrace(J^{-1}I(\tilde{\calM}))} \geq \frac{m^2}{d-1}. 
\end{equation}
Given a parameterized pure state $\rho(\theta)$, it was known that there exists a rank-one measurement $\calM$ (with $2d-1$ measurement outcomes) such that the Fisher-symmetry condition is satisfied~\cite{li2016fisher}, i.e., 
\begin{equation}
\label{eq:f-s-pure}
    I(\calM) = \frac{1}{2} J.
\end{equation}
In fact, $1/2$ is the best possible constant for state tomography, i.e., when the number of parameters $m = 2(d-1)$. One can see this from the GM bound---if $I(\calM) = c J$ from some constant $c$, then $
\ttrace(J^{-1} I(\calM)) = c m = 2c(d-1) \leq d-1$. Fisher-symmetric measurements are defined to be measurements that achieves $I(\calM) \propto J$ with the optimal coefficient. 

The above measurement~\cite{li2016fisher} is a function of the state $\rho(\theta)$ and in practical implementation, must be adjusted adaptively based on the accumulated knowledge of the states~\cite{vargas2024near}. In addition, both the computational complexity and implementation complexity of this approach can be costly. 
To overcome the limitation on state-dependence, researchers studied universally Fisher-symmetric measurements that are independent of states. It was known the infinite-outcome Haar random measurements~\cite{hayashi1998asymptotic,zhu2014quantum,lu2025quantum} are universally Fisher-symmetric and can achieve \eqref{eq:f-s-pure} for all pure states universally. However, they are experimentally infeasible because they require infinite amount of classical randomness and exponential gate complexity. 
In fact, \cite{zhu2018universally} showed any measurement $\calM$ satisfying \eqref{eq:f-s-pure} for all pure states must contains infinite number of measurement outcomes (and thus are not feasible). They proposed the $2$-design collective measurements on two identical copies of the state as a workaround (as also experimentally demonstrated for the single-qubit case~\cite{hou2018deterministic}). They showed that 
\begin{equation}
    I(\calM,\rho(\theta)^{\otimes 2}) = J(\rho(\theta)^{\otimes 2}),
\end{equation}
when $\calM$ is a $2$-design measurement on $2$ copies of any pure state $\rho(\theta)$---which surpasses the GM bound because the measurement is no longer single-copy.  In practice, implementing a two-copy measurement requires simultaneous access to and precise cross-device control of two identical quantum sensors. A more experimentally accessible single-copy (individual) measurement protocol was proposed recently~\cite{zhou2026randomized} which applies 3-design measurements on individual copies of pure states. Although not Fisher-symmetric, the measurement protocol can achieve the optimal QFI matrix up to a constant factor. In fact, when $\calM$ is 3-design, it was shown 
\begin{equation}
    I(\calM) \succeq \frac{d+2}{4(d+1)} J .
\end{equation}
Meanwhile, locally unbiased estimators called local shadow estimators were constructed in~\cite{zhou2026randomized} to achieve the right-hand side of the inequality above, where the estimators are evaluated as the expectation values of a set of observables---called deviation observables---on the pure state, estimated using classical shadows~\cite{huang2020predicting}. Still, 3-design measurements are entangled measurements between qubits and take polynomial-size quantum circuits to implement. The experimental challenge still exists in near-term physical platforms.

\section{The conditional fidelity estimation protocols}\label{app:certification}

\subsection{The conditional fidelity estimation protocols for certification}
Another central task in quantum information science is \emph{quantum state certification}, where the goal is to test whether a physical state $\rho$ prepared in the lab is close to a specific target state $\ket{\psi}$. Such a fundamental task has applications from quantum device benchmarking to verification of quantum algorithms.

More precisely, the setting of certification problem can be described as below: Given many copies of an unknown lab state $\rho$ and a classical description of the target state $\ket{\psi}$, the goal is to determine whether the fidelity $f(\rho,\ket{\psi}):=\bra{\psi}\rho\ket{\psi}$ is greater than $1-\epsilon$ or not, for a given accuracy parameter $\epsilon$. A trivial protocol to this problem is to measure a POVM $\calM=\set{\ketbrat{\psi},I-\ketbrat{\psi}}$. However, this POVM demand one to implement the state preparation circuit of $\ket{\psi}$, which is typically of high complexity, if possible. On the other hand, one may use classical-shadow-based protocols to measure fidelity, which however still demand accurate implementations of a large number of multi-qubit gates, exceeding the capability of current quantum devices

Recently, an alternative family of protocols have emerged and been proved to be efficient. Those protocols ultilize only single or few qubit measurements, i.e., implementing unitaries on few concatenated qubits in parallel before measuring the computational basis.  When the unitaries are single-qubit ones, the protocol only requires single-qubit (possibly randomized) measurements. The price of using those very low-complexity operations is that, a \emph{certification gap} $\Delta>1$ exists in this family of protocols. That is, for a given accuracy parameter $\epsilon$, one is only able to correctly output $\certified$ or $\failed$ (with high probability) when $f(\rho,\ket{\psi})>1-\epsilon$ or $f(\rho,\ket{\psi})<1-\Delta \epsilon$. 

\begin{algorithm}[t!]
\caption{Conditional fidelity estimation protocol for certification}
\label{alg:cfe-certification}
\KwIn{A lab state $\rho$, classical description of a target state $\ket{\psi}$.}
\KwOut{$\certified$ or $\failed$}
     Randomly select a POVM $\calM_\alpha\in\calM$ according to probability $p_\alpha$.\\
     Measure $\rho$ to get a measurement outcome $x|i$ and a post-measurement state on $r$ qubits, 
     \begin{equation*}
         \rho_{x|\alpha}=\frac{K^{(\alpha)}_x\rho K^{(\alpha)\dagger}_x}{\trace{}{K^{(\alpha)}_x\rho K^{(\alpha)\dagger}_x}}.
     \end{equation*}
     \\
     Classically compute the post-measurement state of $\ket{\psi}$ after the same measurement,
     \begin{equation*}
         \ket{\psi_{x|\alpha}}=\frac{K^{(\alpha)}_x\ket{\psi}}{\sqrt{\trace{}{K^{(\alpha)}_x\ketbrat{\psi} K^{(\alpha)\dagger}_x}}}.
     \end{equation*}\\
     Measure $\rho_{x|\alpha}$ with the POVM
     \begin{equation*}
         \set{\ketbrat{\psi_{x|\alpha}},I-\ketbrat{\psi_{x|\alpha}}}.
     \end{equation*}
     If the outcome is $\ketbrat{\psi_{x|\alpha}}$, output $\certified$, otherwise output $\failed$.
\end{algorithm}

A nice feature of this family of protocols, which enables us to build connections to metrology, is that they are based on fidelity estimation of some small subsystem conditioned on the measurement outcomes of the rest. For this reason, we call those protcols the conditional fidelity estimation (CFE) protocols. We summarize the essential features below. 
\begin{definition}
    [Conditional fidelity estimation protocols for certification]
    \label{def:cfe-certification}
    A conditional fidelity estimation protocol for certifying an $n$-qubit state is specified by a tuple $\{ \mathcal{M}, r, \Delta \}$,
where:
\begin{itemize}
    \item $\mathcal{M} = \{ \mathcal{M}_\alpha, p_\alpha \}_\alpha$ is a family of randomized measurements indexed by a classical parameter $\alpha$, sampled according to probability $p_\alpha$;
    \item $r \in \mathbb{N}^+$ is the number of qubits left unmeasured (the post-measurement subsystem);
    \item $\Delta > 1$ is the certification gap.
\end{itemize}

Each measurement $\mathcal{M}_\alpha$ is specified by:
\begin{itemize}
    \item a subset $S_\alpha \subseteq [n]$ of size $n - r$ indicating the qubits to be measured;
    \item a POVM  described by Kraus operators $\set{K_{x}^{(\alpha)}}_x$ acting on the qubits in $S_\alpha$.
\end{itemize}
The corresponding Kraus operators take the form
$K_{x}^{(\alpha)} = \widetilde{K}_{\alpha,x}^{(S_\alpha)} \otimes \id_{S_\alpha^c}$,
where $\widetilde{K}_{\alpha,x}^{(S_\alpha)}$ acts on the measured qubits $S_\alpha$, and $\id_{S_\alpha^c}$ is the identity operator on the remaining $r$ qubits.
The index $\alpha$ specifies the choice of measured subset $S_\alpha$ and the measurement basis, while $x$ denotes the classical measurement outcome according to Born's rule.

    The procedure of the protocol is described by Algorithm.~\ref{alg:cfe-certification}. The algorithm satisfies the following requirements:
    \begin{itemize}
        \item Completeness. --- The probability to output $\certified$ is at least $\bra{\psi}\rho\ket{\psi}$.
        \item Soundness. --- The probability to output $\failed$ is at least 
        $(1-\bra{\psi}\rho\ket{\psi})/\Delta$.
    \end{itemize}
\end{definition}

We note that, all three parameters of a CFE protocol, $\calM$, $r$ and $\Delta$, may depend on target state $\ket{\psi}$. Here we slightly abuse the notation to omit the $\ket{\psi}$ dependence, and only mention it when necessary.

\begin{algorithm}[t!]
\caption{Multi-copy conditional fidelity estimation protocol for certification}
\label{alg:multi-cfe-certification}
\KwIn{$k$ copies of lab state $\rho$, classical description of a target state $\ket{\psi}$, accuracy parameter $\epsilon\in(0,1)$ and failure probability $\delta\in(0,1)$.}
\KwOut{$\certified$ or $\failed$}
\For{ $1\leq t\leq k$ copies of $\rho$}{
     Randomly select a POVM $\calM_\alpha\in\calM$ according to the probability $p_\alpha$.\\
     Measure $\rho$ to get a measurement outcome $x|i$ and a post-measurement state on $r$ qubits, 
     \begin{equation*}
         \rho_{x|\alpha}=\frac{K^{(\alpha)}_x\rho K^{(\alpha)\dagger}_x}{\trace{}{K^{(\alpha)}_x\rho K^{(\alpha)\dagger}_x}}.
     \end{equation*}
     \\
     Apply a $r$-qubit random Clifford circuit $U$ to $\rho_{x|\alpha}$ and then measure computational basis. The measurement outcome is $z\in\set{0,1}^r$.
     \\
     Classically compute the post-measurement state of $\ket{\psi}$ after the same measurement,
     \begin{equation*}
         \ket{\psi_{x|\alpha}}=\frac{K^{(\alpha)}_x\ket{\psi}}{\sqrt{\trace{}{K^{(\alpha)}_x\ketbrat{\psi} K^{(\alpha)\dagger}_x}}}.
     \end{equation*}\\
     Compute the classical shadow estimator for fidelity $\bra{\psi_{x|\alpha}}\rho_{x|\alpha}\ket{\psi_{x|\alpha}}$:
     \begin{equation*}
         f_t = (2^r+1)\abs{\bra{z}U\ket{\psi_{x|\alpha}}}^2-1.
     \end{equation*}
     }
     Compute the empirical value 
     \begin{equation*}
         f = \frac{1}{k}\sum_{t=1}^kf_t.
     \end{equation*}
     If $f_t>1-3\epsilon/(4\Delta)$, output $\certified$, otherwise output $\failed$.
\end{algorithm}

 Algorithm.~\ref{alg:cfe-certification} gives the single-copy version of the certification protocol. Using standard Chernoff bound and classical shadow tomography, it is easy to extend it to a multi-copy version with random Clifford measurements in the second stage of the measurement, which is more common in the literature. Here we provide a proof for the sake of completeness.
\begin{lemma}
    [Multi-copy conditional fidelity estimation protocols for certification]
    \label{lem:multi-cfe-certification}
    The multi-copy version of certification protocols based on conditional fidelity estimation, Algorithm.~\ref{alg:multi-cfe-certification}, satisfies:
    \begin{itemize}
        \item  If $\bra{\psi}\rho\ket{\psi}>1-\epsilon/(2\Delta)$, output $\certified$ with probability at least $1-\delta$;
        \item  If $\bra{\psi}\rho\ket{\psi}<1-\epsilon$, output $\failed$ with probability at least $1-\delta$,
    \end{itemize}
    using number of copies
    \begin{equation}
        k = \calO\left(2^r\frac{\Delta^2}{\epsilon^2}\log\frac{1}{\delta}\right).
    \end{equation}
\end{lemma}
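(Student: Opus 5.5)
The plan is to reduce the multi-copy statement to a concentration bound for i.i.d.\ bounded random variables whose common mean is the single-copy success probability of Algorithm~\ref{alg:cfe-certification}. Write $F:=\bra{\psi}\rho\ket{\psi}$ for the true fidelity. Write $F_c:=\Exp{\alpha,x}{\bra{\psi_{x|\alpha}}\rho_{x|\alpha}\ket{\psi_{x|\alpha}}}$ for the averaged conditional fidelity, where $\alpha\sim p_\alpha$ and $x$ follows Born's rule. This $F_c$ is exactly the probability that Algorithm~\ref{alg:cfe-certification} outputs $\certified$. Definition~\ref{def:cfe-certification} therefore gives $F_c\ge F$ (completeness) and $1-F_c\ge (1-F)/\Delta$ (soundness).

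\emph{Step 1 (unbiasedness).} Conditioned on $(\alpha,x)$, the random Clifford group is a unitary $2$-design. The standard classical-shadow identity $\Exp{U,z}{(2^r+1)U^\dagger\ketbrat{z}U-\id}=\rho_{x|\alpha}$ then gives $\Exp{U,z}{f_t\,|\,\alpha,x}=\bra{\psi_{x|\alpha}}\rho_{x|\alpha}\ket{\psi_{x|\alpha}}$. Averaging over $(\alpha,x)$ yields $\E[f_t]=F_c$, and the $f_t$ are i.i.d.\ across copies.

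\emph{Step 2 (the two cases and the gap).} If $F>1-\epsilon/(2\Delta)$, then completeness gives $F_c>1-\epsilon/(2\Delta)$. If $F<1-\epsilon$, then soundness gives $F_c<1-\epsilon/\Delta$. The threshold $1-3\epsilon/(4\Delta)$ lies at distance at least $\epsilon/(4\Delta)$ from $F_c$ in either case. It therefore suffices to show that $\abs{f-F_c}<\epsilon/(4\Delta)$ with probability at least $1-\delta$.

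\emph{Step 3 (concentration; the main point).} A naive Hoeffding bound uses only the range $f_t\in[-1,2^r]$ and would cost $4^r\Delta^2/\epsilon^2$ copies. To obtain the stated $2^r$ scaling I would instead use Bernstein's inequality, which needs a variance bound. Since the Clifford group is a $3$-design, the conditional second moment of the shadow estimator for the rank-one observable $\ketbrat{\psi_{x|\alpha}}$ is controlled by its shadow norm, which is at most $3\,\trace{}{\ketbrat{\psi_{x|\alpha}}^2}=3$. Hence $\mathrm{Var}(f_t\,|\,\alpha,x)\le 3$. The conditional means lie in $[0,1]$, so by the law of total variance $\mathrm{Var}(f_t)\le 3+1/4=\calO(1)$. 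Bernstein with variance $\calO(1)$, range $b=\calO(2^r)$ and deviation $t=\epsilon/(4\Delta)$ then gives failure probability $\delta$ once
\[
k=\calO\!\left(\Big(\frac{\Delta^2}{\epsilon^2}+\frac{2^r\Delta}{\epsilon}\Big)\log\frac{1}{\delta}\right).
\]
Since $\Delta/\epsilon\ge 1$, this is at most $\calO\big(2^r\tfrac{\Delta^2}{\epsilon^2}\log\tfrac1\delta\big)$. As a fallback, median-of-means with Chebyshev gives the same bound using only the variance.

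The only step needing care is this variance bound, specifically verifying the $3$-design second-moment computation for the fidelity estimator. The rest is bookkeeping with completeness and soundness. I would also read the final decision rule as a test on the empirical mean $f$, not on an individual $f_t$.
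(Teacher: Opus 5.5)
Your proof is correct, and Steps 1--2 coincide with the paper's argument: unbiasedness of $f_t$ from the design identity, then completeness and soundness placing the averaged conditional fidelity at distance at least $\epsilon/(4\Delta)$ from the threshold. You are also right that the test should be on the empirical mean $f$, not on $f_t$.

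The genuine difference is the concentration step. The paper only notes $\abs{f_t}\le 2^r$ and asserts a ``Chernoff'' tail $2e^{-k\epsilon^2/2^r}$. As you observe, the range alone (Hoeffding) gives only a $4^r$ dependence, so that step tacitly uses more than boundedness. The cheapest way to supply it needs only unbiasedness and nonnegativity. The quantity $f_t+1=(2^r+1)\abs{\bra{z}U\ket{\psi_{x|\alpha}}}^2$ lies in $[0,2^r+1]$ and has mean at most $2$, so $\mathrm{Var}(f_t)\le\E[(f_t+1)^2]\le 2(2^r+1)$. Bernstein then gives exactly $k=\calO(2^r\Delta^2\epsilon^{-2}\log(1/\delta))$, with no third-moment computation. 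A multiplicative Chernoff bound on the $[0,1]$-valued $\abs{\bra{z}U\ket{\psi_{x|\alpha}}}^2$, whose mean is $\calO(2^{-r})$, gives the same.

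Your route instead spends the $3$-design property to get an $\calO(1)$ variance. A direct computation confirms your shadow-norm estimate: $\E[f_t^2\,|\,\alpha,x]\le 1+2\bra{\psi_{x|\alpha}}\rho_{x|\alpha}\ket{\psi_{x|\alpha}}\le 3$. In exchange you obtain the sharper $\calO((\Delta^2/\epsilon^2+2^r\Delta/\epsilon)\log(1/\delta))$, whose leading term drops the $2^r$ factor once $\Delta/\epsilon\ge 2^r$. For the lemma as stated, either argument suffices.

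One small caveat: median-of-means is not a fallback analysis of Algorithm~\ref{alg:multi-cfe-certification} but a modification of it, because the algorithm thresholds the plain empirical mean. Bernstein is the right tool for the protocol as written.
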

\begin{proof}
    Using the 3-design properties of random Clifford circuit, we have
    \begin{gather}
        \mathbb{E}_{U } \left[ U^\dagger |x\rangle \langle x| U \, \langle x| U A U^\dagger |x\rangle\right]
= \frac{A + \operatorname{tr}(A)\id}{(2^n + 1) 2^n}
    \end{gather}
    So the empirical average converges to
    \begin{align}
        \Exp{\alpha,x,U,z}{f_t} &= \sum_\alpha p_\alpha\sum_x\trace{}{K^{(\alpha)}_x\rho K^{(\alpha)\dagger}_x}\sum_{z\in\set{0,1}^r}
        \Exp{U}{\bra{z}U^\dagger\rho_{x|\alpha}U\ket{z}\left((2^r+1)\abs{\bra{z}U\ket{\psi_{x|\alpha}}}^2-1\right)}
        \notag\\
        &=\sum_\alpha p_\alpha\sum_x\trace{}{K^{(\alpha)}_x\rho K^{(\alpha)\dagger}_x}\bra{\psi_{x|\alpha}}\rho_{x|\alpha}\ket{\psi_{x|\alpha}}
    \end{align}
    Note that $|f_t|\leq 2^r$. Using Chernoff bound, we have
    \begin{equation}
        \mathrm{Pr}\left(\left|
        \frac{1}{k}\sum_{t=1}^kf_t-\Exp{\alpha,x,U,z}{f_t}\right|\geq\epsilon
        \right)\leq 2e^{-\frac{k\epsilon^2}{2^r}}.
    \end{equation}

    Now we choose 
    \begin{equation}
        k = 2^{r+4}\frac{\Delta^2}{\epsilon^2}\log\frac{2}{\delta}.
    \end{equation}
    Then 
    \begin{equation}
        \mathrm{Pr}\left(\left|
        \frac{1}{k}\sum_{t=1}^kf_t-\Exp{\alpha,x,U,z}{f_t}\right|\geq\frac{\epsilon}{4\Delta}
        \right)\leq \delta
    \end{equation}
    As a result, when $\bra{\psi}\rho\ket{\psi}>1-\epsilon/(2\Delta)$, from the completeness condition of Definition.~\ref{def:cfe-certification}, $\Exp{\alpha,x,U,z}{f_t}\geq 1-\epsilon/(2\Delta)$. With probability at least $1-\delta$, $(\sum_{t=1}^kf_t)/k>1-3\epsilon/(4\Delta)$ and Alg.~\ref{alg:multi-cfe-certification} outputs $\certified$. When $\bra{\psi}\rho\ket{\psi}<1-\epsilon$, 
    \begin{align}
        \Exp{\alpha,x,U,z}{f_t}\leq 1-\frac{1-\bra{\psi}\rho\ket{\psi}}{\Delta}<1-\frac{\epsilon}{\Delta}.
    \end{align}
    With probability at least $1-\delta$, $(\sum_{t=1}^kf_t)/k<1-3\epsilon/4\Delta$ and Alg.~\ref{alg:multi-cfe-certification} outputs $\failed$.
\end{proof}

For all protocols discussed later, $r$ is at most $\calO(\log n)$. So Alg.~\ref{alg:multi-cfe-certification} is efficient in terms of sample complexity.

In certification protocols, we assume the query access to the amplitudes of target state in one or multiple sets of product bases. In practice, when the target wavefunction has a closed analytical formula, or can be expressed as a matrix product state (MPS), we naturally have this access. We will assume a similar access model when transforming the certification protocols to those for metrology: query access to the amplitudes in the same set of bases as functions of $\theta$ at the neighborhood of $\theta^0$ (a prior, close estimate of $\theta$).

\subsection{Transforming certification protocols to metrology protocols}
Here we prove the key technical lemma. We show that every CFE-based certification protocol can be transformed into a metrology protocol with a provable CFI guarantee.
Moreover, we will prove this result be proving a stronger statement, i.e., 
we explicitly construct the locally unbiased estimators associated with the metrology protocol.

\begin{lemma}
    [Transforming certification protocols to metrology protocols]
    \label{lem:main}
    For every CFE-based certification protocol labeled by $\set{\calM,r,\Delta}$ defined in definition.~\ref{def:cfe-certification} for target state $\ket{\psi}$, there exists a corresponding metrology protocol that uses POVM $\calM'$ and achieves the CFI
    \begin{equation}
        I^{-1}(\calM',\ket{\psi})\preceq 4\Delta J^{-1}(\ket{\psi}).
    \end{equation}
    Moreover, $\calM'$ is obtained by measuring the classical shadow of the $r$ unmeasured qubits of $\calM$.
\end{lemma}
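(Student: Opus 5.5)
We follow the route sketched in the main text around Eq.~\eqref{eq:intuition}, making each step rigorous. Let $\calM'$ be the POVM that samples $\alpha$ with probability $p_\alpha$, applies the Kraus operators $K^{(\alpha)}_x$ on $S_\alpha$ (outcome $x$), then applies an $r$-qubit random Clifford $U$ to the remaining qubits and measures in the computational basis (outcome $z$). The full outcome is $(\alpha,x,U,z)$, with probability $p_\alpha\, q_{x|\alpha}(\theta)\, \mu(U)\, \abs{\bra{z}U\ket{\psi_{x|\alpha}(\theta)}}^2$, where $q_{x|\alpha}=\norm{K^{(\alpha)}_x\ket{\psi}}^2$.

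\textbf{Step 1: chain rule for the CFI.} Since $p_\alpha$ and $\mu(U)$ do not depend on $\theta$, the CFI decomposes as
\begin{align}
I(\calM')=\sum_\alpha p_\alpha\Big[I(\{q_{x|\alpha}\}_x)+\sum_x q_{x|\alpha}\, I\big(\calU^{(r)}_{\mathrm{Cl}},\ket{\psi_{x|\alpha}}\big)\Big].
\end{align}
Dropping the positive semidefinite first term and applying the 3-design bound of~\cite{zhou2026randomized}, $I(\calU^{(r)}_{\mathrm{Cl}},\ket{\phi})\succeq \frac{2^r+2}{4(2^r+1)}J(\ket{\phi})\succeq \frac14 J(\ket{\phi})$, gives $I(\calM')\succeq \frac14\,\Exp{\alpha,x}{J(\ket{\psi_{x|\alpha}})}$. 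Here the expectation is taken with weights $p_\alpha q_{x|\alpha}(\theta)$. At this step we must treat outcomes with $q_{x|\alpha}=0$ carefully: they contribute nothing to the CFI, and since they have zero weight they also contribute nothing to the averaged infidelity below.

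\textbf{Step 2: lift soundness to a QFI inequality (the main obstacle).} Apply soundness in Definition~\ref{def:cfe-certification}, with the target $\ket{\psi}=\ket{\psi(\theta)}$ fixed, to the lab state $\rho=\ketbrat{\psi(\theta+\rmd\theta)}$. The certification protocol is fixed by the target, so its parameters $\calM$ and $\Delta$ are those at $\theta$. Soundness reads
\begin{align}
\sum_{\alpha,x} p_\alpha\, \norm{K^{(\alpha)}_x\ket{\psi(\theta+\rmd\theta)}}^2\Big(1-\abs{\braket{\psi_{x|\alpha}(\theta)}{\tilde\psi_{x|\alpha}(\theta+\rmd\theta)}}^2\Big)\ \ge\ \frac{1-\abs{\braket{\psi(\theta)}{\psi(\theta+\rmd\theta)}}^2}{\Delta},
\end{align}
where $\tilde\psi_{x|\alpha}(\theta+\rmd\theta)$ is the normalized post-measurement state of the lab state. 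This normalized post-measurement state equals $\ket{\psi_{x|\alpha}(\theta+\rmd\theta)}$, the target construction evaluated at the shifted parameter.

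Expand both sides to second order in $\rmd\theta$ using Eq.~\eqref{eq:geo-mean-qfi}. The infidelity factor on the left is already $\calO(\rmd\theta^2)$, so the weight can be replaced by $q_{x|\alpha}(\theta)$ up to higher-order terms. The result is $\frac14\rmd\theta^T\,\Exp{}{J(\ket{\psi_{x|\alpha}})}\,\rmd\theta\ge \frac{1}{4\Delta}\rmd\theta^TJ\,\rmd\theta+o(\norm{\rmd\theta}^2)$. Dividing by $\norm{\rmd\theta}^2$ and letting $\rmd\theta\to0$ along every direction yields Eq.~\eqref{eq:qfi-gap}.

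The delicate point is justifying the interchange of the limit with the sum over outcomes. For a finite outcome set, differentiability suffices. Outcomes whose probability vanishes at $\theta$ but not at $\theta+\rmd\theta$ would break the normalization; their weight $q_{x|\alpha}(\theta+\rmd\theta)$ is itself $\calO(\rmd\theta^2)$ and their infidelity is at most $1$, so they contribute only $\calO(\rmd\theta^2)$. We therefore need to check that such terms cannot spoil the inequality. If needed, we restrict to the generic case where all $q_{x|\alpha}(\theta)>0$, or bound these terms by $o(\norm{\rmd\theta}^2)$ using the fact that the amplitudes are smooth and vanish to first order.

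\textbf{Step 3: conclude.} Combining the two steps gives $I(\calM')\succeq J/(4\Delta)$. Operator monotonicity of inversion on positive definite matrices then yields $I^{-1}(\calM')\preceq 4\Delta J^{-1}$. For the stronger claim, we also exhibit the estimator $\htheta^{\opt}$ of Eq.~\eqref{eq:opt} built from $\calM'$, which is locally unbiased and attains $I(\calM')^{-1}$. Alternatively, we can write it as a local shadow estimator from deviation observables on each $\ket{\psi_{x|\alpha}}$, following~\cite{zhou2026randomized}; this requires query access to $\ket{\psi_{x|\alpha}(\theta^0)}$ and its derivatives.
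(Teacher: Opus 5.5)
Your Steps 1 and 3 are sound, and they are equivalent to the paper's explicit construction. The chain rule plus the 3-design bound gives $I^{-1}(\calM')\preceq 4\frac{2^r+1}{2^r+2}J_M^{-1}$, which is exactly the covariance of the paper's weighted-SLD shadow estimator.

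\textbf{The gap is in Step 2,} at the outcomes you flag as delicate, and neither of your proposed fixes closes it. Suppose $K^{(\alpha)}_x\ket{\psi(\theta)}=0$. Then $K^{(\alpha)}_x\ket{\psi(\theta+\rmd\theta)}=\sum_i\rmd\theta_i\,K^{(\alpha)}_x\ket{\partial_i\psi}+\calO(\norm{\rmd\theta}^2)$. So the weight of such an outcome is of order $\norm{\rmd\theta}^2$ whenever $K^{(\alpha)}_x\ket{\partial_i\psi}\neq0$. That is the same order as the QFI terms, not $o(\norm{\rmd\theta}^2)$. Having no conditional target, the outcome counts as \failed. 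Soundness therefore only gives $J_M+4R\succeq J/\Delta$, where $R\succeq 0$ has entries $R_{ij}=\sum_{(\alpha,x):\,q_{x|\alpha}(\theta)=0}p_\alpha\operatorname{Re}\bra{\partial_i\psi}K^{(\alpha)\dagger}_xK^{(\alpha)}_x\ket{\partial_j\psi}$. You cannot drop $R$.

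A concrete example shows this matters. Take $\ket{\psi(\theta)}\propto\ket{00}+\theta_1\ket{10}+\theta_2\ket{01}$ at $\theta=0$, with the $r=1$ non-adaptive protocol.
\begin{itemize}
\item The failure probability is exactly $1-\bra{00}\rho\ket{00}$, so every $\Delta>1$ is a valid gap.
\item Yet $J=4\id$, $J_M=2\id$ and $R=\tfrac12\id$; the zero-probability outcomes carry half of the infidelity.
\item Consequently $I(\calM')=\tfrac23\id$, and $I^{-1}\preceq 4\Delta J^{-1}$ fails for $\Delta<3/2$.
\end{itemize}
So your $o(\norm{\rmd\theta}^2)$ option is false. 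Your ``generic case'' fallback is a necessary hypothesis for your route, not a convenience, and you must state it explicitly. Precisely: no outcome may have $q_{x|\alpha}(\theta)=0$ while $K^{(\alpha)}_x\ket{\partial_i\psi}\neq0$ for some $i$. Under that hypothesis the rest of your expansion goes through.

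\textbf{How the paper differs.} The paper sidesteps these terms by expanding in the other slot. Its function $f(\theta^0)$ freezes the lab state at $\theta$ and varies the target, so the weights are exactly $q_{x|\alpha}(\theta)$ and zero-weight outcomes never enter the Hessian. The price is that asserting $f(\theta^0)\geq0$ near $\theta$ uses soundness with the same $\Delta$, and the same Kraus operators, for every target in a neighborhood of $\ket{\psi(\theta)}$. That is an implicit uniformity assumption, and it is exactly what rules out the example above. There, the nearby targets have gap $\tau\to2$, consistent with $J_M=2\id\succeq J/2$. Your lab-varying version uses soundness only at the exact target. That is cleaner when $\calM$ itself depends on the target, as for the DT basis, but it then needs the positivity hypothesis.
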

\begin{proof}
We prove this result by explicitly construct the set of locally unbiased estimators from $\calM'$. 
Let $\theta_i$ denotes independent parameters and $(\theta^0)_i:=\theta_i^0$, then the locally unbiased estimators of the metrology protocol are
    \begin{equation}
        \hat\theta_i((x|\alpha),(U,z)) = \theta_i^0 + \trace{}{X_i^{(x|\alpha)} \hat\rho(U,z)^{(x|\alpha)}}.
    \end{equation}
    $\rho(U,z)^{(x|\alpha)}$ is the classical shadow $(2^r+1)U^\dagger\ketbrat{z}U-\id$ of post-measurement state 
    \begin{equation}
        \ket{\psi_{x|\alpha}(\theta)}=\frac{K^{(\alpha)}_x\ket{\psi(\theta)}}{\sqrt{\trace{}{K^{(\alpha)}_x\ketbrat{\psi(\theta)} K^{(\alpha)\dagger}_x}}},
    \end{equation}
    obtained by applying random Clifford unitary $U$ on $\rho_{x|\alpha}$ and measuring the computational basis on $r$ qubits. $X_i^{(x|\alpha)}$ is the weighted SLD operator
    \begin{gather}
        X_i^{(x|\alpha)} = \sum_{j} ( J^{-1}_{M} )_{ij} L_j^{(x|\alpha)} \Big|_{\theta = \theta^0},\quad
        L_i^{(x|\alpha)} = 2\left(
        \ketbra{\partial_{\theta_i}\psi_{x|\alpha}(\theta)}{\psi_{x|\alpha}(\theta)}+\mathrm{h.c.}
        \right),
    \end{gather}
    where
    \begin{equation}
        J_M = \Exp{\alpha,x}{J\left(\ketbrat{\psi_{x|\alpha}(\theta)}\right) }=
        \sum_{\alpha,x|\alpha} p_{\alpha}\trace{}{K^{(\alpha)}_x\ketbrat{\psi(\theta)} K^{(\alpha)\dagger}_x}
        J\left(\ketbrat{\psi_{x|\alpha}(\theta)}\right) \,\Big|_{\theta = \theta^0},
    \end{equation}
    defines the \emph{averaged QFI matrix}. 

    It suffices to prove the covariance matrix $V$ for this protocol satisfies
    \begin{equation*}
         V\preceq 4\Delta J^{-1}.
    \end{equation*}
    We take two steps. First, we show the set of estimators we construct is indeed locally unbiased. Next, we prove the relation of $V$ to QFI.

    \textbf{Locally unbiased estimator}.--- We verify directly,
    \begin{align}
        \bbE_{(x|\alpha),(U,z)}\left[\hat\theta_i((x|\alpha),(U,z))\right] \Big|_{\theta = \theta^0} &= \theta_i^0 + \sum_{(x|\alpha)}p_{\alpha}\trace{}{K^{(\alpha)}_x\ketbrat{\psi(\theta)} K^{(\alpha)\dagger}_x}
        \trace{}{X_i^{(x|\alpha)} \Exp{(U,z)}{\hat\rho(U,z)^{(x|\alpha)}}}\Big|_{\theta=\theta^0} \notag\\
        &= \theta_i^0 + \sum_{(x|\alpha)}p_{\alpha}\trace{}{K^{(\alpha)}_x\ketbrat{\psi(\theta)} K^{(\alpha)\dagger}_x}
         \trace{}{X_i^{(x|\alpha)} \ketbrat{\psi_{x|\alpha}(\theta)}}\Big|_{\theta=\theta^0} \notag\\
         &=\theta_i^0
    \end{align}
    Here we use the fact that $\braket{\partial_{\theta_i}\psi(\theta)}{\psi(\theta)}=0$, because $\ket{\partial_{\theta_i}\psi(\theta)}$ must be in an orthogonal subspace with $\ket{\psi(\theta)}$, since both are normalized.
    Likewise,
    \begin{align}
\partial_{\theta_j} \Exp{(k,x),s}{\hat\theta_i((x|\alpha),(U,z))} \Big|_{\theta = \theta^0} &=  
\sum_{(x|\alpha)}p_{\alpha}\trace{}{K^{(\alpha)}_x\partial_{\theta_j}\left(\ketbrat{\psi(\theta)} \right)K^{(\alpha)\dagger}_x}
        \trace{}{X_i^{(x|\alpha)} \Exp{(U,z)}{\hat\rho(U,z)^{(x|\alpha)}}}\Big|_{\theta=\theta^0} 
        \notag\\
        &\quad +\sum_{(x|\alpha)}p_{\alpha}\trace{}{K^{(\alpha)}_x\ketbrat{\psi(\theta)} K^{(\alpha)\dagger}_x}
        \trace{}{X_i^{(x|\alpha)} \partial_{\theta_j}\Exp{(U,z)}{\hat\rho(U,z)^{(x|\alpha)}}}\Big|_{\theta=\theta^0} 
        \notag\\
        &=\sum_{(x|\alpha)}p_{\alpha}p(x|\alpha,\theta)
        \sum_{k} ( J^{-1}_{M} )_{ik} \trace{}{L_k^{(x|\alpha)} \cdot
        \frac{1}{2}L^{(x|\alpha)}_j
        }\Big|_{\theta=\theta^0}\notag\\
        &=\sum_{k}( J^{-1}_{M} )_{ik}\sum_{(x|\alpha)}p_{\alpha}p(x|\alpha,\theta)
        J_{kj}\left(\ketbrat{\psi_{x|\alpha}(\theta)}\right)\Big|_{\theta=\theta^0}\notag\\
        &=\sum_{k}( J^{-1}_{M} )_{ik} (J_M)_{kj}\notag\\
        &=\delta_{ij}
\end{align}
On the second equality we use $\braket{\partial_{\theta_i}\psi(\theta)}{\psi(\theta)}=0$, define $p(x|\alpha,\theta)=\trace{}{K^{(\alpha)}_x\ketbrat{\psi(\theta)} K^{(\alpha)\dagger}_x}$, and use $\Exp{(U,z)}{\hat\rho(U,z)^{(x|\alpha)}}=\ketbrat{\psi_{x|\alpha}(\theta)}$. On the third equality, we use that $J_{ij}(\ketbrat{\psi})|_{\theta=\theta^0}=\frac{1}{2}\trace{}{L_iL_j}|_{\theta=\theta^0}$. On the fourth equality, we use definition of $J_M$.

\textbf{Covariance matrix}.--- Now we prove the relation between their covariance matrix with the QFI matrix.

The 3-design property of random Clifford unitaries give us that, for an opeartor $A$ and traceless operators $B$ and $C$ on a $n$-qubit system,
\begin{align}
    \Exp{U}{\bra{\psi}UAU^\dagger\ket{\psi}\bra{\psi}UBU^\dagger\ket{\psi}\bra{\psi}UCU^\dagger\ket{\psi}}
    =\frac{\trace{}{A}\trace{}{BC}+\trace{}{ABC}+\trace{}{ACB}}{2^n(2^n+1)(2^n+2)}
\end{align}
So we can directly calculate the covariance matrix
\begin{align}\label{eq:proof:covariance}
V_{ij} &=
    \Exp{\alpha,x,U,z}{\left(\hat{\theta}_i((x|\alpha),(U,z))-\theta_i^0\right)
    \left(\hat{\theta}_j((x|\alpha),(U,z))-\theta_j^0\right)}\Big|_{\theta=\theta^0}
    \notag\\
    &=\Exp{\alpha,x}{
    \Exp{U,z}{
    \trace{}{X_i^{(x|\alpha)} \hat\rho(U,z)^{(x|\alpha)}}\trace{}{X_j^{(x|\alpha)} \hat\rho(U,z)^{(x|\alpha)}}
    }
    }\Big|_{\theta=\theta^0}
    \notag\\
    &=\bbE_{\alpha,x}\Bigg[
    \sum_{z\in\set{0,1}^r}\bbE_U\Big[
    \bra{z}U\ketbrat{\psi_{x|\alpha}(\theta)}U^\dagger\ket{z}
    \trace{}{(2^r+1)X_i^{(x|\alpha)}U^\dagger\ketbrat{z}U}
    \notag\\
    &\qquad\qquad\cdot
    \trace{}{(2^r+1)X_j^{(x|\alpha)}U^\dagger\ketbrat{z}U}
    \Big]
    \Bigg]\Big|_{\theta=\theta^0}
    \notag\\
    &=\frac{2^r+1}{2^r+2}
    \Exp{\alpha,x}{
    \trace{}{X_i^{(x|\alpha)}X_j^{(x|\alpha)}}+\bra{\psi_{x|\alpha}(\theta)}X_i^{(x|\alpha)}
    X_j^{(x|\alpha)}\ket{\psi_{x|\alpha}(\theta)}+
    \bra{\psi_{x|\alpha}(\theta)}X_j^{(x|\alpha)}
    X_i^{(x|\alpha)}\ket{\psi_{x|\alpha}(\theta)}
    }\Big|_{\theta=\theta^0}
    \notag\\
    &=4\cdot \frac{2^r+1}{2^r+2}\cdot (J^{-1}_M)_{ij}
\end{align}
On the third equality, we use $\trace{}{X^{x|\alpha}_i}=0$.
On the last equality, we use 
\begin{gather}
    \Exp{\alpha,x}{\trace{}{X_i^{(x|\alpha)}X_j^{(x|\alpha)}}}\Big|_{\theta=\theta^0}=
    \Exp{\alpha,x}{
    \bra{\psi_{x|\alpha}(\theta)}X_i^{(x|\alpha)}
    X_j^{(x|\alpha)}\ket{\psi_{x|\alpha}(\theta)}}\Big|_{\theta=\theta^0}\notag\\
    =\sum_{kl}(J_M^{-1})_{ik}(J_M^{-1})_{jl}
    \Exp{\alpha,x}{\trace{}{L^{x|\alpha}_kL^{(x|\alpha)}_l}}\Big|_{\theta=\theta^0}
    =2(J^{-1}_M)_{ij}
\end{gather}

Finally, we bound the relation between $J_M$ and $J$. The central fact to notice is that QFI matrix is the metric for fidelity, i.e, 
\begin{equation}\label{eq:metric}
\abs{\bbraket{\psi(\theta)|\psi(\theta+\mathrm{d}\theta)}}^2
   =
   1-\frac{1}{4}\sum_{ij}J_{ij}(\ket{\psi(\theta)})\,\mathrm{d}\theta_i
    \mathrm{d\theta}_j + o(\norm{\rmd\theta}^2). 
\end{equation}
In the certification protocol, the failure probability is 
\begin{align}
    \Pr \left(\failed\right) & = \sum_{x|\alpha}\Pr \left(\failed|(x|\alpha)\right)
    \Pr\left((x|\alpha)\right)\notag\\
    &=\sum_{x|\alpha}p_\alpha \trace{}{K^{(\alpha)}_x\ketbrat{\psi(\theta)} K^{(\alpha)\dagger}_x}
    \left(1-\left|\braket{\psi_{x|\alpha}(\theta^0)}{\psi_{x|\alpha}(\theta)}\right|^2
    \right)
    \notag\\
    &=1-\sum_{x|\alpha}p_\alpha \trace{}{K^{(\alpha)}_x\ketbrat{\psi(\theta)}K^{(\alpha)\dagger}_x}
    \left|\braket{\psi_{x|\alpha}(\theta^0)}{\psi_{x|\alpha}(\theta)}\right|^2.
\end{align}
Using the soundness condition in Definition.~\ref{def:cfe-certification}, we have
\begin{align}
    1-\sum_{x|\alpha}p_\alpha \trace{}{K^{(\alpha)}_x\ketbrat{\psi(\theta)} K^{(\alpha)\dagger}_x}
    \left|\braket{\psi_{x|\alpha}(\theta^0)}{\psi_{x|\alpha}(\theta)}\right|^2
    \geq 
    \frac{1-\left|\braket{\psi(\theta^0)}{\psi(\theta)}\right|^2}{\Delta}. 
\end{align}
Let 
\begin{equation}
    f(\theta^0)=\left(1-\sum_{x|\alpha}p_\alpha \trace{}{K^{(\alpha)}_x\ketbrat{\psi(\theta)} K^{(\alpha)\dagger}_x}
    \left|\braket{\psi_{x|\alpha}(\theta^0)}{\psi_{x|\alpha}(\theta)}\right|^2\right)
    -\left(
    \frac{1-\left|\braket{\psi(\theta^0)}{\psi(\theta)}\right|^2}{\Delta}\right).
\end{equation}
Then $f(\theta^0)\geq 0$ in the neighborhood of $\theta^0=\theta$, and $f(\theta)=0$. This means that the Hessian $\nabla^2f(\theta^0)$ at $\theta^0=\theta$ is positive semi-definite. Using  
Eq.~\eqref{eq:metric}, we have
\begin{equation}
    J_M\succeq \frac{1}{\Delta}J.
\end{equation}
Together with Eq.\eqref{eq:proof:covariance}, we have
\begin{equation}
    V\preceq 4\cdot\frac{2^r+1}{2^r+2}\cdot \Delta J^{-1}\preceq 4\Delta J^{-1}.
\end{equation}
\end{proof}

In the following, we will use Lemma~\ref{lem:main} with recent developments for certification~\cite{huang2025certifying,gupta2025few,coladangelo2026power} to build our metrology protocols.

\section{Non-adaptive protocol}\label{app:protocol1}

\subsection{Results and remarks}
We first introduce the simplest protocol, where the measurement is performed on the computational basis for all but one (or $\calO(1)$, which will be clear later) randomly chosen qubit. The performance of this protocol is bounded by the mixing time of a Markov chain defined with the base state, which we define as below.
\begin{definition}
    [Markov chain on the Boolean hyper cube]
    \label{def:markov-chain}
    Let $\ket{\psi}$ be a $n$-qubit quantum state, and $p(x)=\abs{\bra{z}\psi\rangle}^2$ be the probability on the computational basis $z\in\set{0,1}^n$. Define the level-$r$ Boolean hypercube as a graph $G=(V,E)$ with vertices labeled by $z$. An edge exists between vertices $z_1$ and $z_2$ when the differ in $1\leq m\leq r$ bits. Then the associated level-$r$ Markov chain of $\ket{\psi}$ is a Markov chain on $G$ with transition matrix
    \begin{equation}
        P(z_1,z_2)=\begin{cases}
\frac{1}{n}\frac{p(z_2)}{p(z_1)+p(z_2)} & (z_1,z_2)\in E,\quad p(z_{1,2})\neq 0 \\
\frac{1}{n}\sum_{z':(z_1,z')\in E}\frac{p(z_1)}{p(z_1)+p(z')}
 & z_1=z_2,\quad p(z_1)\neq 0 \\
 0 & \mathrm{otherwise.}
\end{cases}
    \end{equation}
    Eigenvalues of transition matrix $P$ are at most 1, with at least one eigenvalue exactly being 1. We mixing time as $\tau = 1/(1-\lambda_2)$, where $\lambda_2$ is the second largest eigenvalue.
\end{definition}

\begin{theorem}
    [Quantum metrology with non-adaptive randomized measurement protocol]
    \label{thm:metrology1}
    Let $\calM$ be the single-qubit randomized measurement protocol that measurements the computational basis for all but one randomly chosen qubit, i.e., 
    \begin{gather}
        \calM=\set{\calM_{i,\alpha},\frac{1}{3n}}_{i\in[n],\alpha=x,y,z},\notag\\
        \calM_{i,\alpha}=\Big\{
        \ketbrat{z_1}\otimes\cdots\otimes\ketbrat{z_{i-1}}\otimes \ketbrat{z^{\alpha}_i}\otimes
        \ketbrat{z_{i+1}}\otimes\cdots\otimes \ketbrat{z_n}
        \Big\}_{z\in\set{0,1}^{n}},
    \end{gather}
    where $\ketbrat{z_i}$ is the projector of the $i$-th qubit to the computational basis. $\ketbrat{z^\alpha_i}$ is the projector of the $i$-th qubit to $x,y,z$ basis. The CFI matrix for any target state $\ket{\psi(\theta)}$ satisfies
    \begin{equation}
        I^{-1}(\calM,\ket{\psi(\theta)})\preceq 4\tau(\theta)
        J^{-1}(\ket{\psi(\theta)}).
    \end{equation}
    Here $\tau(\theta)$ is the mixing time of the associated level-1 Markov chain for $\ket{\psi(\theta)}$.
\end{theorem}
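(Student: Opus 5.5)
The plan is to reduce Theorem~\ref{thm:metrology1} to Lemma~\ref{lem:main}. We view the POVM $\calM$ as the metrology version of a CFE certification protocol with $r=1$ and certification gap $\Delta=\tau(\theta)$. In this protocol, qubit $i$ is chosen uniformly at random and the other $n-1$ qubits are measured in the computational basis, giving outcome $x\in\set{0,1}^{n-1}$. The remaining qubit then carries the post-selected single-qubit target state
\begin{equation*}
\ket{\psi_{x|i}}\propto \psi(x,0)\ket{0}+\psi(x,1)\ket{1},
\end{equation*}
where $(x,b)$ denotes $x$ with bit $b$ inserted at position $i$. For $r=1$, random Clifford measurement on the remaining qubit is, up to irrelevant signs, a uniformly random Pauli basis measurement. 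So the POVM $\calM'$ produced by Lemma~\ref{lem:main} coincides with $\calM$ in the statement, as far as the CFI is concerned. It therefore suffices to verify completeness and soundness of Definition~\ref{def:cfe-certification} with $\Delta=\tau$. Completeness is automatic, since conditional fidelities of a pure-projection test are at least compatible with the global fidelity in the standard way; in fact only soundness enters the proof of Lemma~\ref{lem:main}.

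The main step is soundness. For a pure lab state $\ket{\phi}$ we must show
\begin{equation*}
\Exp{i,x}{1-\abs{\braket{\psi_{x|i}}{\phi_{x|i}}}^2}\ \ge\ \frac{1-\abs{\braket{\psi}{\phi}}^2}{\tau}.
\end{equation*}
This is also the only form needed, since the Hessian argument in Lemma~\ref{lem:main} uses pure states $\ket{\psi(\theta)}$. I would follow the argument of Ref.~\cite{huang2025certifying}. Write $\phi(z)=\psi(z)g(z)$ on the support of $p=\abs{\psi}^2$, with the part of $\phi$ off the support handled separately; it only increases the left-hand side. The conditional infidelity on the pair $\set{(x,0),(x,1)}$, weighted by the probability $\abs{\phi(x,0)}^2+\abs{\phi(x,1)}^2$ of outcome $x$, equals
\begin{equation*}
\frac{p(z_1)p(z_2)\abs{g(z_1)-g(z_2)}^2}{p(z_1)+p(z_2)}
\end{equation*}
for the edge $z_1=(x,0)$, $z_2=(x,1)$. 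This is a two-dimensional determinant identity. Averaging over $i$ with weight $1/n$ gives exactly the Dirichlet form $\mathcal{E}_P(g,g)=\tfrac12\sum_{z_1,z_2}p(z_1)P(z_1,z_2)\abs{g(z_1)-g(z_2)}^2$ of the level-1 chain $P$ of Definition~\ref{def:markov-chain}, which is reversible with stationary distribution $p$.

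The Poincar\'e inequality then gives $\mathcal{E}_P(g,g)\ge (1-\lambda_2)\,\mathrm{Var}_p(g)=\mathrm{Var}_p(g)/\tau$. Meanwhile
\begin{equation*}
\mathrm{Var}_p(g)=\bbE_p\abs{g}^2-\abs{\bbE_p g}^2=\norm{\phi|_{\mathrm{supp}}}^2-\abs{\braket{\psi}{\phi}}^2\le 1-\abs{\braket{\psi}{\phi}}^2,
\end{equation*}
with the off-support mass accounted for as noted above. This yields soundness with $\Delta=\tau(\theta)$, and Lemma~\ref{lem:main} gives $I^{-1}(\calM,\ket{\psi(\theta)})\preceq 4\tau(\theta)J^{-1}(\ket{\psi(\theta)})$.

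The main obstacle is the bookkeeping in the soundness identity. Two cases need care: outcomes $x$ where $\psi$ vanishes on one or both endpoints of an edge, and components of $\phi$ off the support of $p$. The sign of these contributions has to come out correctly: they should only add to the left-hand side, or, when $\phi$ comes from a small perturbation with $\ket{\phi}=\ket{\psi(\theta+\rmd\theta)}$, contribute consistently at second order. For the Hessian comparison it may be cleaner to apply the argument directly at second order in $\rmd\theta$ with $g=1+\rmd g$. Finally, $I(\calM)$ is independent of the Pauli sign conventions, so replacing single-qubit random Cliffords by random Pauli bases does not change the CFI.
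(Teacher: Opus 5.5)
Your reduction is the paper's. Your Dirichlet-form identity plus the Poincar\'e inequality is the paper's spectral argument written in Markov-chain language. The paper sets $L=\frac1n\sum_{i,z}\ketbrat{z^{(i)}}\otimes\ketbrat{\psi_{z|i}}$, so that the acceptance probability is $\trace{}{\rho L}$. It then shows $L=FSPS^{-1}F^{-1}$ with $0\preceq L\preceq\id$ and $L\ket{\psi}=\ket{\psi}$, and reads off $1-\trace{}{\rho L}\ge(1-\lambda_2)(1-\bra{\psi}\rho\ket{\psi})$. This handles mixed $\rho$ and off-support amplitudes at once.

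\textbf{Off-support step.} As written, this step runs the wrong way. The bound $\mathrm{Var}_p(g)\le 1-\abs{\braket{\psi}{\phi}}^2$ cannot give a lower bound $(1-\abs{\braket{\psi}{\phi}}^2)/\tau$, and ``only increases the left-hand side'' is not enough. For $\ket{\psi}=\ket{0\cdots0}$ the variance is identically zero, so everything rests on the off-support term. What is true is that each off-support vertex $z$ contributes $\abs{\phi(z)}^2/n$ through each of its $n$ edges. This follows from your determinant identity when the neighbour lies in $S$, and from counting an undefined post-selected target as failure otherwise. Hence the left side equals $\mathcal{E}_P(g,g)+\norm{\phi|_{S^c}}^2$, while $1-\abs{\braket{\psi}{\phi}}^2=\mathrm{Var}_p(g)+\norm{\phi|_{S^c}}^2$. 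Since $P$ is similar to the positive semidefinite $L$, $\tau\ge1$, and soundness follows.

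\textbf{Second-order leakage.} The second-order issue you flag is a genuine gap. Lemma~\ref{lem:main} does not close it, and the paper's proof has the same hole.

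Fix the target $\ket{\psi(\theta)}$ and take the lab state to be $\ket{\psi(\theta+\rmd\theta)}$. Amplitude leaking onto zeros of $\ket{\psi(\theta)}$ then enters the failure probability at order $\rmd\theta^2$. So soundness only gives $J_M+K\succeq J/\tau(\theta)$, where $K\succeq0$ comes from first-stage outcomes with $p_x(\theta)=0$. But the estimator of Lemma~\ref{lem:main}, and the CFI at $\theta$, involve only outcomes with $p_x(\theta)>0$, that is, only $J_M$.

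The paper instead varies the target $\ket{\psi(\theta^0)}$ in its Hessian argument. This tacitly requires the gap to hold uniformly for all targets near $\ket{\psi(\theta)}$, yet $\tau$ can jump at points with vanishing amplitudes.

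Concretely, take $\ket{\psi(\theta)}\propto\ket{00}+\theta_1\ket{10}+\theta_2\ket{01}$ at $\theta=0$. There $\tau=1$ and $J=4\id$. Only the $x$-basis choice on the unmeasured qubit carries first-order information, so $I=\tfrac23\id$ and $I^{-1}=\tfrac32\id\not\preceq 4\tau J^{-1}=\id$. Nearby targets have $\tau\approx 2$, consistent with $J_M=2\id$.

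Your argument, like the paper's, is valid when $\ket{\psi(\theta)}$ has full support in the computational basis, since then $\tau$ is continuous and there is no leakage. In general it holds with $\tau(\theta)$ replaced by $\limsup_{\theta'\to\theta}\tau(\theta')$.
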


We make two important remarks.
\begin{itemize}
    \item There are cases where the level-1 Markov chain never fully mixes. On can define the level-$r$ protocol similar to Theorem~\ref{thm:metrology1} by randomly choosing $r$ unmeasured qubits. The CFI is bounded by the mixing time of level-$r$ Markov chain.
    \item Alternative, one can change the measurement basis to, e.g., $x$ basis to deal with some extreme cases, for example, the Greenberger-Horne-Zeilinger(GHZ) state.
\end{itemize}
Taking into account those variant protocols,
the mixing time for various states is systematically studied in~\cite{huang2025certifying}. We summarize in below. 
\begin{itemize}
    \item With $1-2^{-\calO(n)}$ probability, Haar random states have mixing time $\calO(n^2)$ for $r=1$ (Theorem~16 in Ref.~\cite{huang2025certifying}). Note that we can do better for Haar random states, see Appendix.~\ref{app:haar}.
    \item For phase state $\ket{\psi}=\frac{1}{\sqrt{2^n}}\sum_{z}e^{i\phi_z}\ket{z}$, the mixing time is $n$ for $r=1$ (Lemma~27 in Ref.~\cite{huang2025certifying}).
    \item For $\kappa$-local sign-free (stoquastic) Hamiltonian $H$ with gap $\Delta$ and ground state energy $E_0$, the mixing time of ground state is bounded by (Theorem~30 in Ref.~\cite{huang2025certifying}. See also Ref.~\cite{bravyi2022simulate})
    \begin{equation}
        \tau\leq \frac{1}{\Delta}\left(\sum_{i=1}^\kappa\binom{n}{\kappa}\right)
        \left(\max_{z\in\set{0,1}^n}\bra{z}H\ket{z}-E_0
        \right)
        \lesssim \calO(n^{\kappa+1})
    \end{equation}
    for $r=\kappa$.
    \item For GHZ state, using the $x$ basis protocol, the mixing time is $n/2$ for $r=2$ (Theorem~31 in Ref.~\cite{huang2025certifying}).
\end{itemize}

\subsection{Performance guarantee}

To analyze the performance, we need to prove the performance of the CFE certification protcol using the reduced version of $\calM$ that discarding the single-qubit random rotation.
\begin{lemma}
\label{lem:L}
    Let 
        \begin{equation}
        \calM=\set{\calM_i,\frac{1}{n}}_{i\in[n]},\quad
        \calM_i=\Big\{
        \ketbrat{z_1}\otimes\cdots\otimes\ketbrat{z_{i-1}}\otimes \id_i\otimes
        \ketbrat{z_{i+1}}\otimes\cdots\otimes \ketbrat{z_n}
        \Big\}_{z\in\set{0,1}^{n-1}},
    \end{equation}
    Then the CFE certification protocol using $\calM$ with target state $\ket{\psi}$ has failure probability $1-\trace{}{\rho L}$, $L$ is a $2^n\times 2^n$ matrix with entries
    \begin{equation}\label{eq:L}
       \bra{z_1} L\ket{z_2} = \begin{cases}
            \frac{1}{n}\frac{\sqrt{p(z_1)p(z_2)}}{p(z_1)+p(z_2)}e^{i(\phi_{z_1}-\phi_{z_2})} & (z_1,z_2)\in E,\quad p(z_{1,2})\neq 0  \\
            \frac{1}{n}\sum_{z':(z_1,z')\in E}\frac{p(z_1)}{p(z_1)+p(z')}
            & z_1= z_2,\quad p(z_1)\neq 0\\
            0 & \mathrm{otherwise.}
        \end{cases}
    \end{equation}
    Here $E$ is the set of edges of level-1 Boolean hypercube defined in Def.~\ref{def:markov-chain}, $p(z)$ and $\phi_z$ are defined by
    \begin{equation}
        \ket{\psi}=\sum_{z\in\set{0,1}^n}\sqrt{p(z)}e^{i\phi_z}\ket{z}.
    \end{equation}
\end{lemma}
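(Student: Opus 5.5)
We compute the failure probability of Algorithm~\ref{alg:cfe-certification} directly from its definition. Take the reduced POVM $\calM$ of the statement. The choice $\alpha=i$ has probability $1/n$, and the outcome is $y\in\set{0,1}^{n-1}$ on the qubits other than $i$. The corresponding Kraus operator is $K^{(i)}_y=\bra{y}_{[n]\setminus i}\otimes\id_i$. Write $z_1=(y,0)$ and $z_2=(y,1)$ for the two strings that agree with $y$ off qubit $i$ (bit $i$ inserted in position $i$). The post-measurement target state is then
\begin{equation*}
\ket{\psi_{y|i}}=\frac{\sqrt{p(z_1)}e^{i\phi_{z_1}}\ket{0}+\sqrt{p(z_2)}e^{i\phi_{z_2}}\ket{1}}{\sqrt{p(z_1)+p(z_2)}}.
\end{equation*}
This is defined only when $p(z_1)+p(z_2)\neq0$. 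If the sum vanishes, the target has zero weight on that branch, and we adopt the convention that the conditional projector is $0$ there; this is what produces the ``otherwise'' entries.

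The probability of outputting $\certified$ is
\begin{equation*}
\sum_i\frac1n\sum_y \trace{}{K^{(i)}_y\rho K^{(i)\dagger}_y}\bra{\psi_{y|i}}\rho_{y|i}\ket{\psi_{y|i}}=\frac1n\sum_{i,y}\bra{y}\otimes\bra{\psi_{y|i}}\,\rho\,\ket{y}\otimes\ket{\psi_{y|i}},
\end{equation*}
because the normalization of $\rho_{y|i}$ cancels. This equals $\trace{}{\rho L}$ with
\begin{equation*}
L=\frac1n\sum_{i,y}\ketbrat{y}\otimes\ketbrat{\psi_{y|i}}.
\end{equation*}
The failure probability is therefore $1-\trace{}{\rho L}$, which is the claimed form.

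It remains to read off the matrix entries of $L$ in the computational basis. Each term $\ketbrat{y}\otimes\ketbrat{\psi_{y|i}}$ is supported on the pair $\{z_1,z_2\}$, and these pairs are exactly the edges of the level-1 hypercube $E$. For an edge $(z_1,z_2)\in E$ the pair $(i,y)$ is unique, so the off-diagonal entry is
\begin{equation*}
\frac1n\frac{\sqrt{p(z_1)p(z_2)}}{p(z_1)+p(z_2)}e^{i(\phi_{z_1}-\phi_{z_2})}.
\end{equation*}
This entry is nonzero only when both probabilities are nonzero. Every other off-diagonal entry vanishes, since distinct $(i,y)$ blocks never pair non-neighbors.

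The diagonal entry at $z_1$ collects one contribution from each of the $n$ neighbors $z'$, namely $\frac1n\frac{p(z_1)}{p(z_1)+p(z')}$. Here we use that the sum over $i$ and the neighbor sum over $z'$ with $(z_1,z')\in E$ are the same index set. This requires $p(z_1)\neq0$; if $p(z_1)=0$ every such term is $0$.

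The only delicate point is the bookkeeping for zero-probability branches. If $p(z_1)+p(z')=0$, the corresponding term is undefined but is weighted by zero target probability. We need to check that the stated convention ($0$ entry when $p(z_1)=0$) agrees with how the protocol defines $\ket{\psi_{y|i}}$ on such branches. Everything else is direct substitution.
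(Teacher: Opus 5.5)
Your proposal is correct and follows the paper's proof essentially step by step: you write the success probability as $\frac1n\sum_{i,y}\Tr\bigl(\rho\,\ketbrat{y}\otimes\ketbrat{\psi_{y|i}}\bigr)$, identify $L$ from it, and read off the diagonal and off-diagonal entries edge by edge. Your explicit convention that the conditional projector is zero on branches with $p(z_1)+p(z_2)=0$ is a detail the paper leaves implicit; the stated ``otherwise $0$'' diagonal entry at $z$ with $p(z)=0$ does depend on it.
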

\begin{proof}
    The failure probability for certification is
    \begin{align}
        \Pr \left(\mathrm{fail}\right) &= \sum_{i\in[n]}\sum_{z\in\set{0,1}^{n-1}} \Pr\left(\mathrm{fail}~|~i,z\right)\Pr\left(\text{sample $i$-th qubit \& measurement outcome $z$}\right)
        \notag\\
        &=\sum_{i\in[n]}\sum_{z\in\set{0,1}^{n-1}}
        \left(1-\bra{\psi_{z|i}}\frac{\bra{z^{(i)}}\rho\ket{z^{(i)}}}{\trace{}{\bra{z^{(i)}}\rho\ket{z^{(i)}}}}
        \ket{\psi_{z|i}}\right)\cdot
        \frac{1}{n}\trace{}{\bra{z^{(i)}}\rho\ket{z^{(i)}}}
        \notag\\
        &=1-\sum_{i\in[n]}\sum_{z\in\set{0,1}^{n-1}}\frac{1}{n}\trace{}{
        \rho\cdot \ketbrat{z^{(i)}}\otimes \ketbrat{\psi_{z|i}}
        }
    \end{align}
    where on the second equality we use $\ket{z^{(i)}}$ to denote the computational basis state specified by $z$ on all but $i$-th qubit, and $\ket{\psi_{z|i}}= \bra{z^{(i)}}\psi\rangle/\norm{\bra{z^{(i)}}\psi\rangle}_2$.
    Here we extract a matrix
    \begin{align}\label{eq:def-L}
        L& = \frac{1}{n}\sum_{i\in[n]}\sum_{z\in\set{0,1}^{n-1}}
        \ketbrat{z^{(i)}}\otimes \ketbrat{\psi_{z|i}},
    \end{align}
    satisfying 
    \begin{align}
        \bra{z}L\ket{z}=\frac{1}{n}\sum_{i\in[n]}
        \frac{p(z)}{p(z)+p(\overline{z}_i)},
    \end{align}
    where $\overline{z}_i$ is the bitstring obtained by flipping the $i$-th bit of $z$. Moreover, when $z_1$ and $z_2$ differ only by $i$-th bit,
    \begin{align}
        \bra{z_1}L\ket{z_2}=\frac{1}{n}\bra{(z_1)_i}\cdot\ketbrat{\psi_{z|i}}
        \cdot \ket{(z_2)_i}=\frac{1}{n}
        \frac{\sqrt{p(z_1)}e^{i\phi_{z_1}}\cdot
        \sqrt{p(z_2)}e^{-i\phi_{z_2}}
        }{p(z_1)+p(z_2)}.
    \end{align}
    When $z_1$ and $z_2$ differs on more than one bit, clearly $\bra{z_1}L\ket{z_2}=0$. In this way, we verified all the properties enforced by Eq.~\eqref{eq:L}
\end{proof}

\begin{lemma}
    \label{lem:property-L}
    The matrix $L$ in Eq.~\eqref{eq:L} and transition matrix $P$ in Def.~\ref{def:markov-chain} have the same spectra.
\end{lemma}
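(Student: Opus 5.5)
The plan is to exhibit an explicit similarity transformation between $L$ and $P$. Restrict both to the support $S=\set{z: p(z)\neq 0}$. All entries of $L$ and $P$ vanish outside $S\times S$, so outside $S$ both matrices are zero blocks and contribute the same zero eigenvalues. On $S$, define the diagonal matrices $D=\mathrm{diag}\big(\sqrt{p(z)}\big)_{z\in S}$ and $\Phi=\mathrm{diag}\big(e^{i\phi_z}\big)_{z\in S}$. Both are invertible on $S$, and $\Phi$ is unitary.

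The key computation is to check entrywise that
\begin{equation*}
L = \Phi D\, P\, D^{-1}\Phi^{\dagger}
\end{equation*}
on the block $S\times S$. Take $(z_1,z_2)\in E$ with $z_1,z_2\in S$. The right-hand side gives
\begin{equation*}
e^{i(\phi_{z_1}-\phi_{z_2})}\sqrt{p(z_1)}\cdot\frac{1}{n}\frac{p(z_2)}{p(z_1)+p(z_2)}\cdot\frac{1}{\sqrt{p(z_2)}}
=\frac{1}{n}\frac{\sqrt{p(z_1)p(z_2)}}{p(z_1)+p(z_2)}e^{i(\phi_{z_1}-\phi_{z_2})},
\end{equation*}
which matches Eq.~\eqref{eq:L}. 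On the diagonal the conjugation acts trivially, so we only need the diagonal entries of $P$ and $L$ to agree. By Def.~\ref{def:markov-chain} and Eq.~\eqref{eq:L} they are both $\frac{1}{n}\sum_{z'}\frac{p(z_1)}{p(z_1)+p(z')}$.

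There is one point to check here. The diagonal of $P$ as written ought to be $1-\sum_{z'}P(z_1,z')$ for $P$ to be stochastic. This equals $\frac1n\sum_{z'}\frac{p(z_1)}{p(z_1)+p(z')}$ only after accounting for neighbours $z'$ with $p(z')=0$: for those, $P(z_1,z')=0$, and the corresponding term contributes $\frac1n$ to the diagonal. I would verify that the two conventions coincide, which they do because the diagonal formula sums over all neighbours. Off-diagonal entries with $z_1,z_2$ not adjacent vanish in both matrices.

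Since the similarity holds on $S$ and both matrices are zero off $S$, the full $2^n\times2^n$ matrices are similar via $\Phi D\oplus \id$, so their spectra (with multiplicity) coincide. As a by-product, $L=\Phi(DPD^{-1})\Phi^\dagger$ where $DPD^{-1}$ is real symmetric. This reflects detailed balance $p(z_1)P(z_1,z_2)=p(z_2)P(z_2,z_1)$, and it shows the eigenvalues are real, consistent with the claim that $\tau$ is well-defined.

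The only delicate step is the bookkeeping of zero-probability vertices and the diagonal convention. The off-diagonal identity is a one-line check.
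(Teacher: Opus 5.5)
Your proposal is correct and is the paper's proof: the paper conjugates $P$ by the same diagonal phase and amplitude matrices (its $F$ and $S$, extended by the identity on zero-probability strings) to obtain $L = FSPS^{-1}F^{-1}$. Your entrywise check, the treatment of zero-probability vertices, and the detailed-balance remark spell out verifications that the paper leaves implicit.
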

\begin{proof}
    Define the $2^n\times 2^n$ matrices
    \begin{align}
        S = \left(\sum_{z:p(z)\neq 0}\sqrt{p(z)}\ketbrat{z}\right)
        \otimes \id_{\set{z:p(z)=0}},\quad
        F = \left(\sum_{z:p(z)\neq 0}e^{i\phi_z}\ketbrat{z}\right)
        \otimes \id_{\set{z:p(z)=0}}.
    \end{align}
    Then we have
    \begin{align}
        L = F\cdot S\cdot P\cdot S^{-1}\cdot F^{-1}.
    \end{align}
    That is, $L$ is similar to $P$. So they have the same spectra.
\end{proof}

Above lemmas help us to prove Theorem~\ref{thm:metrology1}.
\begin{proof}
    [Proof of Theorem~\ref{thm:metrology1}]
    Since $P$ is a transition matrix, all the eigenvalues $\lambda$ satisfy $|\lambda|\leq 1$, and there is at least on eigenvalue $1$ corresponding to steady state. By Lemma~\ref{lem:property-L}, $L$ has the same properties. Moreover, $L$ is a Hermitian positive semi-definite matrix. So all the eigenvalues are in $[0,1]$. The mixing time for the Markov chain associated with $\ket{\psi}$ is $\tau=1/(1-\lambda_2)$, where $\lambda_2$ is the second largest eigenvalue of $L$.

    By construction (Eq.~\eqref{eq:def-L}), $\ket{\psi}$ is the eigenstate of $L$ with eigenvalue 1. Denote $\ket{\lambda_i}$ for the eigenstate of $L$ with eigenvalue $\lambda_i$ for $i\geq 2$. Using Lemma~\ref{lem:L}, the failure probability of the CFE certification protocol using $\calM$ is
    \begin{align}
        1-\trace{}{\rho L} & = 1-\left(\bra{\psi}\rho\ket{\psi}+\sum_{i\geq 2}\lambda_i \bra{\lambda_i}\rho\ket{\lambda_i}\right)
        \notag\\
        &\geq 1-\left(\bra{\psi}\rho\ket{\psi}+\sum_{i\geq 2}\lambda_2 \bra{\lambda_i}\rho\ket{\lambda_i}\right)\notag\\
        &=1-\left(\bra{\psi}\rho\ket{\psi}+\lambda_2 \trace{}{\rho (\id-\ketbrat{\psi})}\right)\notag\\
        &=(1-\lambda_2)(1-\bra{\psi}\rho\ket{\psi})\notag\\
        &=\frac{1-\bra{\psi}\rho\ket{\psi}}{\tau}
    \end{align}
    Finally, note that for single qubit the random Pauli rotation is the same as random Clifford. So
    by Lemma~\ref{lem:main}, we prove our result.
\end{proof}

\section{Adaptive decision-tree protocol}\label{app:protocol2}

\subsection{Results and remarks}
While the computational basis randomized measurements is conceptually simple to implement, its performance depends on target states. By leveraging the adaptive measurements, we can construct a metrology protocol with $\Delta=n$ for \emph{all} pure states while still using single-qubit measurements.

The core of this protocol is the decision-tree (DT) basis
\begin{definition}
    [Decision-tree basis]
    \label{def:dt-basis}
    Let $\mathsf{T}$ be a depth-$n$ binary tree. Each internal node $v$ is associated with two outgoing edges labeled by orthogonal single-qubit states
$\ket{\psi_v^{(0)}}, \; \ket{\psi_v^{(1)}} \in \mathbb{C}^2$.
Each leaf $\ell$ of $\mathsf{T}$ is uniquely identified by a bitstring
$x = (x_1, \dots, x_n) \in \{0,1\}^n$,
which specifies the path from the root to $\ell$, where $x_i$ indicates the choice of edge at depth $i$.
For a prefix $x_{<i} := (x_1, \dots, x_{i-1})$, let $v_{x_{<i}}$ denote the node reached after following the path specified by $x_{<i}$ from the root.
Then the decision-tree basis $\calT=\set{\ell_x}$ is a set of orthonormal basis associated with each leaf,
\begin{equation}
    \ket{\ell_x}
= \bigotimes_{i=1}^n \ket{\psi_{v_{x_{<i}}}^{(x_i)}}.
\end{equation}
\end{definition}

Note that given an $n$-qubit DT basis, one can measure this basis using adaptive single-qubit measurements. The most important property of DT basis, proved in~\cite{zhou2020saturating,gupta2025few}, is that, for every two pure states, one can always find a DT basis such that they are both phase states (see Lemma~\ref{lem:phase-dt} in the next subsection). 
Recall from the last section that a phase state has mixing time $\tau=n$. This suggests that randomized measurements designed using DT-basis may achieve $\Delta = n$ for all quantum states, although the explicit construction requires additional care, as shown below.

\begin{theorem}
    [Quantum metrology with adaptive decision-tree protocol]
    \label{thm:metrology2}
    Let $x\in\set{0,1}^{k}$ for $k\in[n-1]$ and $x^{0,1}\in\set{0,1}^{k+1}$ be the bitstrings with $k+1$-th bits either $0$ or $1$. Choose the single-qubit measurements $\calM$ as
    \begin{gather}
        \calM=\set{\calM_{i,\alpha},\frac{1}{3n}}_{i\in[n],\alpha=x,y,z},\notag\\
        \calM_{i,\alpha}=\Big\{
        \ketbrat{x_1}\otimes\cdots\otimes\ketbrat{x_{i-1}}\otimes \ketbrat{x^\alpha_i}\otimes
        \ketbrat{\ell^x_{z_1}}\otimes\cdots\otimes \ketbrat{\ell^x_{z_{n-i}}}
        \Big\}_{x\in\set{0,1}^{i},z\in\set{0,1}^{n-i}},
    \end{gather}
    where $\ket{\ell^x_z}\in\calT_x$ is the decision tree basis such that both $\bra{x_{}^0}\psi(\theta)\rangle/\norm{\bra{x_{}^0}\psi(\theta)\rangle}_2$ and $\bra{x_{}^1}\psi(\theta)\rangle/\norm{\bra{x_{}^1}\psi(\theta)\rangle}_2$ are phase states.
    $\ketbrat{x_i}$ is the computational basis projector of the $i$-th qubit.
    $\ketbrat{x^\alpha_i}$ is the projector of $i$-th qubit to $x,y,z$ basis.
    The CFI matrix for any target state $\ket{\psi(\theta)}$ satisfies
    \begin{equation}
        I^{-1}(\calM,\ket{\psi(\theta)})\preceq 4n
        J^{-1}(\ket{\psi(\theta)}).
    \end{equation}
\end{theorem}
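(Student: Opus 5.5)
The plan is to reduce Theorem~\ref{thm:metrology2} to Lemma~\ref{lem:main}. It suffices to show that the reduced measurement, obtained from $\calM$ by discarding the random Pauli on qubit $i$, defines a CFE certification protocol with gap $\Delta=n$ for every pure target $\ket{\psi}$. For a single qubit the random Pauli measurement coincides with the random Clifford measurement, so Lemma~\ref{lem:main} then gives $I^{-1}(\calM,\ket{\psi(\theta)})\preceq 4nJ^{-1}(\ket{\psi(\theta)})$ directly.

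As in Lemma~\ref{lem:L}, I will write the success probability of the reduced protocol as $\trace{}{\rho L}$ with
\begin{equation*}
L=\frac1n\sum_{i\in[n]}\sum_{x\in\{0,1\}^{i-1}}\sum_{z\in\{0,1\}^{n-i}}\ketbrat{x}\otimes\ketbrat{\psi_{x,z|i}}\otimes\ketbrat{\ell^x_z},
\end{equation*}
where $\ket{\psi_{x,z|i}}$ is the normalized post-selected single-qubit state of qubit $i$. By construction $L\ket{\psi}=\ket{\psi}$, and $L$ is a positive semidefinite contraction. As in the proof of Theorem~\ref{thm:metrology1}, completeness holds, and soundness with $\Delta=n$ follows once I show $\lambda_2(L)\le 1-1/n$.

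For the spectral bound I would decompose $L=\frac1n\sum_i L_i$. The key structural fact is that the DT basis $\calT_x$ for the last $n-i$ qubits is chosen so that both branches $\bra{x,0}\psi\rangle$ and $\bra{x,1}\psi\rangle$ are, after normalization, phase states with uniform moduli $2^{-(n-i)}$. In the mixed basis (computational on the first $i$ qubits, $\calT_x$ afterwards), the amplitudes of $\ket{\psi}$ therefore have modulus depending only on the prefix $(x,x_i)$. Within that block, $L_i$ is then a rank-one projector onto $\ket{x}\otimes\ket{\psi_{x,z|i}}\otimes\ket{\ell^x_z}$, with $\ket{\psi_{x,z|i}}\propto\sqrt{p(x0)}e^{i\phi}\ket0+\sqrt{p(x1)}e^{i\phi'}\ket1$.

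I would then argue inductively over qubits, mimicking a ``tree Markov chain''. Consider the sequence of orthogonal decompositions $\mathcal{H}=\mathcal{H}_{\le i}$, in which the first $i$ qubits are resolved in the computational basis. I would try to show $\sum_i(\id-L_i)\succeq \id-\ketbrat{\psi}$ by a telescoping identity: define the projectors $Q_i$ onto the span of states whose conditional (tail) part agrees with the corresponding conditional state of $\psi$ given the first $i$ bits. Then $Q_0=\ketbrat{\psi}$, $Q_n=\id$, and $\id-L_i$ should dominate $Q_i-Q_{i-1}$. Summing gives $n(\id-L)\succeq\id-\ketbrat{\psi}$, i.e.\ $\lambda_2\le 1-1/n$. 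This mirrors the phase-state bound $\tau=n$ of Ref.~\cite{huang2025certifying} and the certification proof of Ref.~\cite{gupta2025few}.

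The main obstacle is verifying $\id-L_i\succeq Q_i-Q_{i-1}$. The adaptive basis $\calT_x$ depends on the prefix, so the tail bases for different values of $x_i$ are not simply comparable. The phase-state property (equal moduli on both branches) is exactly what should make the conditional state of qubit $i$ independent of $z$ up to phases, which are absorbed into $\ket{\ell^x_z}$. If the direct operator inequality proves awkward, I would fall back on citing the soundness statement $\Delta=n$ from Ref.~\cite{gupta2025few} for this CFE protocol and then apply Lemma~\ref{lem:main}.
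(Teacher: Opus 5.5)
Your reduction to Lemma~\ref{lem:main} and your telescoping are the paper's own argument written in operator form. With $Q_i=\sum_{x\in\{0,1\}^i}\ketbrat{x}\otimes\ketbrat{\psi_x}$, the quantity $\bra{\phi}(Q_i-Q_{i-1})\ket{\phi}$ is exactly the paper's potential difference $\Gamma_i$. The paper's key inequality~\eqref{eq:lemma7}, weighted over prefixes, is precisely $\bra{\phi}(Q_i-Q_{i-1})\ket{\phi}\le\bra{\phi}(\id-L_i)\ket{\phi}$.

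The gap is that this domination is the only nontrivial content of the theorem beyond Lemma~\ref{lem:main}, and you assert it without proving it. The reason you offer is also not the operative one. The relative phase $\arg(u^1_z/u^0_z)$ of the conditional state of qubit $i$ genuinely depends on $z$. It cannot be absorbed into $\ket{\ell^x_z}$, because rephasing $\ket{\ell^x_z}$ multiplies both branches by the same factor. Your worry about incomparable tail bases is misplaced as well. At level $i$ only the single basis $\calT_x$, $x\in\{0,1\}^{i-1}$, enters, and by construction it is shared by both values of the $i$-th bit. The $Q_i$ refer to no tail basis, so no comparison across levels is ever needed.

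The missing step is short. Fix $x\in\{0,1\}^{i-1}$ and write $\ket{\psi_x}=\ket{0}\ket{u^0}+\ket{1}\ket{u^1}$, with $\zeta=\norm{u^1}/\norm{u^0}$ and $u^m_z=\braket{\ell^x_z}{u^m}$.
\begin{itemize}
\item On this block, $Q_i-Q_{i-1}=\ketbrat{x}\otimes\ketbrat{w_x}$, where $\ket{w_x}=\zeta\ket{0}\ket{u^0}-\zeta^{-1}\ket{1}\ket{u^1}$ is the unit vector in $\mathrm{span}\{\ket{0}\ket{u^0},\ket{1}\ket{u^1}\}$ orthogonal to $\ket{\psi_x}$. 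The block is zero if $u^0$ or $u^1$ vanishes.
\item The $\ket{\ell^x_z}$-component of $\ket{w_x}$ is $\zeta u^0_z\ket{0}-\zeta^{-1}u^1_z\ket{1}$. Its overlap with the unnormalized conditional target $u^0_z\ket{0}+u^1_z\ket{1}$ is $\zeta\abs{u^0_z}^2-\zeta^{-1}\abs{u^1_z}^2$.
\item This overlap vanishes because the phase-state property gives $\abs{u^1_z}/\abs{u^0_z}=\zeta$ for every $z$. Only the moduli matter: the $z$-dependent phases are harmless because $\ket{w_x}$ carries the same ones.
\item Hence $L_i\ket{x}\ket{w_x}=0$. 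Since $L_i$ is a sum of mutually orthogonal rank-one projectors, $\id-L_i$ is a projector whose range contains that of $Q_i-Q_{i-1}$. This gives $Q_i-Q_{i-1}\preceq\id-L_i$.
\item Summing over $i$ gives $n(\id-L)\succeq\id-\ketbrat{\psi}$, i.e.\ $\Delta=n$.
\end{itemize}
This is the paper's Cauchy--Schwarz computation in operator form: its left-hand side is exactly $\abs{\braket{w_x}{\phi_x}}^2$, and its right-hand side is $\bra{\phi_x}(\id-L_i)\ket{\phi_x}$ on the block. With this filled in, your fallback citation to \cite{gupta2025few} becomes unnecessary.
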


\subsection{Performance guarantee}
We first justify that the decision-tree basis needed in Theorem~\ref{thm:metrology2} exists.
\begin{lemma}
    [Phase state in DT basis, cf.~Corollary.~7 in~\cite{gupta2025few} or Lemma~1 in~\cite{zhou2020saturating}]
    \label{lem:phase-dt}
    For every $n$-qubit states $\rho_1$ and $\rho_2$, there is an associated DT basis $\calT=\set{\ket{\ell_x}}$, such that the diagonal matrix entries are all $1/2^n$ for both states, i.e.,
    \begin{equation}
        \bra{\ell_x}\rho_1\ket{\ell_x}=\bra{\ell_x}\rho_2\ket{\ell_x}=\frac{1}{2^n},\quad
        \forall \ket{\ell_x}\in\calT
    \end{equation}
\end{lemma}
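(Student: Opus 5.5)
We argue by induction on the number of qubits $n$, building the decision tree one level at a time from the root. The base case is the single-qubit statement quoted in the main text. Given two single-qubit density matrices $\sigma_1,\sigma_2$ with Bloch vectors $\vec r_1,\vec r_2\in\bR^3$, pick a unit vector $\vec u$ orthogonal to both; such a vector exists because two vectors span at most a plane in $\bR^3$. Take $\ket{e_{1,2}}$ to be the eigenvectors of $\vec u\cdot\vec\sigma$. Then $\bra{e_{1,2}}\sigma_k\ket{e_{1,2}}=\tfrac12(1\pm\vec u\cdot\vec r_k)=\tfrac12$ for $k=1,2$. This argument works equally for mixed states.

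For the inductive step, let $\rho_1,\rho_2$ be $n$-qubit states. Let $\sigma_k=\trace{2,\ldots,n}{\rho_k}$ be their reduced states on the first qubit. Choose the root edges $\ket{\psi^{(0)}_{\rm root}},\ket{\psi^{(1)}_{\rm root}}$ to be the basis $\{\ket{e_1},\ket{e_2}\}$ constructed above for $\sigma_1,\sigma_2$, so that $\bra{\psi^{(x_1)}_{\rm root}}\sigma_k\ket{\psi^{(x_1)}_{\rm root}}=1/2$ for both $k$ and both $x_1$. Conditioned on each outcome $x_1$, define the normalized post-measurement states on qubits $2,\ldots,n$:
\begin{equation*}
\rho_k^{(x_1)}=2\,\bra{\psi^{(x_1)}_{\rm root}}\rho_k\ket{\psi^{(x_1)}_{\rm root}}.
\end{equation*}
The factor $2$ is exactly the inverse of the outcome probability $1/2$, so these are genuine density matrices. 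Apply the induction hypothesis to the pair $\rho_1^{(x_1)},\rho_2^{(x_1)}$. This gives an $(n-1)$-qubit DT basis $\calT^{(x_1)}$, which we attach as the subtree below the child node $v_{x_1}$. The resulting depth-$n$ tree fits Definition~\ref{def:dt-basis} directly: the edges at depth $i$ depend only on the prefix $x_{<i}$, because the subtree chosen at each node depends only on the outcomes above it.

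Verification is a product of conditional probabilities. For a leaf $\ket{\ell_x}=\ket{\psi^{(x_1)}_{\rm root}}\otimes\ket{\ell'_{x_{2:n}}}$,
\begin{equation*}
\bra{\ell_x}\rho_k\ket{\ell_x}=\tfrac12\,\bra{\ell'_{x_{2:n}}}\rho_k^{(x_1)}\ket{\ell'_{x_{2:n}}}=\tfrac12\cdot\tfrac{1}{2^{n-1}}=\tfrac{1}{2^n}.
\end{equation*}
This holds for both $k=1,2$, completing the induction.

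There is no deep obstacle; the main point needing care is that the inductive hypothesis must apply to mixed states. Even when $\rho_k$ is pure, the marginals $\sigma_k$ used at each level are mixed, although the conditional states $\rho_k^{(x_1)}$ stay pure. This is why the statement is phrased for general density matrices and why the single-qubit step must handle Bloch vectors inside the ball, not just on the sphere. The orthogonal-vector construction covers this case. One should also note that the outcome probabilities are exactly $1/2$, never $0$, so the normalization is always well defined.
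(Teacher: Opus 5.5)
Your proof is correct and follows the same argument as the paper. You pick the single-qubit basis along a Bloch direction orthogonal to both (possibly mixed) marginals, then build the tree level by level so each adaptive measurement has conditional outcome probability $1/2$ for both states; your explicit conditional states $\rho_k^{(x_1)}$ and the product-of-conditional-probabilities check just fill in what the paper sketches. One small correction to your closing remark: if $\rho_1,\rho_2$ are pure, the conditional states stay pure, so only the single-qubit step (applied to the mixed marginals), not the inductive hypothesis, genuinely needs mixed states---this does not affect your argument, since you prove the general statement anyway.
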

\begin{proof}
    For every single-qubit mixed states $\sigma_1$ and $\sigma_2$, there is a corresponding basis $\set{\ket{b},\ket{b^\perp}}$, such that the measurement probabilities of these two bases on both states are equally $1/2$. This is because every single-qubit state can be represented as a vector on the Bloch sphere. It suffices to choose $\ket{b}$ to the orthogonal directions of vectors of  $\rho_1$ and $\rho_2$. Then for $\rho_1$, $\rho_2$, and $i=1,\cdots,n$-th qubits, we can inductively build a DT such that the adaptive measurements of each node to its descendants yields equal probabilities $1/2$. So the probability of measuring each leaf under this DT basis is $1/2^n$ for both $\rho_1$ and $\rho_2$. 
\end{proof}

Note that when $\rho_1=\ketbrat{\psi}$ is a pure state, it is a phase state under the DT basis,
\begin{equation}
    \ket{\psi}=\frac{1}{\sqrt{2^n}}\sum_{x\in\set{0,1}^n}e^{i\phi_x}\ket{\ell_x}.
\end{equation}

We also point out that Lemma~\ref{lem:phase-dt} cannot be generalized to qudit cases with $d\geq 3$ where we require $d$ distinct states to be phase states. Consider the following counterexample. Let $\{\ket{0},\ket{1},\ket{2}\}$ be a set of orthonormal basis. Consider
\begin{align}
    \ket{\psi_1}=\ket{0},\quad \ket{\psi_2}=\frac{\ket{0}+\ket{1}}{\sqrt{2}},
    \quad \ket{\psi_3}=\ket{2}.
\end{align}
Assume there exists a set of basis $\{\ket{e_0},\ket{e_1},\ket{e_2}\}$ such that $\ket{\psi_{1,2,3}}$ are all phase states. From $\ket{\psi_{1,3}}$, the basis set should satisfy $|\bra{e_k}0\rangle|=|\bra{e_k}2\rangle|=1/\sqrt{3}$ for all $k$. Due to normalization, $|\bra{e_k}1\rangle|=1/\sqrt{3}$ for all $k$. This means that $\bra{e_k}j\rangle=e^{i\theta_{kj}}/\sqrt{3}$ for all $k$ and $j$. 
Now we consider $\ket{\psi_2}$. By assumption,
\begin{align}
    |\bra{e_k}\psi_2\rangle|^2=\frac{1}{6}|e^{i\theta_{k0}}+e^{i\theta_{k1}}|^2=\frac{1}{3},\quad \forall k.
\end{align}
As a result,
\begin{align}
    \theta_{k0}-\theta_{k1}=\pm\frac{\pi}{2},\quad \forall k.
\end{align}
Due to orthonormality,
\begin{align}
    \sum_{j=0}^2 e^{i(\theta_{k_1j}-\theta_{k_2j})}=0,\quad\forall k_1\neq k_2.
\end{align}
Note that
\begin{align}
    (\theta_{k_10}-\theta_{k_20})-(\theta_{k_11}-\theta_{k_21})\in\{0,\pi\}.
\end{align}
So $e^{i(\theta_{k_10}-\theta_{k_20})}=\pm e^{i(\theta_{k_11}-\theta_{k_21})}$. As a result, $|e^{i(\theta_{k_12}-\theta_{k_22})}|=0$ or $2$. Contradiction.

Now we rigorously prove Theorem~\ref{thm:metrology2}. To this end, we analyze the performance of corresponding certification protocol.
\begin{lemma}
\label{lem:ryan-protocol-performance}
    Let 
    \begin{gather}
        \calM=\set{\calM_i,\frac{1}{n}}_{i\in[n]},\notag\\
        \calM_i=\Big\{
        \ketbrat{x_1}\otimes\cdots\otimes\ketbrat{x_{i-1}}\otimes \id_i\otimes
        \ketbrat{\ell^x_{z_1}}\otimes\cdots\otimes \ketbrat{\ell^x_{z_{n-i}}}
        \Big\}_{x\in\set{0,1}^{i-1},z\in\set{0,1}^{n-i}}
    \end{gather}
    be the measurements defined by Theorem~\ref{thm:metrology2}. Then the CFE certification protocol by using $\calM$ has a certification gap $\Delta=n$.
\end{lemma}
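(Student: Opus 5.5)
The plan is to mirror the proof of Theorem~\ref{thm:metrology1}: write the success probability of the CFE protocol as $\trace{}{\rho L}$ for an explicit Hermitian operator $L$, show that $\ket{\psi}$ is an eigenvector of $L$ with eigenvalue $1$, and show that the second largest eigenvalue of $L$ is at most $1-1/n$. The same spectral argument as before then gives the soundness bound $1-\trace{}{\rho L}\ge(1-\bra{\psi}\rho\ket{\psi})/n$, so $\Delta=n$.

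\textbf{Step 1: construct $L$.} Following Lemma~\ref{lem:L}, I would write
\begin{equation*}
L=\frac{1}{n}\sum_{i=1}^n L_i,\qquad L_i=\sum_{x\in\set{0,1}^{i-1},\,z\in\set{0,1}^{n-i}}\ketbrat{x}\otimes\ketbrat{\psi_{x,z|i}}\otimes\ketbrat{\ell^x_z},
\end{equation*}
where $\ket{\psi_{x,z|i}}$ is the normalized single-qubit post-measurement state on qubit $i$. Each $L_i$ is a sum of projectors onto orthogonal blocks, so $0\preceq L_i\preceq \id$. Moreover $\ket{\psi}$ lies in the span of these block vectors with the correct weights, so $L_i\ket{\psi}=\ket{\psi}$ for every $i$, and hence $L\ket{\psi}=\ket{\psi}$.

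\textbf{Step 2: bound the spectral gap (the main obstacle).} For a fixed prefix $x$, the conditional state on qubits $i,\dots,n$ splits as $\sqrt{q_0}\ket{0}\ket{\psi^{x0}}+\sqrt{q_1}\ket{1}\ket{\psi^{x1}}$, and by Lemma~\ref{lem:phase-dt} both branches are phase states in $\calT_x$. So inside that block the structure is a phase state on the bit-flip graph along coordinate $i$. My first attempt would be to reduce to a Markov chain and show it is similar to a product-like chain with gap $1/n$, as for phase states (Lemma~27 of Ref.~\cite{huang2025certifying}). The difficulty is that the DT bases $\calT_x$ differ across different $x$ and across different $i$, so the $L_i$ are not simultaneously expressed in a single basis.

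The fallback is a direct variational argument. I would take $\ket{\phi}\perp\ket{\psi}$ and aim for $\bra{\phi}L\ket{\phi}\le 1-1/n$, i.e. $\frac{1}{n}\sum_i\bra{\phi}(\id-L_i)\ket{\phi}\ge\frac1n\norm{\phi}^2$. It suffices to show $\sum_i(\id-L_i)\succeq \id-\ketbrat{\psi}$. I would prove this by induction on $n$ via a chain-rule or telescoping decomposition. Let $\Pi_i$ be the projector onto states agreeing with $\ket{\psi}$ conditioned on the first $i$ computational-basis bits. Then $\Pi_0=\ketbrat{\psi}$, $\Pi_n$ is the projector onto computational-basis-diagonal-in-the-prefix states, and $L_i$ dominates the step $\Pi_{i-1}\to\Pi_i$. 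The phase-state property of both branches should make the measurement of qubit $i$ followed by the DT measurement on the tail exactly detect the $i$-th level of this telescoping. The key point is that equal moduli $2^{-(n-i)/2}$ on both branches make the two-outcome overlaps uniform, so the projectors nest.

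\textbf{Step 3: conclude.} Given the gap, the chain of inequalities from the proof of Theorem~\ref{thm:metrology1} yields $\Delta=n$. Since a single-qubit random Pauli measurement is a Clifford $3$-design, Lemma~\ref{lem:main} then gives Theorem~\ref{thm:metrology2}.
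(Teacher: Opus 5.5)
Your proposal is the paper's argument in operator form. $\Tr(\rho\,\Pi_k)$ is exactly the paper's potential $\bbE_{x\in\set{0,1}^k}\bra{\psi_x}\rho_x\ket{\psi_x}$, and your domination $\Pi_k-\Pi_{k-1}\preceq\id-L_k$ is precisely the per-prefix inequality \eqref{eq:lemma7}. The telescoping $\id-\ketbrat{\psi}=\sum_k(\Pi_k-\Pi_{k-1})$ needs $\Pi_n=\id$, i.e.\ full computational-basis support of $\ket{\psi}$, or else padding $\Pi_k$ with $\ketbrat{x}\otimes\id$ on prefixes where $\ket{\psi}$ vanishes; your description of $\Pi_n$ is off on this point. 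The step you leave as ``should'' is true and is a one-line check that replaces the paper's Cauchy--Schwarz step. For each prefix $x$, write $\ket{\psi_x}=\ket{0}\ket{u^0}+\ket{1}\ket{u^1}$; the range of $\Pi_k-\Pi_{k-1}$ in that block is spanned by $\norm{u^1}^2\ket{0}\ket{u^0}-\norm{u^0}^2\ket{1}\ket{u^1}$. Its overlap with each $\ket{x}\ket{\psi_{x,z|k}}\ket{\ell^x_z}$ is proportional to $\norm{u^1}^2\abs{\braket{\ell^x_z}{u^0}}^2-\norm{u^0}^2\abs{\braket{\ell^x_z}{u^1}}^2=0$ by the phase-state property, so $L_k$ annihilates it and the projectors nest as you claim.
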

\begin{proof}
    The proof follows from Lemma 7 in~\cite{gupta2025few}.
    Let $\ket{\psi}$ and $\rho$ denote the target state and lab state, respectively. Later, we will focus on the case where $\rho=\ketbrat{\phi}$ is a pure state, since this is sufficient for our metrological purpose.
    Define the potential difference function 
    \begin{equation}
       \Gamma_k = \bbE_{x \in \{0,1\}^k }\bra{\psi_x}\rho_x\ket{\psi_x} - \bbE_{x \in \{0,1\}^{k-1} }\bra{\psi_x}\rho_x\ket{\psi_x}, 
    \end{equation}
    where $\ket{\psi_x}$ and $\rho_x$ are normalized post-selected states of the last $n - \abs{x}$ qubits conditioned on the first $\abs{x}$ measurement outcomes  being $x$. When $\abs{x} = n$, $\ket{\psi_x} = \rho_x = 1$.  Clearly, 
    \begin{equation}
        \label{eq:potential}
    \sum_{k=1}^n \Gamma_k = 1 - \bra{\psi}\rho\ket{\psi}. 
    \end{equation}
    
    Next, we show for any $x \in \{0,1\}^{k-1}$, 
    \begin{equation}
    \label{eq:lemma7}
    \bbE_{m \in \{0,1\}}\bra{\psi_{x^m}}\rho_{x^m}\ket{\psi_{x^m}} - \bra{\psi_{x}}\rho_{x}\ket{\psi_{x}} \leq \sum_{z} p_{z|(k,x)} \left( 1 - \bra{\psi_{xz}}\rho_{xz}\ket{\psi_{xz}}) \right), 
    \end{equation}
    On the left hand side of the inequality, recall that $x^m$ represents the bitstring with first $k-1$ bits being $x$, and the $k$-th bit is $m$. On the right hand side,
    $z \in \{0,1\}^{n-k}$ represents the last $n-k$ digits of measurement outcomes using the DT basis $\ket{\ell^x_z}$, $\ket{\psi_{xz}}$ ($\rho_{xz}$) denotes the normalized post-selected states of the  $k$-th qubit conditioned on the first $k-1$ measurement outcomes  being $x$, and last $n-k$ measurement outcomes being $z$. 
    $p_{z|(k,x)} =\trace{}{ \bra{\bigotimes\ell^x_{z_i}}\rho_{x}\ket{\bigotimes\ell^x_{z_i}}}$ represents the probability getting measurement outcomes $z$ on the last $n-k$ qubits.  From \eqref{eq:lemma7},
    \begin{equation}
        \begin{split}
    \Gamma_k 
    &= \bbE_{x\in \{0,1\}^{k-1}}\left[\bbE_{m \in \{0,1\}}\bra{\psi_{x^m}}\rho_{x^m}\ket{\psi_{x^m}} - \bra{\psi_{x}}\rho_{x}\ket{\psi_{x}}\right] \\
    &\leq \sum_{x \in\{0,1\}^{k-1}} p_{x|k} \sum_{z \in\{0,1\}^{n-k}} p_{z|(k,x)} \left( 1 - \bra{\psi_{xz}}\rho_{xz}\ket{\psi_{xz}}) \right)\\
    &= \sum_{x \in \{0,1\}^{k-1},z\in\set{0,1}^{n-k}} \frac{p_{k,xz}}{p_k} \left( 1 - \bra{\psi_{xz}}\rho_{xz}\ket{\psi_{xz}}) \right) \\
    &= n \sum_{x \in \{0,1\}^{k-1},z\in\set{0,1}^{n-k}} p_{k,xz} \left( 1 - \bra{\psi_{xz}}\rho_{xz}\ket{\psi_{xz}}) \right), 
    \end{split}
    \end{equation}
    where $p_k = 1/n$ because $k$ is selected from uniformly randomly. 
    From \eqref{eq:potential}, 
    \begin{align}
        1 - \bra{\psi}\rho\ket{\psi} = \sum_{k=1}^n \Gamma_k \leq n \sum_{k,x} p_{k,xz} \left( 1 - \bra{\psi_{xz}}\rho_{xz}\ket{\psi_{xz}}) \right),
    \end{align}
    proving the lemma.
    
    To prove \eqref{eq:lemma7}, we directly compute both sides of the inequality. Assume $\rho=\ketbrat{\phi}$ is a pure state, and 
    let $x \in \{0,1\}^k$.  Denote
    \begin{equation}
        \ket{\psi_x} = \ket{0} \otimes \ket{u^0} + \ket{1} \otimes \ket{u^1},\quad 
    \ket{\phi_x} = \ket{0} \otimes \ket{v^0} + \ket{1} \otimes \ket{v^1},
    \end{equation}
    where $\ket{0},\ket{1}$ forms the orthonormal basis of the $k$-th qubit. $\ket{u^{0,1}}$ and $\ket{v^{0,1}}$ are unnormalized states.
    We also introduce the following notations,
    \begin{gather}
        u^{0,1}_z = \braket{u^{0,1}}{\ell^x_z},\quad 
    v^{0,1}_z = \braket{v^{0,1}}{\ell^x_z},\quad \zeta_z = \frac{u^{1}_z}{u^{0}_z}.
    \end{gather}
    When one of $\ket{u^{0,1}}$, $\ket{v^{0,1}}$ is 0, left side of \eqref{eq:lemma7} is 0. The inequality trivially holds. We assume that none of $\ket{u^{0,1}}$, $\ket{v^{0,1}}$ is 0.
    
    Since by requirements, $\ket{u^{0,1}}$ are phase states,
    \begin{gather}
    \abs{\zeta_z} = \abs{\zeta_{z'}} =: \zeta,
    \quad 
    \abs{u^{0}_z}^2 = \frac{1}{2^{n-k}(1+\zeta^2)},\quad 
    \abs{u^{1}_z}^2 = \frac{\zeta^2}{2^{n-k}(1+\zeta^2)}.
    \end{gather}
    The right hand side of \eqref{eq:lemma7} reads
    \begin{align}
       \bra{\psi_{xz}}\rho_{xz}\ket{\psi_{xz}}  = \frac{\abs{\braket{\psi_x }{ \ell^x_z}\braket{\ell^x_z}{\phi_x}}^2}{\norm{\braket{\psi^x }{ \ell^x_z}}^2\norm{\braket{\ell^x_z}{\phi_x}}^2} = \frac{1}{1+\zeta^2} \frac{\abs{v_z^0 + \zeta_z^* v_z^1}^2}{p_{z|(k,x)}}
    \end{align}
    \begin{align}
        \sum_{z} p_{z|(k,x)} \left( 1 - \bra{\psi_{xz}}\rho_{xz}\ket{\psi_{xz}}) \right) = 1 - \frac{1}{1+\zeta^2} \sum_z \abs{v_z^0 + \zeta_z^* v_z^1}^2 = \frac{1}{1 + \zeta^2}\sum_z \abs{\zeta_z v_z^0 - v_z^1}^2. 
    \end{align}
    The left hand side of \eqref{eq:lemma7} reads 
        \begin{align}
    \bbE_{m \in \{0,1\}}\bra{\psi_{x^m}}\rho_{x^m}\ket{\psi_{x^m}} - \bra{\psi_{x}}\rho_{x}\ket{\psi_{x}}
    &= \abs{\braket{\psi_{x^0}}{v^0}}^2 + \abs{\braket{\psi_{x^1}}{v^1}}^2 - \abs{\braket{u^0}{v^0} + \braket{u^1}{v^1}}^2\notag\\
    &= (1+\zeta^2)\abs{\braket{u^0}{v^0}}^2+\left(1+\frac{1}{\zeta^2}\right)\abs{\braket{u^1}{v^1}}^2
    - \abs{\braket{u^0}{v^0} + \braket{u^1}{v^1}}^2
    \notag\\
    &= \abs{\zeta \sum_z (u_z^0)^* v_z^0 - \frac{1}{\zeta} \sum_z (u_z^1)^* v_z^1}^2 \notag\\
    &= \abs{\frac{1}{\zeta}\sum_z (u_z^1)^* \zeta_z v_z^0 - \frac{1}{\zeta} \sum_y (u_z^1)^* v_z^1}^2 \notag\\
    &\leq  \frac{1}{\zeta^2}\sum_z \abs{u_z^1}^2 \sum_z \abs{\zeta_z v_z^0 - v_z^1}^2 = \frac{1}{1 + \zeta^2}\sum_z \abs{\zeta_z v_z^0 -  v_z^1}^2.
        \end{align}
    \eqref{eq:lemma7} is then proven. 
\end{proof}

Theorem~\ref{thm:metrology2} is then a direct consequence of Lemma~\ref{lem:main} and Lemma~\ref{lem:ryan-protocol-performance}.

\section{Protocols for Haar random states}\label{app:haar}

When targeting Haar random states, a polynomial overhead (in previous sections) can be reduced to a logarithmic or constant overhead. 

\subsection{The randomized Pauli measurement protocol}\label{app:haar1}
Recent progress~\cite{du2025certifying,coladangelo2026power} prove that with randomized Pauli measurement, a CFE-based protocol can certify typical Haar-random states. We can use the same measurement protocol to perform metrology on Haar-random states. Note that, performing randomized Pauli measurement on $n-1$ qubits with randomized Clifford measurement on the remaining one qubit is equivalent to perform full randomzied Pauli measurement on all qubits. So the description of measurement need not an explicit CFE structure.

\begin{theorem}
    [Quantum metrology for Haar random states with randomized Pauli measurement]
    \label{thm:metrology4}
    Choose the randomized measurement protocol to be
    \begin{gather}
        \calM=\set{\calM_{\alpha_x,\cdots,\alpha_n},\frac{1}{3^n}}_{\alpha_i=x,y,z},\notag\\
        \calM_{\alpha_1,\cdots,\alpha_n}=\Big\{\ketbrat{x^{\alpha_1}_1}\otimes 
        \ketbrat{x^{\alpha_2}_2}\otimes\cdots\otimes \ketbrat{x^{\alpha_n}_n}
        \Big\}_{x\in\set{0,1}^{n}},
    \end{gather}
    where $\ketbrat{x^{\alpha_i}_i}$ is the projector of $i$-th qubits to $\alpha_i=x,y,z$ basis.
    When the target state $\ket{\psi(\theta)}$ is chosen from Haar random states, there exists a constant $C$ such that this measurement protocol achieves CFI
    \begin{equation}
        I^{-1}(\calM,\ket{\psi(\theta)})\preceq CJ^{-1}(\psi(\theta))
    \end{equation}
    with probability at least $1-e^{-\Omega(n)}$.
\end{theorem}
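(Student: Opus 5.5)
We will obtain Theorem~\ref{thm:metrology4} from Lemma~\ref{lem:main}, by presenting full randomized Pauli measurement as a CFE protocol with a constant certification gap for typical Haar-random states. The decomposition is the one noted just before the theorem. Choose a qubit $i\in[n]$ uniformly at random. Measure the other $n-1$ qubits in independently chosen random Pauli bases $\alpha_{-i}\in\set{x,y,z}^{n-1}$, obtaining outcome $x$. This gives the family $\calM=\set{\calM_{(i,\alpha_{-i})},p_{(i,\alpha_{-i})}}$ with $r=1$. A random single-qubit Clifford measurement on qubit $i$ amounts to a uniformly random Pauli measurement on that qubit. Averaging over $i$ therefore returns exactly the product measurement $\set{\calM_{\alpha_1,\ldots,\alpha_n},3^{-n}}$ of the theorem. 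Because the outcome statistics are identical, $\calM'$ in Lemma~\ref{lem:main} coincides with this POVM, and it is enough to show that the soundness condition of Definition~\ref{def:cfe-certification} holds with $\Delta=C/4$ for $\ket{\psi}=\ket{\psi(\theta)}$.

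To get soundness we proceed as in Lemma~\ref{lem:L}. The failure probability equals $1-\trace{}{\rho L}$, where
\begin{equation*}
L=\frac{1}{n\,3^{n-1}}\sum_{i,\alpha_{-i},x}\ketbrat{x^{\alpha_{-i}}}\otimes\ketbrat{\psi_{x|(i,\alpha_{-i})}}.
\end{equation*}
$L$ is Hermitian with $0\preceq L\preceq\id$, and $L\ket{\psi}=\ket{\psi}$, since each projector restricted to the slice spanned by $\bra{x^{\alpha_{-i}}}\psi\rangle$ fixes $\ket{\psi}$. The argument used in the proof of Theorem~\ref{thm:metrology1} then gives $1-\trace{}{\rho L}\geq(1-\lambda_2(L))(1-\bra{\psi}\rho\ket{\psi})$. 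We thus need $\Delta=1/(1-\lambda_2(L))$, and the task reduces to a spectral gap bound: $\lambda_2(L)\leq 1-c$ for an absolute constant $c>0$, with probability $1-e^{-\Omega(n)}$ over Haar-random $\ket{\psi}$. Since Lemma~\ref{lem:main} is applied pointwise at $\theta$, this is all we need.

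The spectral gap is the central step and the main obstacle. Here we would invoke the certification results for Haar-random states with randomized Pauli measurements~\cite{du2025certifying,coladangelo2026power}, which prove precisely such a constant gap (equivalently, soundness with constant $\Delta$) for this measurement structure. To make the argument self-contained, we would first compute the Haar average $\bbE_\psi L$. Replacing $\ketbrat{\psi_x}$ by its Haar expectation turns $L$ into a Pauli-local mixing operator, roughly $\id-\frac{1}{n}\sum_i(\text{depolarizing gap on qubit } i)$ restricted away from $\ket{\psi}$. On the complement of $\ket{\psi}$ this operator has norm $1-\Omega(1)$, because a random-Pauli measurement of the other qubits leaves the conditional single-qubit state nearly maximally mixed for typical states. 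We would then show that $\norm{L-\bbE L}$ restricted to $\ket{\psi}^\perp$ is small with probability $1-e^{-\Omega(n)}$. For this we would use Lévy's lemma on the sphere: the Lipschitz constant of $\psi\mapsto\bra{\phi}L\ket{\phi}$ is controlled because the post-selection norms $\norm{\bra{x^{\alpha}}\psi\rangle}^2$ concentrate around $2^{-(n-1)}$. An $\epsilon$-net over $\phi$ is too large in dimension $2^n$, so we would instead bound moments of the form $\trace{}{(L-\ketbrat{\psi})^{2k}}$, or follow the cited proof directly. We expect the cited works to supply this step, and the remaining work is the CFE reformulation together with Lemma~\ref{lem:main}, which give $I^{-1}(\calM,\ket{\psi(\theta)})\preceq 4\Delta J^{-1}$ with $C=4\Delta$.
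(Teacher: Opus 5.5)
Your proposal is correct and follows the paper's proof. The paper likewise identifies the full product Pauli POVM with a CFE protocol (random Pauli bases on $n-1$ qubits, plus a single-qubit random Clifford, i.e.\ random Pauli, measurement on the remaining one), and then combines Lemma~\ref{lem:main} with the cited constant-gap certification theorem for Haar-random states. The differences are minor: your uniformly random unmeasured qubit $i$ needs qubit-permutation invariance of the Haar measure plus a union bound over the $n$ choices (or just fix $i=n$ as the paper does), and your spectral reformulation via $L$ is a correct but unneeded restatement of soundness.

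Your self-contained concentration sketch is not load-bearing, and it would not work as written. The individual post-selection norms $\norm{\bra{x^{\alpha}}\psi\rangle}^2$ fluctuate by constant relative factors (and can be arbitrarily small) rather than concentrating at $2^{-(n-1)}$. Moreover, $\bbE_\psi L=\id/2$ exactly, so the whole difficulty lies in the $\psi$-correlated fluctuation of $L$ on $\ket{\psi}^\perp$, which is precisely what the cited works control.
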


Theorem~\ref{thm:metrology4} can be proved directly from Lemma~\ref{lem:main} and the following lemma (Theorem~5 of \cite{du2025certifying} or Theorem~1 of~\cite{coladangelo2026robust}),
\begin{lemma}
    [Constant certification gap for Haar random states when $r=1$, rephrased Theorem~5 in~\cite{du2025certifying}]
    Consider the following measurement protocol,
    \begin{gather}
        \calM=\set{\calM_{\alpha_x,\cdots,\alpha_{n-1}},\frac{1}{3^{n-1}}}_{\alpha_i=x,y,z},\notag\\
        \calM_{\alpha_1,\cdots,\alpha_n}=\Big\{\ketbrat{x^{\alpha_1}_1}\otimes 
        \ketbrat{x^{\alpha_2}_2}\otimes\cdots\otimes \ketbrat{x^{\alpha_{n-1}}_{n-1}}
        \Big\}_{x\in\set{0,1}^{n-1}},
    \end{gather}
    where $\ketbrat{x^{\alpha_i}_i}$ is the projector of $i$-th qubits to $\alpha_i=x,y,z$ basis.
    The CFE certification protocol using $\calM$ for target Haar random states $\ket{\psi}$ has a constant certification gap with probability at least $1-e^{-\Omega(n)}$.
\end{lemma}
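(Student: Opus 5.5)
The plan is to follow the spectral route used for Theorem~\ref{thm:metrology1}. For the measurement $\calM$ of the lemma, define the CFE operator
\begin{equation*}
L=\frac{1}{3^{n-1}}\sum_{\alpha}\sum_{x\in\set{0,1}^{n-1}}\ketbrat{x^{\alpha}}\otimes\ketbrat{\psi_{x|\alpha}},\qquad \ket{\psi_{x|\alpha}}=\frac{\bra{x^\alpha}\psi\rangle}{\norm{\bra{x^\alpha}\psi\rangle}}.
\end{equation*}
Here $\ket{x^\alpha}$ is the product Pauli eigenstate on the first $n-1$ qubits.

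Exactly as in the proof of Lemma~\ref{lem:L}, the failure probability of the CFE protocol is $1-\trace{}{\rho L}$. Moreover, $0\preceq L\preceq\id$ and $L\ket{\psi}=\ket{\psi}$. As in the proof of Theorem~\ref{thm:metrology1}, the certification gap is therefore $\Delta=1/(1-\lambda_2(L))$, where $\lambda_2(L)=\norm{\Pi^\perp L\Pi^\perp}_{\mathrm{op}}$ and $\Pi^\perp=\id-\ketbrat{\psi}$. The goal is thus to show that, for Haar-random $\ket{\psi}$, $\norm{\Pi^\perp L\Pi^\perp}_{\mathrm{op}}\le 1-c$ for a universal constant $c>0$, with probability $1-e^{-\Omega(n)}$. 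Since only pure lab states $\ket{\phi}$ matter for Lemma~\ref{lem:main}, it suffices to control
\begin{equation*}
\bra{\phi}L\ket{\phi}=\bbE_\alpha\sum_x\frac{\abs{\bra{\psi}(\ketbrat{x^\alpha}\otimes\id)\ket{\phi}}^2}{p_{x|\alpha}},\qquad p_{x|\alpha}=\norm{\bra{x^\alpha}\psi\rangle}^2,
\end{equation*}
for unit vectors $\ket{\phi}\perp\ket{\psi}$.

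The first step is a typical-case computation. For Haar $\ket{\psi}$, every $p_{x|\alpha}$ is a sum of two Porter--Thomas weights, with mean $2^{-(n-1)}$ and $\Gamma(2)$-type fluctuations. In the regime where $p_{x|\alpha}\approx 2^{-(n-1)}$, $L$ is approximately $2^{n-1}\,\Phi^{\otimes(n-1)}(\ketbrat{\psi})$ acting on the $n$ qubits, where $\Phi(\sigma)=\bbE_\alpha\sum_{b}\ketbrat{b^\alpha}\sigma\ketbrat{b^\alpha}$ is the random-Pauli dephasing channel. In the Pauli basis this channel multiplies each non-identity Pauli on a measured qubit by $1/3$. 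Expanding $\ketbrat{\psi}$ in Paulis and using $\trace{}{P\psi}^2\approx 2^{-n}$ for Haar states, the restriction of this idealized operator to $\psi^\perp$ should have operator norm at most a constant strictly below $1$; heuristically it is about $(2/3)^{\Theta(1)}$, plus exponentially small corrections. This is where the weight-$1/3$ damping of high-weight Paulis supplies the constant gap.

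The second step makes this rigorous despite the $1/p_{x|\alpha}$ denominators. I would split outcomes into a good set, where $p_{x|\alpha}\ge 2^{-(n-1)}/K$ for a large constant $K$, and a bad set. On the good set the reweighting costs only a factor of $K$, so the idealized operator analysis applies. For the bad set I would use $\bra{\phi}L_{x}\ket{\phi}\le\bra{\phi}(\ketbrat{x^\alpha}\otimes\id)\ket{\phi}$, which bounds its total contribution by the $\phi$-probability of bad outcomes. The difficulty is that $\ket{\phi}$ is adversarial, so it could concentrate on the bad outcomes. I would handle this with a net argument over $\ket{\phi}$, or with a trace/moment bound $\bbE_\psi\trace{}{(\Pi^\perp L\Pi^\perp-\text{ideal})^{2k}}$ of order $k=\Theta(n)$. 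Lévy concentration on the sphere would then promote the expectation bound to probability $1-e^{-\Omega(n)}$, using that $L$ is Lipschitz in $\ket{\psi}$ away from the bad set.

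The main obstacle is the adversarial choice of $\ket{\phi}$ against the heavy-tailed denominators $p_{x|\alpha}$. Bounding $\norm{\Pi^\perp L\Pi^\perp}$ uniformly over all $\ket{\phi}$, rather than on average, is where a careful moment computation is required. This is precisely the content of Theorem~5 in Ref.~\cite{du2025certifying} (equivalently Theorem~1 of Ref.~\cite{coladangelo2026robust}). If the direct moment bound proves unwieldy, I would fall back on invoking their bound on $\lambda_2(L)$ and only verify that their soundness condition matches Definition~\ref{def:cfe-certification}.
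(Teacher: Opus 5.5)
The paper gives no proof of this lemma. It simply imports Theorem~5 of Ref.~\cite{du2025certifying} (equivalently Theorem~1 of Ref.~\cite{coladangelo2026robust}).

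Your reduction is correct, and it is the right way to connect that theorem to Definition~\ref{def:cfe-certification}. For $L=3^{-(n-1)}\sum_{\alpha,x}\ketbrat{x^\alpha}\otimes\ketbrat{\psi_{x|\alpha}}$ the failure probability is $1-\Tr(\rho L)$, with $0\preceq L\preceq\id$ and $L\ket{\psi}=\ket{\psi}$. Hence soundness holds with $\Delta=1/(1-\lambda_2(L))$, exactly as in the proof of Theorem~\ref{thm:metrology1}. Your fallback of invoking the cited bound on $\lambda_2(L)$ is therefore the paper's route. Your proposal is complete only through that citation, just as the paper is.

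The direct argument you present as your main plan would not close, for three reasons.

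\emph{No regime where $p_{x|\alpha}\approx 2^{-(n-1)}$.} Each $p_{x|\alpha}$ has order-one relative fluctuations. So $L_0:=2^{n-1}(\Phi^{\otimes(n-1)}\otimes\mathrm{id})(\ketbrat{\psi})$ is not a leading-order proxy for $L$. This fails on $\ket{\psi}$: $\bra{\psi}L_0\ket{\psi}=2^{n-1}\bbE_\alpha\sum_x p_{x|\alpha}^2\approx 3/2$, versus $\bra{\psi}L\ket{\psi}=1$. It also fails on $\psi^\perp$: in the direction $Z_1\ket{\psi}$, $L_0$ gives about $5/6$ while $L$ gives about $2/3$.

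\emph{The good-set step loses too much.} The bound $1/p_{x|\alpha}\le K2^{n-1}$ only gives $\bra{\phi}L_{\mathrm{good}}\ket{\phi}\le K\bra{\phi}L_0\ket{\phi}$. Since $\Tr L_0=2^{n-1}$, a generic $\ket{\phi}\perp\ket{\psi}$ has $\bra{\phi}L_0\ket{\phi}\approx 1/2$. For the large $K$ you need to make the bad set small, this bound is about $K/2>1$, which is weaker than the trivial bound $\bra{\phi}L\ket{\phi}\le 1$. A certification gap needs an \emph{additive} bound $1-c$. The true $\lambda_2(L)$ is already at least about $2/3$: for $\ket{\phi}\propto\Pi^\perp Z_1\ket{\psi}$, the branch $\alpha_1=z$ contributes $1/3$ and the other two branches contribute about $\tfrac{2}{3}\cdot\tfrac{1}{2}$. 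So no constant multiplicative loss is affordable. You must keep the exact $1/p_{x|\alpha}$ normalization, which is what forces $\bra{\psi}L\ket{\psi}=1$, and that is where the substance of the cited theorem lies.

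\emph{The fallback tools do not rescue it.} Your moment-method variant has the same defect, because $\Pi^\perp(L-L_0)\Pi^\perp$ is itself of order one. The concentration step is not automatic either. The naive Lipschitz constant of $\psi\mapsto\Pi^\perp L\Pi^\perp$ scales like $\max_{x,\alpha}p_{x|\alpha}^{-1/2}\sim 2^{n/2}$, which cancels the dimensional gain in L\'evy's lemma.

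So your direct route is a heuristic for why the gap should be a constant (your $2/3$ estimate is about right), not a proof.
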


\subsection{The two-bases protocol}\label{app:haar2}
Built from another recent result~\cite{coladangelo2026power},
using slightly simpler single-qubit measurement on $n-r$ qubits with $r\geq \log n$, one can perform metrology on Haar-random states with $\Delta\approx 2\log n$ by using single-qubit measurements, and $\Delta\approx 2$ by using $\mathcal{O}(\log n)$-qubit measurements. 


The protocol works by measuring the first $n-\lceil \gamma\log n\rceil$ qubits in either $x$ or $z$ basis, and measure the classical shadow on the remaining $\lceil \gamma\log n\rceil$ qubits, for any $\gamma>1$.
\begin{theorem}
    [Quantum metrology for Haar random states with two-bases measurement]
    \label{thm:metrology3}
    Let $\mathrm{Cl}(n)$ be the $n$-qubit Clifford unitary group. Let $\gamma>1$ be a constant, and denote $n_0= \lceil \gamma\log n\rceil$.
    Choose the randomized measurement protocol to be
    \begin{gather}
        \calM=\set{\calM_{U,\alpha},\frac{1}{2}p(U)}_{U\in\mathrm{Cl}(n),\alpha=x,y},\notag\\
        \calM_{U,\alpha}=\Big\{
        \ketbrat{x^{\alpha}_1}\otimes\cdots\otimes\ketbrat{x^{\alpha}_{n-n_0}}\otimes 
        \left(U \left(\otimes_{j=n-n_0}^n\ketbrat{z_j}\right)U^\dagger\right)
        \Big\}_{x\in\set{0,1}^{n-n_0},z\in\set{0,1}^{n_0}},
    \end{gather}
    where $\ketbrat{x^\alpha_i}$ is the projector of $i$-th qubits to either $x$ ($\alpha=x$) or $z$ ($\alpha=z$) basis. $p(U)$ is the uniform distribution over $\mathrm{Cl}(n)$. When the target state $\ket{\psi(\theta)}$ is chosen from Haar random states, this measurement protocol achieves CFI
    \begin{equation}
        I^{-1}(\calM,\ket{\psi(\theta)})\preceq (8+o(1))J^{-1}(\psi(\theta))
    \end{equation}
    with probability at least $1-\calO(2^{-n})$ for any $\gamma$.
\end{theorem}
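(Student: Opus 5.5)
The plan is to follow the same template as Theorems~\ref{thm:metrology1}, \ref{thm:metrology2} and \ref{thm:metrology4}. I will cast the two-bases measurement as a CFE certification protocol in the sense of Definition~\ref{def:cfe-certification}, with $r=n_0=\lceil\gamma\log n\rceil$ unmeasured qubits. The randomly chosen measurement is indexed by the basis label $\alpha\in\{x,z\}$, each taken with probability $1/2$. The Kraus operators are product projectors $\ketbrat{x^\alpha_1}\otimes\cdots\otimes\ketbrat{x^\alpha_{n-n_0}}\otimes\id_{n_0}$. The random Clifford $U$ acts only on the last $n_0$ qubits and plays the role of the classical-shadow stage, exactly as in $\calM'$ of Lemma~\ref{lem:main}. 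So I read $U$ in the statement as an element of $\mathrm{Cl}(n_0)$ and the second basis as $z$.

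Granting a certification gap $\Delta$ for this CFE protocol, Lemma~\ref{lem:main} gives
\begin{equation*}
I^{-1}(\calM,\ket{\psi(\theta)})\preceq 4\cdot\frac{2^{n_0}+1}{2^{n_0}+2}\,\Delta\, J^{-1}(\ket{\psi(\theta)}),
\end{equation*}
and the prefactor tends to $1$. The theorem therefore reduces to showing that, for Haar-random $\ket{\psi(\theta)}$ with probability $1-\calO(2^{-n})$, the soundness condition holds with $\Delta=2+o(1)$:
\begin{equation*}
\Exp{\alpha,x}{1-\bra{\psi_{x|\alpha}}\rho_{x|\alpha}\ket{\psi_{x|\alpha}}}\geq\frac{1-\bra{\psi}\rho\ket{\psi}}{2+o(1)}.
\end{equation*}
For the metrology application it is enough to establish this for pure lab states $\rho=\ketbrat{\phi}$ close to $\ket{\psi}$. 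Only the quadratic (Hessian) expansion enters the proof of Lemma~\ref{lem:main}. Equivalently, it suffices to show $J_M\succeq J/(2+o(1))$ for tangent vectors $\ket{\partial\psi}\perp\ket{\psi}$.

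For the gap itself I would rely on the two-bases result of Ref.~\cite{coladangelo2026power}, rephrased in the same way the randomized-Pauli lemma was rephrased from Ref.~\cite{du2025certifying}. If that result must be rederived, I would proceed as follows. Write $1-\mathbb{E}[\text{conditional fidelity}]=1-\Tr(\rho L)$, where
\begin{equation*}
L=\tfrac12\sum_{\alpha}\sum_x\ketbrat{x^\alpha}\otimes\ketbrat{\psi_{x|\alpha}},
\end{equation*}
in analogy with Lemma~\ref{lem:L}. Then $\ket{\psi}$ is an eigenvector of $L$ with eigenvalue $1$, and the same spectral argument as in the proof of Theorem~\ref{thm:metrology1} gives $\Delta=1/(1-\lambda_2(L))$. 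Write $L=\tfrac12(L_x+L_z)$. Each $L_\alpha$ is a projector of rank $2^{n-n_0}$ containing $\ket{\psi}$. The second eigenvalue of their average is $\tfrac12(1+\cos\vartheta)$, where $\vartheta$ is the smallest principal angle between the two ranges restricted to $\ket{\psi}^\perp$. A gap of $2+o(1)$ therefore amounts to showing that these two subspaces are nearly orthogonal away from $\ket{\psi}$, i.e. $\norm{P_x P_z-\ketbrat{\psi}}_{\infty}=o(1)$.

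This operator-norm bound is the main obstacle. I would expand
\begin{equation*}
P_xP_z=\sum_{x,x'}\bigl(\ketbra{x^{x}}{x^{x}}x'^{z}\rangle\bra{x'^{z}}\bigr)\otimes\ketbra{\psi_{x|x}}{\psi_{x'|z}}.
\end{equation*}
The overlaps between $x$- and $z$-basis states on the first $n-n_0$ qubits are $2^{-(n-n_0)/2}$ in modulus, and the relevant $n_0$-qubit inner products are random for Haar $\ket{\psi}$. I would first bound the Frobenius or Schatten-$4$ norm of $P_xP_z-\ketbrat{\psi}$ in expectation using Weingarten calculus. That norm should scale like $2^{n-n_0}\cdot 2^{-n_0}$ times low-order terms, which is polynomially small when $\gamma>1$; this is where the condition $\gamma>1$ is used. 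I would then apply Lévy concentration on the sphere to reach failure probability $\calO(2^{-n})$. If the Frobenius bound is too lossy, the fallback is a trace-moment method on $\Tr[(P_xP_zP_x-\ketbrat{\psi})^k]$ with $k\sim\log n$.
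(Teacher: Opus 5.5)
Your main line is exactly the paper's proof: you cast the two-bases scheme as a CFE protocol with $r=n_0$ (reading $U\in\mathrm{Cl}(n_0)$ and $\alpha\in\{x,z\}$, as intended), apply Lemma~\ref{lem:main}, and import $\Delta=2+o(1)$ from Theorem~5 of Ref.~\cite{coladangelo2026power}, giving $4\cdot\frac{2^{n_0}+1}{2^{n_0}+2}\,(2+o(1))=8+o(1)$. Your optional self-contained rederivation would not work as sketched, however: the spectral reduction to $\norm{P_xP_z-\ketbrat{\psi}}_\infty=o(1)$ is right, but for Haar-random $\ket{\psi}$ one has $\norm{P_xP_z-\ketbrat{\psi}}_F^2=\Tr(P_xP_z)-1\approx 2^{n-2n_0}$, which is exponentially large (it sums about $2^{n-n_0}$ squared singular values of size about $2^{-n_0}$), so no fixed Schatten norm helps; moments of order $k\sim\log n$ likewise lose a rank factor $2^{(n-n_0)/2k}$, so you would need $k\gtrsim n/n_0$, or a matrix-concentration or net argument paying only a $\log(\text{dimension})=\Theta(n)$ factor, and comparing that factor with $2^{-n_0}=n^{-\gamma}$ is where $\gamma>1$ naturally enters.
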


Note that there are two differences from previous protocols. First, randomly chosen unmeasured qubit is not required. We can fix the unmeasured qubits to be the last $\lceil \gamma\log n\rceil$ qubits. Second, instead of measuring a fix local basis, this Theorem~\ref{thm:metrology3} requires randomized measurement on either $x$ or $z$ basis.

Theorem~\ref{thm:metrology3} can be proved directly from Lemma~\ref{lem:main} and the following lemma (Theorem~5 of \cite{coladangelo2026power}),
\begin{lemma}
    [Soundness of $\calO(\log n)$-qubit certification protocol, rephrased Theorem~5 in~\cite{coladangelo2026power}]
    Consider the following measurement protocol,
    \begin{gather}
        \calM=\set{\calM_{\alpha},\frac{1}{2}}_{\alpha=x,y},\quad
        \calM_{i,\alpha}=\Big\{
        \ketbrat{x^{\alpha}_1}\otimes\cdots\otimes\ketbrat{x^{\alpha}_{n-n_0}}\otimes 
        \id_{n-n_0}
        \Big\}_{x\in\set{0,1}^{n-n_0}}.
    \end{gather}
    The CFE certification protocol using $\calM$ for target Haar-random states $\ket{\psi}$ satisfies the following soundness condition:
    \begin{equation}
        \Pr(\failed)\geq \frac{1-\bra{\psi}\rho\ket{\psi}}{2+o(1)}
    \end{equation}
    for any lab state $\rho$ with probability at least $1-\calO(2^{-n})$.
\end{lemma}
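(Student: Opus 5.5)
The statement is the last lemma: the soundness bound for the two-basis $\calO(\log n)$-qubit CFE protocol on Haar-random targets, $\Pr(\failed)\ge (1-\bra{\psi}\rho\ket{\psi})/(2+o(1))$. The plan is to put the failure probability in operator form, as in Lemma~\ref{lem:L}, and then reduce soundness to a spectral-gap statement about one operator. Write
\begin{equation*}
\Pr(\failed)=1-\trace{}{\rho L},\qquad L=\frac12\sum_{\alpha\in\{x,z\}}\sum_{x}\ketbrat{x^\alpha}\otimes\ketbrat{\psi_{x|\alpha}} .
\end{equation*}
Here $\ket{\psi_{x|\alpha}}$ is the normalized post-selected $n_0$-qubit state. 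Each summand of $L$ is a projector, so $0\preceq L\preceq\id$. Moreover $L\ket{\psi}=\ket{\psi}$, because $\ketbrat{x^\alpha}\otimes\ketbrat{\psi_{x|\alpha}}$ acting on $\ket{\psi}$ returns the component $(\ketbrat{x^\alpha}\otimes\id)\ket{\psi}$. Repeating the eigen-decomposition argument from the proof of Theorem~\ref{thm:metrology1}, it is enough to show that the second-largest eigenvalue satisfies $\lambda_2(L)\le \tfrac12+o(1)$ with probability $1-\calO(2^{-n})$ over Haar $\ket{\psi}$.

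Write $L=\tfrac12(A_x+A_z)$, where $A_\alpha$ is the projector onto $\mathrm{span}\{\ket{x^\alpha}\otimes\ket{\psi_{x|\alpha}}\}_x$. Each $A_\alpha$ projects onto a $2^{n-n_0}$-dimensional subspace $\mathcal{S}_\alpha$, and both subspaces contain $\ket{\psi}$. For two projectors, the spectrum of $\tfrac12(A_x+A_z)$ on $\ket{\psi}^\perp$ is controlled by the principal angles between $\mathcal{S}_x\cap\psi^\perp$ and $\mathcal{S}_z\cap\psi^\perp$. The needed fact is
\begin{equation*}
\lambda_2(L)\le\tfrac12\bigl(1+\norm{A_xA_z-\ketbrat{\psi}}\bigr).
\end{equation*}
So the goal becomes showing $\norm{A_xA_z-\ketbrat{\psi}}=o(1)$. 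The Gram entries are
\begin{equation*}
\bra{x^x,\psi_{x|x}}\,y^z,\psi_{y|z}\rangle=\braket{x^x}{y^z}\braket{\psi_{x|x}}{\psi_{y|z}},\qquad \abs{\braket{x^x}{y^z}}=2^{-(n-n_0)/2}.
\end{equation*}
For Haar $\ket{\psi}$, the conditional states $\ket{\psi_{x|x}}$ and $\ket{\psi_{y|z}}$ are nearly independent random vectors in $\mathbb{C}^{2^{n_0}}$. Their overlaps therefore have size about $2^{-n_0/2}$, apart from the structured rank-one piece coming from $\ket{\psi}$ itself.

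The main obstacle is the operator-norm concentration of the random matrix $M_{xy}=\braket{x^x}{y^z}\braket{\psi_{x|x}}{\psi_{y|z}}$ after the rank-one $\ket{\psi}$ direction is projected out. I would first try a moment (trace) method, bounding $\bbE\,\trace{}{(MM^\dagger)^k}$ with $k\sim n$. Haar moments can be evaluated through Weingarten calculus, approximating $\ket{\psi}$ by a Gaussian vector and correcting for normalization. The size of $M$ is $2^{n-n_0}$, and typical entries carry weight about $2^{-n_0/2}$ relative to the Hadamard-type structure, which suggests a norm of about $\sqrt{2^{n-n_0}\cdot 2^{-n}}\cdot$poly$(n)$-ish factors $\approx n^{-\gamma/2}$poly. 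With $\gamma>1$ this should come out $o(1)$. A Markov bound on the $k$-th moment would then give failure probability $\calO(2^{-n})$. Since this lemma is a rephrasing of Theorem~5 of~\cite{coladangelo2026power}, I would cite that result for this concentration step rather than redo it in full.
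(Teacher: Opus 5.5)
Your proposal is correct and ends up resting on the same ingredient as the paper, which gives no argument beyond invoking Theorem~5 of \cite{coladangelo2026power}; you defer the concentration step to exactly that theorem. Your operator reformulation is sound and in fact exact: applying $\norm{P+Q}=1+\norm{PQ}$ to the restrictions of $A_x,A_z$ to $\ket{\psi}^\perp$ shows that the $2+o(1)$ soundness is equivalent to $\norm{A_xA_z-\ketbrat{\psi}}=o(1)$, and that $2$ is the best possible constant for this protocol. However, the moment-method sketch is only heuristic, so the cited theorem carries all the weight, just as in the paper.
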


Combining Theorem~\ref{thm:metrology2} and Theorem~\ref{thm:metrology3}, we can establish a metrology protocol for Haar random states that uses only single-qubit measurements but achieves $\Delta = \calO(\log n)$, presenting an exponential improvement over Theorem~\ref{thm:metrology1} and \ref{thm:metrology2}.
\begin{corollary}\label{coro:log-protocol}
    There exists a single-qubit adaptive randomized measurement protocol $\calM$, such that when $\ket{\psi(\theta)}$ is sampled from Haar random states,
    \begin{equation}
        I^{-1}(\calM,\ket{\psi(\theta)})\preceq (8+o(1))\gamma\log n \cdot J^{-1}(\ket{\psi(\theta)})
    \end{equation}
    with probability at least $1-\calO(2^{-n})$.
\end{corollary}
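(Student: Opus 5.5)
The plan is to build a hybrid CFE protocol: run the two-bases measurement of Theorem~\ref{thm:metrology3} on the first $n-n_0$ qubits, with $n_0=\lceil\gamma\log n\rceil$. On the remaining $n_0$ qubits, replace the $n_0$-qubit random Clifford measurement by the adaptive decision-tree protocol of Theorem~\ref{thm:metrology2}. The result uses only adaptive single-qubit measurements. The certification gap should then be the product of the two gaps, $(2+o(1))\cdot n_0$. Lemma~\ref{lem:main}, applied with a single unmeasured qubit ($r=1$, where random Pauli equals random Clifford), then gives $I^{-1}\preceq 4(2+o(1))\gamma\log n\,J^{-1}$.

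\textbf{Step 1: composition of gaps.} Let $x$ be the outcome of the $x$/$z$-basis measurement on the first $n-n_0$ qubits, and let $\ket{\psi_x},\rho_x$ be the post-selected $n_0$-qubit target and lab states. On each of these, perform the DT-based CFE of Lemma~\ref{lem:ryan-protocol-performance}, with the DT bases $\calT$ built from $\ket{\psi_x}$. This means choosing $i\in[n_0]$ uniformly, measuring the earlier qubits in the computational basis and the later ones in the DT basis, and leaving qubit $i$ for a random Pauli measurement. Write $x'$ for the outcome of this second stage. Lemma~\ref{lem:ryan-protocol-performance}, applied to the pair $(\ket{\psi_x},\rho_x)$, gives
\[
\bbE_{x'}\bigl[1-\bra{\psi_{xx'}}\rho_{xx'}\ket{\psi_{xx'}}\bigr]\ge \frac{1-\bra{\psi_x}\rho_x\ket{\psi_x}}{n_0}.
\]
Averaging over $x$ and using the soundness of the two-bases stage (the rephrased Theorem~5 of~\cite{coladangelo2026power}, valid with probability $1-\calO(2^{-n})$ over Haar $\ket{\psi}$), we get
\[
\Pr(\failed)\ge \frac{1-\bra{\psi}\rho\ket{\psi}}{(2+o(1))n_0}.
\]
Completeness is automatic, since each conditional projection onto the post-selected target succeeds with probability $1$ when $\rho=\ketbrat{\psi}$, and the combined Kraus operators have the product form of Definition~\ref{def:cfe-certification}. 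Hence this is a CFE protocol with $r=1$ and $\Delta=(2+o(1))\gamma\log n$.

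\textbf{Step 2: invoke Lemma~\ref{lem:main}.} This yields a POVM with $I^{-1}\preceq 4\Delta J^{-1}$, which is the claimed bound $(8+o(1))\gamma\log n\,J^{-1}$.

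\textbf{Main obstacle.} The delicate point is that the metrology statement needs the soundness inequality to hold on a whole neighborhood of $\theta$, because the Hessian argument in Lemma~\ref{lem:main} compares $\ket{\psi(\theta^0)}$ against nearby $\ket{\psi(\theta)}$. Since the soundness is stated for \emph{every} lab state $\rho$, this reduces to a high-probability event about the target alone. We therefore only need the DT bases to be constructed from the target at $\theta^0$, which Lemma~\ref{lem:ryan-protocol-performance} allows, because its phase-state requirement concerns only $\ket{\psi_x}$, not $\rho$. I would also check that the Haar-typicality event of~\cite{coladangelo2026power} concerns $\ket{\psi(\theta^0)}$ only, so the failure probability $\calO(2^{-n})$ carries over unchanged.
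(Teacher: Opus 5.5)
Your argument is correct and analyzes the same protocol as the paper, but you decompose it differently.

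\textbf{The paper's route.} The paper never forms a combined certification gap. It works at the Fisher-information level in three steps:
\begin{itemize}
\item it applies Theorem~\ref{thm:metrology2} to each post-selected state, giving $I(\calM_{x},\ket{\psi_{x}})\succeq J(\ket{\psi_{x}})/(4n_0)$;
\item it takes the Hessian step of Lemma~\ref{lem:main} for the two-bases stage alone, giving the averaged-QFI bound $J_M=\sum_x p_x J(\ket{\psi_x})\succeq J/(2+o(1))$;
\item it combines these through the CFI chain rule $I(\calM)=I(\{p_x\})+\sum_x p_x I(\calM_x,\ket{\psi_x})$, dropping the first term.
\end{itemize}

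\textbf{Your route.} You show that nesting CFE protocols multiplies their gaps. The combined adaptive measurement is therefore itself a CFE protocol with $r=1$ and $\Delta=(2+o(1))n_0$, and you invoke Lemma~\ref{lem:main} once.

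\textbf{What each buys.} Your route gives an explicit locally unbiased (weighted-SLD shadow) estimator attaining the bound. As a by-product it also gives a single-qubit adaptive certification protocol for Haar-random states with gap $(2+o(1))\lceil\gamma\log n\rceil$. The price is some extra bookkeeping, which you partly flag:
\begin{itemize}
\item The outcome-dependent DT bases must still give Kraus operators of the form in Definition~\ref{def:cfe-certification}, built from $\ket{\psi(\theta^0)}$ alone.
\item Lemma~\ref{lem:ryan-protocol-performance} is proved in the paper only for pure lab states, so you need it to apply to $\rho_x$. This holds: $\rho_x$ is pure whenever $\rho$ is, and the soundness inequality is linear in $\rho$ anyway.
\end{itemize}
The paper's Fisher-level composition is more modular. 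Any second-stage metrology scheme with a guarantee $I\succeq J/c$ on the post-selected states could be substituted, whether CFE-based or not, and no combined gap ever needs to be verified.

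One small point: your completeness remark covers only $\rho=\ketbrat{\psi}$. This does not matter, because Lemma~\ref{lem:main} uses only soundness and the vanishing of the failure probability at $\theta=\theta^0$.
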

\begin{proof}
    This is achieved by a protocol with two steps:
    \begin{itemize}
        \item First, follow Theorem~\ref{thm:metrology3}, randomly measure the first $n-\lceil \gamma\log n\rceil$ qubits on $x$ or $z$ basis randomly, and obtain a post-selected state $\ket{\psi_{x,z}}$.
        \item Then, applying the POVM constructed in Theorem~\ref{thm:metrology2}. Replace the base state by $\ket{\psi_{x,z}}$.
    \end{itemize}
    By Theorem~\ref{thm:metrology2}, the measurment protocol of the second step yields a CFI
    \begin{equation}
        I^{-1}(\calM_{x,z},\ket{\psi_{x,z}})\preceq 4\gamma \log n \cdot J^{-1}(\ket{\psi_{x,z}}).
    \end{equation}
    By Theorem~\ref{thm:metrology3} and the proof of Lemma~\ref{lem:main}, the first step yields an averaged QFI that satisfies
    \begin{equation}
        4\gamma \log n\cdot \sum_{x,z}p_{x,z} I(\calM_{x,z},\ket{\psi_{x,z}})\succeq J_M=\sum_{x,z}p_{x,z}J(\ket{\psi_{x,z}})\succeq \frac{1}{2+o(1)}J(\ket{\psi(\theta)}).
    \end{equation}
    As a result,
    \begin{align}
        I(\calM,\ket{\psi(\theta)})=I(p_{x,z})+\sum_{x,z}p_{x,z} I(\calM_{x,z},\ket{\psi_{x,z}})
        \succeq \frac{1}{(8+o(1))\gamma \log n}J(\ket{\psi(\theta)}).
    \end{align}
    This proves the result.
\end{proof}

\end{document}